\newtheorem{proposition}{Proposition}
\newtheorem{lemma}{Lemma}
\newtheorem{theorem}{Theorem}
\definecolor{darkblue}{rgb}{0.05,0.25,0.65}
\definecolor{darkgreen}{RGB}{20,140,10}
\definecolor{lightgray}{rgb}{0.9,0.9,0.9}
\definecolor{darkorange}{RGB}{200,100,5}
\definecolor{darkyellow}{rgb}{.91,.91,0}
\definecolor{lightolive}{RGB}{225, 220, 185}
\newcommand{\shape}{%
  \hspace{.7pt}%
  \raisebox{0.8pt}{\rm\normalfont\textesh}%
  \hspace{1pt}%
}
\newcommand{\defneq}{\equiv}
\newcommand{\weakHomotopyEquivalence}{\sim}
\newcommand{\hotype}[1]{\mathcal{#1}}
\newcommand{\plus}{\hspace{.8pt}{\adjustbox{scale={.5}{.77}}{$\sqcup$} \{\infty\}}}
\newcommand{\cpt}{\hspace{.8pt}{\adjustbox{scale={.5}{.77}}{$\cup$} \{\infty\}}}
\newcommand{\grayunderbrace}[2]{\mathcolor{gray}{\underbrace{\mathcolor{black}{#1}}}_{\mathcolor{gray}{#2}}}
\newcommand{\grayoverbrace}[2]{\mathcolor{gray}{\overbrace{\mathcolor{black}{#1}}}^{\mathcolor{gray}{#2}}}
\newcommand{\oppositeinterval}[3]{
\draw[gray, line width=3.3]
  (#1+.2,#2) -- (#1+#3-.2,#2);
\draw[draw=gray,line width=1.3,fill=black]
  (#1,#2) circle (.2);
\draw[draw=gray,line width=1.3,fill opacity=.6,fill=white]
  (#1+#3,#2) circle (.2);
}
\newcommand{\openinterval}[3]{
\draw[gray, line width=3.3]
  (#1+.2,#2) -- (#1+#3-.2,#2);
\draw[draw=gray,line width=1.3,fill opacity=.6,fill=white]
  (#1+#3,#2) circle (.2);
\draw[draw=gray,line width=1.3,fill opacity=.6,fill=white]
  (#1,#2) circle (.2);
}
\newcommand{\closedinterval}[3]{
\draw[gray, line width=3.3]
  (#1+.2,#2) -- (#1+#3-.2,#2);
\draw[draw=gray,line width=1.3,fill=black]
  (#1+#3,#2) circle (.2);
\draw[draw=gray,line width=1.3,fill=black]
  (#1,#2) circle (.2);
}
\newcommand{\halfcircle}{
\begin{scope}
\clip
  (-4,0) rectangle (4.1,-4.1);
\draw[
  gray,
  line width=30pt,
  draw opacity=.4
]
  (0,0) circle (3);
\end{scope}

\draw[draw=gray,line width=1.3,fill=black]
  (0.1,-3.5) circle (.2);
\draw[draw=gray,line width=1.3,fill opacity=.6,fill=white]
  (-0.1,-3.5) circle (.2);

\draw[gray, line width=3.3]
  (-1.7+.2,-3.1) -- (1.7-.2,-3.1);
\draw[draw=gray,line width=1.3,fill=black]
  (1.7,-3.1) circle (.2);
\draw[draw=gray,line width=1.3,fill opacity=.6,fill=white]
  (-1.7,-3.1) circle (.2);

\closedinterval{.1}{-2.5}{2.35};
\openinterval{-2.5}{-2.5}{2.4};

\closedinterval{1.7}{-1.83}{1.34};
\openinterval{-3}{-1.83}{1.34};

\closedinterval{2.2}{-1.12}{1.12};
\openinterval{-3.31}{-1.12}{1.12};

\closedinterval{2.45}{-.38}{1.05};
\openinterval{-3.5}{-.38}{1.05};
}
\newcommand{\halfcircleAdjusted}{
\begin{scope}
\clip
  (-4,0) rectangle (4.1,-4.1);
\draw[
  gray,
  line width=30pt,
  draw opacity=.4
]
  (0,0) circle (3);
\end{scope}

\draw[draw=gray,line width=1.3,fill=black]
  (0.1,-3.5) circle (.2);
\draw[draw=gray,line width=1.3,fill opacity=.6,fill=white]
  (-0.1,-3.5) circle (.2);
  
\draw[gray, line width=3.3]
  (-1.8+.2,-3.0) -- 
  (+1.8-.2,-3.0);
\draw[draw=gray,line width=1.3,fill=black]
  (1.8,-3) circle (.2);
\draw[draw=gray,line width=1.3,fill opacity=.6,fill=white]
  (-1.8,-3) circle (.2);

\closedinterval{.1}{-2.5}{2.35};
\openinterval{-2.5}{-2.5}{2.4};

\closedinterval{1.7}{-1.83}{1.34};
\openinterval{-3}{-1.83}{1.34};

\closedinterval{2.2}{-1.12}{1.12};
\openinterval{-3.31}{-1.12}{1.12};

\closedinterval{2.45}{-.38}{1.05};
\openinterval{-3.5}{-.38}{1.05};
}
\newcommand{\halfcircleover}[2]{
\begin{scope}
\clip
  (-4,0) rectangle (4.1,-4.1);
\draw[
  white,
  line width=43pt,
]
  (0,0) circle (3);
\end{scope}
  \halfcircle{#1}{#1}
}
\newcommand{\halfcircleoverAdjusted}{
\begin{scope}
\clip
  (-4,0) rectangle (4.1,-4.1);
\draw[
  white,
  line width=43pt,
]
  (0,0) circle (3);
\end{scope}
  \halfcircleAdjusted
}
\begin{document}

\title
[Renormalization of Wilson Loops via exotic Flux Quantization]
{Renormalization of Chern-Simons Wilson Loops 
via Proper Flux Quantization in Cohomotopy}

\author{Hisham Sati}
\address{Mathematics Program and Center for Quantum and Topological Systems, New York University Abu Dhabi}
\curraddr{}
\email{hsati@nyu.edu}
\thanks{}

\author{Urs Schreiber           }
\address{Mathematics Program and Center for Quantum and Topological Systems, New York University Abu Dhabi}
\curraddr{}
\email{us13@nyu.edu}
\thanks{}

\subjclass[2020]{Primary: 
58J28,  
81T16, 
81T70,
55N20,
55Q55}

\keywords{
Abelian Chern-Simons theory, 
5D Maxwell-Chern-Simons theory, 
Wilson loops, anyons,
renormalization, 
UV-completion, 
flux quantization, 
Cohomotopy}

\date{\today}

\dedicatory{
  \href{https://ncatlab.org/nlab/show/Center+for+Quantum+and+Topological+Systems}{\includegraphics[width=3.1cm]{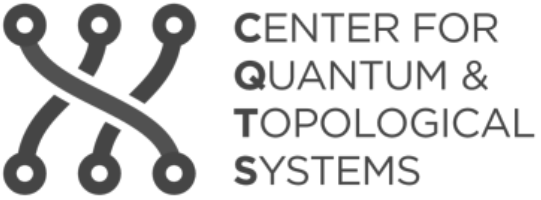}}
}

\begin{abstract}
In the practice of physics model building, the process of renormalization, resummation, and anomaly cancellation is to incrementally repair initially ill-defined Lagrangian quantum field theories by a successive choice of partial fixes. Impressive as this is, one would rather have concisely defined complete theories to begin with, and understand these choices as emergent from fundamental principles. As an instructive example, we recall renormalization choices for Wilson loop observables in abelian Chern-Simons theory. Then we show that these emerge in a novel non-Lagrangian topological completion of 5D Maxwell-Chern-Simons QFT, by means of proper flux quantization in 2-Cohomotopy. This result is a modest cousin, with applications to topologically ordered quantum materials, of the more ambitious completion of 11D supergravity by flux quantization in 4-Cohomotopy (``Hypothesis H'').
\end{abstract}

\maketitle
  
\tableofcontents


\section{Motivation and Introduction}

The mathematical status of Quantum Field Theory (QFT) remains a mixed bag \cite{Cao1999, SS11-Foundations}: While plausible axiomatics exist (in the form of \emph{Algebraic QFT}
and \emph{Functorial QFT}, cf. \cite{Schreiber2009}, although both still requiring and seeing further refinement),
the construction of non-toy examples beyond perturbation theory remains largely elusive and has been pronounced a ``Millennium Problem'' by the Clay Math Institute 
\cite{JaffeWitten2006, Chatterjee2019}. 
Even perturbatively --- where one trades tractability at the staggering cost of non-convergence of observables --- the construction of examples by perturbative \emph{quantization} of classical \emph{Lagrangian} field theories is an incremental process of \emph{anomaly cancellation} followed by \emph{renormalization} followed by \emph{resummation} (cf. \cite{Rejzner2016, Schreiber2017, Duetsch2019}), terminology reflecting the historical exasperation that Lagrangian field quantization requires adding ever more ingredients as one proceeds: the Lagrangian density that starts this process is akin to just the stone in the proverbial \emph{stone soup} \cite{Ashliman-StoneSoup} (cf. top of Fig. \ref{FromTheoryToObservables})!

\begin{figure}[htb]
\caption{
  \label{FromTheoryToObservables}
  While it is common to speak as if Lagrangian densities define quantum field theories, a whole sequence of further input data generally needs to be provided in an incremental and typically \emph{ad hoc} manner. Eventually one would hope to concisely define complete theories right away.
}

\centering

\adjustbox{
  rndfbox=5pt
}{
\begin{tikzcd}[
  column sep=35pt,
  row sep=-10pt
]
  \substack{
    \scalebox{.7}{
      \bf \color{darkblue}
      Lagrangian
    }
    \\
    \scalebox{.7}{
      \bf \color{darkblue}
      density
    }
  }
  \ar[
    rr,
    -Latex,
    dotted,
    "{
      \scalebox{.7}{
        \bf \color{darkgreen}
        anomaly
      }
    }",
    "{
      \scalebox{.7}{
        \bf \color{darkgreen}
        cancellation
      }
    }"{swap}
  ]
  &&
  {}
  \ar[
    rr,
    -Latex,
    dotted,
    "{
      \scalebox{.7}{
        \bf \color{darkgreen}
        renormalization
      }
    }"{pos=.45}
  ]
  &&
  {}
  \ar[
    rr,
    -Latex,
    dotted,
    "{
      \scalebox{.7}{
        \bf \color{darkgreen}
        resummation
      }
    }"
  ]
  &&
  \phantom{  
    \substack{
    \scalebox{.7}{
      \bf \color{darkblue}
      Quantum
    }
    \\
    \scalebox{.7}{
      \bf \color{darkblue}
      observables
    }
  }
  }
  \\
  &&&&&&
  \substack{
    \scalebox{.7}{
      \bf \color{darkblue}
      Quantum
    }
    \\
    \scalebox{.7}{
      \bf \color{darkblue}
      observables
    }
  }
  \\
  \substack{
    \scalebox{.7}{
      \bf \color{darkblue}
      Complete
    }
    \\
    \scalebox{.7}{
      \bf \color{darkblue}
      theory
    }
  }
  \ar[
    rrrrrr,
    |->,
    "{
      \scalebox{.7}{
        \bf \color{darkgreen}
        implication
      }
    }"{description}
  ]
  &&&&&&
  \phantom{
  \substack{
    \scalebox{.7}{
      \bf \color{darkblue}
      Quantum
    }
    \\
    \scalebox{.7}{
      \bf \color{darkblue}
      observables
    }
  }  
  }
\end{tikzcd}
}
  
\end{figure}
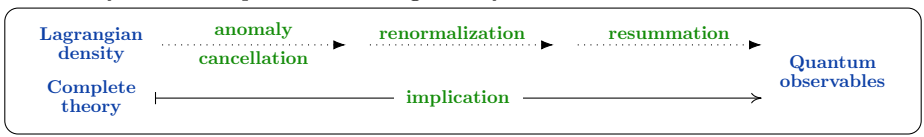

Beyond perturbation theory, even basic definitions (such as of Wilson loop observables, reviewed in \S\ref{TraditionalCSWilsonLoops}), are traditionally tied to the mere heuristics of the \emph{path integral} \cite{nLab:PathIntegral}, which is the enshrined hope that Lagrangian densities define quantum field theories by themselves. 

While it is common to envision that ultimately there may be \emph{complete} quantum theories (``UV-completion'') from which quantum observables would be systematically \emph{derivable} (bottom of Fig. \ref{FromTheoryToObservables}), little technical attention has been devoted to the question of which guise these would concretely take (resulting in discussions like \cite{Crowther2019}).

We highlight that one pronounced way in which Lagrangian densities manifestly fail to encode complete field theories, already classically, is --- in the paramount case of (higher, cf.  \cite{Alfonsi2025HigherGeometry, SatiSchreiberHigherGauge}) \emph{gauge theories} \cite{Henneaux1992} --- their sole reflection of local \emph{gauge potentials} on a single spacetime chart, insensitive to the remaining global topological content of the (higher) gauge field -- cf. Fig. \ref{CechDataForHigherGaugeField}. This global topological field content is instead embodied by the gauge field's \emph{flux quantization law} \cite{SS25-Flux}, traditionally an afterthought to the Lagrangian density imposed in the process of its ``anomaly cancellation''.

\begin{figure}[htb]
\caption{
  \label{CechDataForHigherGaugeField}
  One way in which Lagrangian densities for (higher) gauge fields \cite{Henneaux1992} are manifestly incomplete as an encoding of a field theory is in their sole dependence on local gauge potentials, insensitive to the global topological field content (the \emph{flux quantization law} \cite{SS25-Flux}): The latter is equivalently encoded in {\v C}ech cocycle data (with coefficients in some moduli stack, cf. \cite{SS24-Phase}\cite[\S 3.3]{SS25-Flux}) with respect to an open cover $Y$ of spacetime $X$ (by ``charts''). This involves field data on higher fiber products $Y^{\times_X^n}$,  while the gauge potentials are only the lowest stage data on $Y$ itself.
}
\adjustbox{
  rndfbox=5pt,
  scale=.96
}{
$
\begin{array}{l}
\\[-6pt]
\hspace{-5pt}
\begin{tikzcd}[
  column sep=9pt,
  row sep=-6pt,
  ampersand replacement=\&
]
  \mathclap{\adjustbox{
    rotate=25,
    scale=.7,
    color=olive
  }{
    \rlap{spacetime}
  }}
  \&[10pt]
  \mathclap{\adjustbox{
    rotate=25,
    scale=.7,
    color=darkblue
  }{
    \rlap{cover}
  }}
  \&\&[-5pt]
  \mathclap{\adjustbox{
    rotate=25,
    scale=.7,
    color=darkgreen
  }{
    \rlap{
      double...
    }
  }}
  \&\&[+5pt]
  \mathclap{\adjustbox{
    rotate=25,
    scale=.7,
    color=darkorange
  }{
    \rlap{
      triple...
    }
  }}
  \&\&[+20pt]
  \mathclap{\adjustbox{
    rotate=0,
    scale=.7,
    color=darkorange,
    raise=4pt
  }{
    and higher intersections
  }}
  \&\&[-10pt]
  \mathclap{\adjustbox{
    rotate=0,
    scale=.7,
    raise=4pt
  }{
    ...
  }}
  \\
  X
  \ar[
    r,
    <<-,
    "{ p }"{pos=.7}
  ]
  \&
  Y
  \ar[
    rr,
    <-,
    shift left=3pt,
    shorten=-1pt,
    "{ p_0 }"{pos=.65}
  ]
  \ar[
    rr,
    <-,
    shift right=3pt,
    shorten=-1pt,
    "{ p_1 }"{swap, pos=.65}
  ]
  \&\&
  Y \times_X Y
  \ar[
    rr,
    <-,
    shift left=6pt,
    shorten=-1pt,
    "{ p_0 }"{pos=.65}
  ]
  \ar[
    rr,
    <-,
    shorten=-1pt,
    "{ p_1 }"{description,pos=.65}
  ]
  \ar[
    rr,
    <-,
    shift right=6pt,
    shorten=-1pt,
    "{ p_2 }"{swap, pos=.65}
  ]
  \&\&
  Y 
    \times_X 
  Y
    \times_X 
  Y
  \ar[
    rr,
    <-,
    shift left=9pt,
    shorten=-1pt,
    "{ p_0 }"{pos=.65}
  ]
  \ar[
    rr,
    <-,
    shift left=3.5pt,
    shorten=-1pt,
    "{ p_1 }"{description, pos=.65}
  ]
  \ar[
    rr,
    <-,
    shift right=3.5pt,
    shorten=-1pt,
    "{ p_2 }"{description,pos=.65}
  ]
  \ar[
    rr,
    <-,
    shift right=9pt,
    shorten=-1pt,
    "{ p_3 }"{swap, pos=.65}
  ]
  \&\&
  Y 
    \times_X 
  Y
    \times_X 
  Y
    \times_X 
  Y
  \&\&
  \cdots
\end{tikzcd}
\\
\hspace{-9pt}
\begin{tikzpicture} 
\node at (0,0) {
$
  \underset{
    \mathclap{
      \;\;\;\;
      \adjustbox{
        scale=.7,
        rotate=-60
      }{
        \color{olive}
        \bf
        \rlap{
          \hspace{-10pt}
          gauge field
        }
      }
    }  
  }{
    \widehat{A}
  }
  \;=\,
  \left(
  \hspace{-3pt}
  \def\arraycolsep{0pt}
  \begin{array}{ccccc}
  \underset{
    \mathclap{
      \;\;\;\;
      \adjustbox{
        scale=.7,
        rotate=-50
      }{
        \color{darkblue}
        \bf
        \rlap{
          \hspace{-20pt}
          gauge potentials
        }
      }
    }
  }{
  \begin{tikzcd}
    A
    \,,
  \end{tikzcd}
  }
  &
  \begin{tikzcd}[
    column sep=10pt
  ]
    p_0^\ast A
    \ar[
      r,
      "{ 
  \underset{
    \mathclap{
      \adjustbox{
        scale=.7,
        rotate=-50
      }{
        \color{darkgreen}
        \bf
        \rlap{
          \hspace{-0pt}
          transition data
        }
      }
    }
  }{
        g 
  }
      }"{yshift=-10pt}
    ]
    &
    p_1^\ast A
    \,,
  \end{tikzcd}
  &
  \begin{tikzcd}[
    row sep=30pt,
    column sep=10pt
  ]
    & 
    p_1^\ast A
    \ar[
      dr,
      "{
        p_{12}^\ast g
      }"{sloped}
    ]
    \ar[
      d,
      Rightarrow,
      shorten=5pt,
      "{
        \eta
      }"
    ]
    \\
    p_0^\ast A
    \ar[
      ur,
      "{
        p_{01}^\ast g
      }"{sloped}
    ]
    \ar[
      rr,
      "{ 
        p_{02}^\ast g 
      }"{swap},
      "{
        \smash{
          \mathrlap{
           \hspace{-15pt}
           \scalebox{.7}{
             \color{darkorange}
             \bf
             higher transition data...
           }
          }
        }
      }"{swap, yshift=-18pt}
    ]
    &{}&
    p_2^\ast A
    \,,
  \end{tikzcd}  
  &
  \begin{tikzcd}
    &[-10pt]&
    &[-50pt] 
    |[alias=two]|
    p_1^\ast A
    \ar[
      ddr,
      "{
        p_{12}^\ast g
      }"{sloped},
      "{\ }"{name=twofour, swap}
    ]
    &[-5pt]
    \\
    \\
    |[alias=one]|
    p_0^\ast A
    \ar[
      rrrr,
      "{\ }"{name=onefour}
    ]
    \ar[
      drr,
      "{ p_{02}^\ast g }"{sloped, swap},
      "{\  }"{name=onethree}
    ]
    \ar[
      uurrr,
      "{
        p_{01}^\ast g
      }"{sloped}
    ]
    &&&&
    |[alias=four]|
    p_3^\ast A
    \mathrlap{\,,}
    \\[-3pt]
    &&
    |[alias=three]|
    p_2^\ast A
    \ar[
      urr,
      "{
        p_{23}^\ast g
      }"{sloped, swap}
      "{\ }"{name=one}
    ]
    \ar[
      from=two, 
      to=onefour,
      Rightarrow, 
      crossing over,
      shorten=5pt
    ]
    \ar[
      from=two, 
      to=onethree,
      Rightarrow, 
      shorten <=10pt,
      shorten >=1pt,
      crossing over,
      "{
        p_{012}^\ast \eta
      }"{sloped, yshift=2pt, pos=.55}
    ]
    \ar[
      from=three, 
      to=onefour,
      Rightarrow, 
      crossing over,
      shorten=5pt
    ]
    \ar[
      from=two,
      to=three,
      crossing over,
      "{\ }"{name=twothree}
    ]
    \ar[
      from=three, 
      to=twofour,
      shorten <= 10pt,
      shorten >= 2pt,
      Rightarrow, 
      crossing over,
    ]
  \end{tikzcd}
  &
  \cdots
  \end{array}
  \right)
  \hspace{-10pt}
$};

\begin{scope}[
  shift={(1.15,0)}
]

\node[
  rotate=-5,
  scale=.7,
  black!80
] 
  at (-6.1,1.3)
  {
    \adjustbox{
      margin=2pt,
      bgcolor=white
    }{
    \def\arraystretch{.9}%
    \def\tabcolsep{0pt}%
    \color{darkblue}
    \begin{tabular}{c}
      Lagrangians see
      \\
      {\it only} this piece
    \end{tabular}
    }
  };

 \draw[
   -Latex,
   black!40
 ]
   (-6.1,1) .. controls
   (-6.4,.4) and
   (-5.7,1.1) ..
   (-6,.3);
\end{scope}

\end{tikzpicture}
\end{array}
\hspace{-4pt}
$
}

\end{figure}
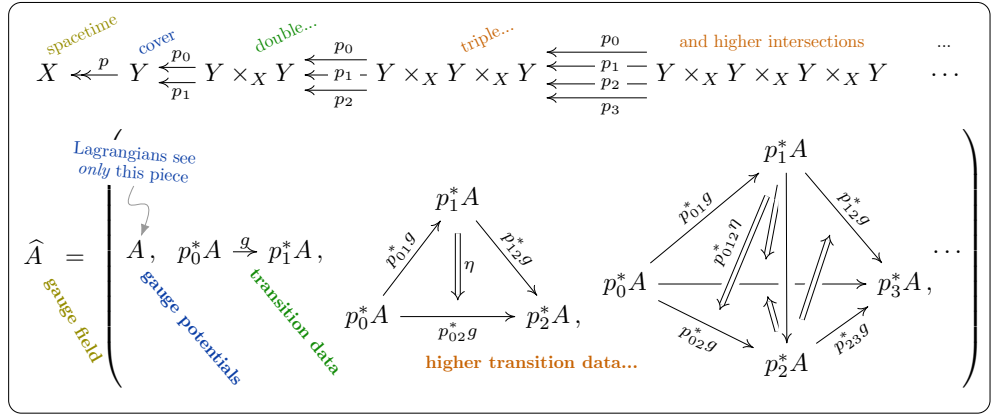

Of course, \emph{ordinary Dirac flux quantization} is a time-honored topic in itself \cite{Alvarez1985}\cite[\S 16.4e]{Frankel2011}\cite[Ex. 3.9-10]{SS25-Flux}. However, even aside from the issue of its 
\mbox{(non-)}reflection in Lagrangian densities, it never accounts for the generic non-linearity of (higher, Maxwell-type) gauge fields, whose Gauss laws are polynomials $P^i\Big(\vec F :=\big\{F^{(j)}_{\mathrm{deg}(g)} \big\}_{j \in I} \Big)$ in a set $I$ of flux species:
\begin{equation}
  \label{HigherMaxwellTypeEquationsOfMotion}
  \forall_{i \in I}
  \;\;\;\;
  \mathrm{d}
  \,
  F^{(i)}_{\mathrm{deg}_i}
  =
  P^i\big(
    \vec F
  \,\big)
  \,,
  \;\;\;
  \star
  \,
  F^{(i)}_{\mathrm{deg}(i)} 
    = 
  \textstyle{\sum_j}
  \mu^i_j
  \,
  F^{(j)}_{\mathrm{deg}_j}
  \,,
\end{equation}
such as in the example of 5D Maxwell-Chern-Simons theory (\S\ref{5DMaxwellChernSimonsTheory}). The \emph{proper} quantization of such non-linear Gauss laws must be \cite[\S 3]{SS25-Flux} in ``extraordinary'' \emph{non-abelian} cohomology \cite[\S 1]{SS25-TEC}\cite[\S 2]{FSS23-Char}, a basic fact that has received little to no traditional attention.

Turning this situation right-side-up means to promote \cite{SS24-Phase} \emph{proper flux quantization} (of non-linear Gauss-laws, in non-abelian cohomology) from an afterthought to the starting point of the construction of (higher) quantum gauge field theories, and to observe \cite{SS24-Obs} that this \emph{global completion} of the gauge field content actually determines the topological quantum observables.

Here we illustrate this novel approach --- to global topological completion of (higher) gauge quantum field theory --- in the instructive example of Chern-Simons Wilson loop observables, along the following outline (cf. Fig. \ref{OutlineDiagram}):

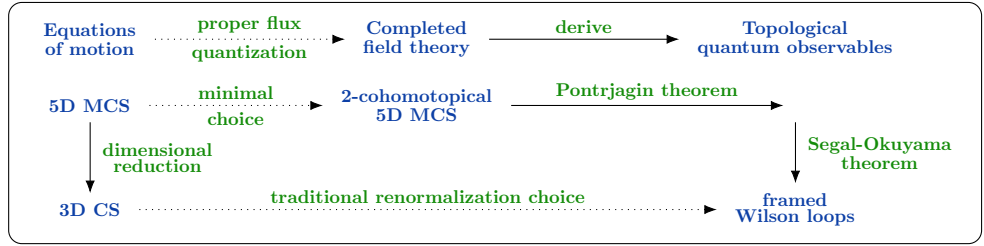
\begin{figure}[htb]
\caption{
  \label{OutlineDiagram}
  Flow chart of the logic of our discussion.
}

\centering

\adjustbox{
  rndfbox=5pt
}{
\begin{tikzcd}[
  row sep=5pt, 
  column sep=32pt
]
  \substack{  
    \scalebox{.7}{
      \bf \color{darkblue}
      Equations
    }
    \\
    \scalebox{.7}{
      \bf \color{darkblue}      
      of motion
    }
  }
  \ar[
    -Latex,
    rr,
    dotted,
    "{
      \scalebox{.7}{
        \bf \color{darkgreen}
        proper flux
      }
    }",
    "{
      \scalebox{.7}{
        \bf \color{darkgreen}
        quantization
      }
    }"{swap}
  ]
  &&
  \substack{
    \scalebox{.7}{
      \bf \color{darkblue}
      Completed
    }
    \\
    \scalebox{.7}{
      \bf \color{darkblue}
      field theory
    }
  }
  \ar[
    -Latex,
    rr,
    "{
      \scalebox{.7}{
        \bf \color{darkgreen}
        derive
      }
    }"
  ]
  &&
  \substack{
    \scalebox{.7}{
      \bf \color{darkblue}
      Topological
    }
    \\
    \scalebox{.7}{
      \bf \color{darkblue}
      quantum observables
    }
  }
  \\
  \scalebox{.7}{
    \bf \color{darkblue}
    5D MCS
  }
  \ar[
    rr,
    dotted,
    -Latex,
    "{
      \scalebox{.7}{
        \bf \color{darkgreen}
        minimal
      }
    }",
    "{
      \scalebox{.7}{
        \bf \color{darkgreen}
        choice
      }
    }"{swap}
  ]
  \ar[
    -Latex,
    d,
    "{
      \substack{
        \scalebox{.7}{
          \bf \color{darkgreen}
          dimensional
        }
        \\
        \scalebox{.7}{
          \bf \color{darkgreen}
          reduction
        }
      }
    }"
  ]
  &&
  \substack{
    \scalebox{.7}{
      \bf \color{darkblue}
      2-cohomotopical
    }
    \\
    \scalebox{.7}{
      \bf \color{darkblue}
      5D MCS
    }
  }
  \ar[
    rr,
    -Latex,
    "{
      \scalebox{.7}{
        \bf \color{darkgreen}
        Pontrjagin theorem
      }
    }"
  ]
  &&
  {}
  \ar[
    d,
    -Latex,
    "{
      \substack{
        \scalebox{.7}{
          \bf \color{darkgreen}
          Segal-Okuyama
        }
        \\
        \scalebox{.7}{
          \bf \color{darkgreen}
          theorem
        }
      }
    }"
  ]
  \\[15pt]
  \scalebox{.7}{
    \bf \color{darkblue}
    3D CS
  }
  \ar[
    rrrr,
    -Latex,
    dotted,
    "{ 
      \scalebox{.7}{
        \bf \color{darkgreen}
        traditional 
        renormalization choice
      }
    }"
  ]
  && &&
  \substack{
    \scalebox{.7}{
      \bf \color{darkblue}
      framed
    }
    \\
    \scalebox{.7}{
      \bf \color{darkblue}
      Wilson loops
    }
  }
\end{tikzcd}
}

\end{figure}

\begin{itemize}
  \item[\S\ref{TraditionalAbelianChernSimonsTheory}]
  \textbf{Review:} The traditional story of abelian 3D Chern-Simons (CS) Wilson loop quantum observables, highlighting their \emph{ad hoc} renormalization.

  \item[\S\ref{5dMaxwellChernSimonsAndItsDimReduction}] \textbf{Interlude:} Classical 3D Chern-Simons as a constrained dimensional reduction of 5D Maxwell-Chern-Simons theory (MCS).

  \item[\S\ref{CompletionViaProperFluxQuantization}]
  \textbf{Punchline:} After global completion of 5D MCS (and thus of its reduction to 3D CS) by flux quantization in \emph{Cohomotopy}, the topological quantum observables are well-defined at once, and the traditional Wilson loop observables, including their traditional renormalization, emerge as a derivable consequence.
\end{itemize}

Key mathematical results in \S\ref{CompletionViaProperFluxQuantization} are from \cite{SS25-AbelianAnyons, SS25-FQH}, here reconsidered field-theoretically in terms of topological completion of dimensionally reduced 5D Maxwell-Chern-Simons theory.

We originally discovered these results by analyzing the situation on Seifert orbifold singularities \cite{SS25-Seifert, SS25-Srni, SS26-Rickles} 
of a more ambitious program of completing the theory of \emph{M5-branes} by proper flux quantization 
\cite{GSS25-M5, FSS21-Hopf, FSS21-StrStruc}, as well as their ambient 11D supergravity theory \cite{Sati2018, GSS24-SuGra, FSS20-H}. Of course, some completion of 11D supergravity is famously conjectured to be the ultimate (``UV-'')complete fundamental quantum theory (working title ``M-theory'' \cite{Duff1999LaymansGuide, Duff1999World}, cf. \cite{nLab:MTheory, Sati2010}), and the 5D supergravity touched upon in \S\ref{5dMaxwellChernSimonsAndItsDimReduction} is famously a small sibling \cite{Mizoguchi1998}. 
In this way, one may regard our approach as following Atiyah's suggestion \cite[41:54]{Atiyah2000Millennium}
for how to attack the millennium problem.

Our discussion here highlights how the modest but instructive Chern-Simons sector of this ambitious M-theoretic framework illustrates how traditional incremental re-normalization choices may emerge intrinsically when one has a complete quantum theory to begin with.

But such a more complete theoretical understanding also entails more fine-grained predictions of potential relevance in experimental physics --- in our case concerning topological quantum materials, cf. \cite[Fig. D]{SS25-FQH}\cite{SS25-FQAH}:

Namely, abelian Chern-Simons theory serves
\cite{nLab:AbelianChernSimons} as the \emph{effective} (long-range, ``IR'') quantum field theory of magnetic flux penetrating strongly coupled 2D electron gases called \emph{fractional quantum Hall systems} \cite{nLab:FQH}, where the Wilson loops are the \emph{braided worldlines} of flux quanta (associated with vortices in the electron gas known as ``quasi-holes'') on top of a given rational fraction of the number of electrons (cf. Fig. \ref{BraidingOfFQHAnyons}). 

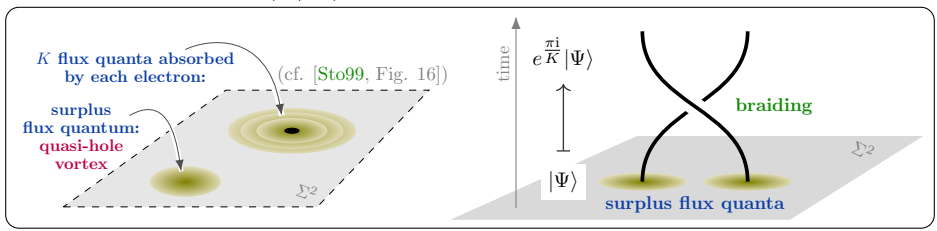
\begin{figure}[htb]
\caption{
  \label{BraidingOfFQHAnyons}
  The Wilson loop observables of abelian Chern-Simons theory describe the braiding phases of anyonic flux quanta associated with \emph{vortices} in fractional quantum Hall systems, where each crossing (of anyon worldlines) contributes a fixed phase factor $\exp(\pi \mathrm{i}/K)$, cf. \S\ref{TheKnotTheoreticExpression}.
}

\centering

\adjustbox{
  rndfbox=5pt
}{
\hspace{-.6cm}
\adjustbox{
  scale=.9
}{
\begin{tikzpicture}[
  baseline=(current bounding box.center),
  scale=.75
]

\node
  at (.3,.55+.8)
  {
    \adjustbox{
      bgcolor=white,
      scale=.7
    }{
      \color{darkblue}
      \bf
      \def\arraystretch{.9}
      \begin{tabular}{c}
        surplus
        \\
        flux quantum:
        \\
        \color{purple}
        quasi-hole
        \\
        \color{purple}
        vortex
      \end{tabular}
    }
  };

\draw[
  dashed,
  fill=lightgray
]
  (0,0)
  -- (5,0)
  -- (7+.3-.1,2+.3)
  -- (2.8+.3+.1,2+.3)
  -- cycle;

\begin{scope}[
  shift={(2.4,.5)}
]
\shadedraw[
  draw opacity=0,
  inner color=olive,
  outer color=lightolive
]
  (0,0) ellipse (.7 and .3);
\end{scope}

\begin{scope}[
  shift={(4.5,1.5)}
]

\begin{scope}[
 scale=1.8
]
\shadedraw[
  draw opacity=0,
  inner color=olive,
  outer color=lightolive
]
  (0,0) ellipse (.7 and .25);
\end{scope}

\begin{scope}[
 scale=1.45
]
\shadedraw[
  draw opacity=0,
  inner color=olive,
  outer color=lightolive
]
  (0,0) ellipse (.7 and .25);
\end{scope}

\shadedraw[
  draw opacity=0,
  inner color=olive,
  outer color=lightolive
]
  (0,0) ellipse (.7 and .25);

\begin{scope}[
  scale=.2
]
\draw[
  fill=black
]
  (0,0) ellipse (.7 and .25);
\end{scope}

\end{scope}

\draw[
  white,
  line width=2
]
  (1.3, 1.8) .. controls 
  (2,2.2) and 
  (2.2,1.5) ..
  (2.32,.7);
\draw[
  -Latex,
  black!70
]
  (1.3, 1.8) .. controls 
  (2,2.2) and 
  (2.2,1.5) ..
  (2.32,.7);

\node
  at (1.3,2.7)
  {
    \adjustbox{
      scale=.7
    }{
      \color{darkblue}
      \bf
      \def\arraystretch{.9}
      \def\tabcolsep{-5pt}
      \begin{tabular}{c}
        $K$ flux quanta
        absorbed
        \\
        by each electron:
      \end{tabular}
    }
  };

\draw[
 line width=2.5pt,
  white
]
  (2.4, 3.1) .. controls 
  (2.8,3.3) and 
  (4,3.5) ..
  (4.3,1.8);

\draw[
  -Latex,
  black!70
]
  (2.4, 3.1) .. controls 
  (2.8,3.3) and 
  (4,3.5) ..
  (4.3,1.8);

\node at 
  (5.9,2.6)
  {
   \scalebox{.8}{
     \color{gray}
     (cf. \cite[Fig. 16]{Stormer99})  
   }
  };

\node[
  gray,
  rotate=-20,
  scale=.73
] 
  at (4.8,+.3) {$\Sigma^2$};

\end{tikzpicture}
}
\hspace{-.6cm}
\adjustbox{}{
  \begin{tikzpicture}[
    baseline=(current bounding box.center),
    xscale=.7
  ]
    \draw[
      gray!30,
      fill=gray!30
    ]
      (-4.6,-1.5) --
      (+1.8,-1.5) --
      (+1.8+3-.5,-.4) --
      (-4.6+3+.5,-.4) -- cycle;

    \begin{scope}[
      shift={(-1,-1)},
      scale=1.2
    ]
    \shadedraw[
      draw opacity=0,
      inner color=olive,
      outer color=lightolive
    ]
      (0,0) ellipse (.7 and .1);
    \end{scope}

    \draw[
     line width=1.4
    ]
      (-1,-1) .. controls
      (-1,0) and
      (+1,0) ..
      (+1,+1);

  \begin{scope}
    \clip 
      (-1.5,-.2) rectangle (+1.5,1);
    \draw[
     line width=7,
     white
    ]
      (+1,-1) .. controls
      (+1,0) and
      (-1,0) ..
      (-1,+1);
  \end{scope}
  
    \begin{scope}[
      shift={(+1,-1)},
      scale=1.2
    ]
    \shadedraw[
      draw opacity=0,
      inner color=olive,
      outer color=lightolive
    ]
      (0,0) ellipse (.7 and .1);
    \end{scope}
    \draw[
     line width=1.4
    ]
      (+1,-1) .. controls
      (+1,0) and
      (-1,0) ..
      (-1,+1);

  \node[
    rotate=-25,
    scale=.7,
    gray
  ]
    at (3.2,-.58) {
      $\Sigma^2$
    };

  \draw[
    -Latex,
    gray
  ]
    (-3.4,-1.35) -- 
    node[
      near end, 
      sloped,
      scale=.7,
      yshift=7pt
      ] {time}
    (-3.4, 1.2);

  \node[
    scale=.7
  ] at 
    (0,-1.3)
   {\bf \color{darkblue} 
   surplus flux quanta};

  \node[
    scale=.7
  ] at 
    (1.5,0)
   {\bf \color{darkgreen} braiding};

  \node[
    fill=white,
    scale=.8
  ] at (-2.5,-1) {$
    \vert \Psi \rangle
  $};

  \node[
    fill=white,
    scale=.8
  ] at (-2.5,.7) {$
    e^{\tfrac{\pi \mathrm{i}}{K}}
    \vert \Psi \rangle
  $};

  \draw[
    |->,
    black!80,
    line width=.5
  ]
    (-2.5,-.6) --
    (-2.5, .3);

  \end{tikzpicture}
  }
  \hspace{-.3cm}
}
  
\end{figure}

Remarkably, these anyonic braiding phases predicted by Chern-Simons Wilson loops have consistently been observed by experimentalists in recent years (starting with \cite{Nakamura2020}, for more recent developments see \cite{Ghosh2025}). Especially in view of the expected role of anyonic quantum materials as future hardware platforms for much anticipated robust (topological) quantum computers \cite{nLab:TopologicalQuantumComputing}, this leads to many practical questions, such as whether there might also be \emph{non-abelian defect} anyons in FQH systems or how fractional quantum \emph{anomalous} Hall systems (FQAH) could exhibit more tractable anyons at room temperature and without magnetic fields. 
The completed reformulation of Chern-Simons theory discussed here makes clear predictions on both these counts, discussed in  \cite{SS25-FQH} and \cite{SS25-FQAH}, respectively.

\medskip

\paragraph{\bf Acknowledgements.}
We thank the organizers of the Oberwolfach Mini-Workshop 2539b ``Renormalization and Randomness'', where this article was first presented (in September 2025). For useful discussion we thank Grigorios Giotopoulos and Konrad Waldorf.
We acknowledge support by {\it Tamkeen UAE} under the 
{\it NYU Abu Dhabi Research Institute grant} {\tt CG008}.

\section{Traditional Abelian Chern-Simons Theory}
\label{TraditionalAbelianChernSimonsTheory}

To set the scene, we briefly review (\S\ref{TraditionalCSWilsonLoops}) the traditional story of  Wilson loop observables in abelian Chern-Simons theory, highlighting the \emph{ad hoc} renormalization choice.

\subsection{The Lagrangian density and the Level}
\label{3DCSLagrangianAndLevel}

Before we really start, for the useful understanding of the Wilson loop observables as reflecting anyon braiding phases (cf. Fig. \ref{BraidingOfFQHAnyons}) it is important to pinpoint a notoriously subtle factor of 2 in the definition of the \emph{level} of abelian Chern-Simons theory \cite[\S 5]{DijkgraafWitten1990}, and so we recall here (field theory experts may want to skip ahead) the standard definition of the abelian Chern-Simons action functional \cite{nLab:AbelianChernSimons}:

\subsubsection{Abelian Chern-Simons theory in general}
The local gauge field content of minimal abelian Chern-Simons theory is a gauge potential 1-form $A$ with field-strength/flux density 2-form  $F_2 := \mathrm{d}A$, and the (local) Lagrangian density is traditionally written as the following multiple of the \emph{Chern-Simons form} $A \wedge \mathrm{d}A$:
\begin{equation}
  \label{TheLagrangianDensity}
  L(A)
  \;\coloneqq\;
  \tfrac{K}{4\pi} 
  \,
  A \wedge F_2
  \,.
\end{equation} 
Here $K \in \mathbb{R}$ is a parameter whose admissible values are to be fixed such that
for any closed oriented manifold $\Sigma^3$ the \emph{exponentiated action functional} 
is well defined:
\begin{equation}
  \label{ExponentiatedActionFunctional}
  \exp\big(
    \mathrm{i}
    S(\Sigma^3, A)
  \big)
  \;:=\;
  \tensor[^{\mbox{``}}]{
    \exp\Big(
      \mathrm{i}
      \tfrac{K}{4\pi}
      \textstyle{\int_{\Sigma^3}} 
      \,
      A \wedge F_2
    \Big)
  }{^{\mbox{''}}}
  \;\in\;
  \mathrm{U}(1)
  \,.
\end{equation}
Of course, as stated, the right hand side of \eqref{ExponentiatedActionFunctional} is generally not defined for any value of $K$, since the gauge potential $A$ is in general not globally defined on $\Sigma^3$ (cf. Fig. \ref{CechDataForHigherGaugeField}), only its flux density $F_2$ is.

To fix this, the traditional flux quantization of the Chern-Simons gauge field is taken to be ordinary Dirac quantization \cite{Alvarez1985}\cite[\S 16.4e]{Frankel2011} in ordinary differential 2-cohomology \cite[Ex. 3.9-10]{SS25-Flux}\cite{nLab:OrdinaryDiffCohomology}.
Via the formulation of \emph{Cheeger-Simons differential characters} \cite{nLab:CheegerSimonsDiffCharacters} this means to extend $\tfrac{1}{2\pi} F_2$ to an \emph{integral form} with integral periods
\footnote{
  This condition $\tfrac{1}{2\pi} F_2 \in \Omega^2_{\mathrm{dR}}(-)_{\mathrm{int}}$ is the normalization choice common in theoretical physics. In the mathematical literature one typically nromalizes such that $F_2$ itself is integral.
}
on
closed oriented 4-manifolds $\Sigma^4$ with boundary $\partial \Sigma^4 \simeq \Sigma^3$ and declare, inspired by Stokes' theorem, that \eqref{ExponentiatedActionFunctional} really means:
\begin{equation}
  \label{ExponentiatedActionFixed}
  \exp\Big(
    \mathrm{i}
    S\big(\Sigma^3, \widehat{A}\big)
  \Big)
  :=
  \exp\Big(
    \mathrm{i}
    \tfrac{K}{4\pi}
    \textstyle{\int_{\Sigma^4}}
    \, 
    F_2 \wedge F_2
  \Big)
  \;=\;
  \exp\Big(
    \pi \mathrm{i} K
    \textstyle{\int_{\Sigma^4}}
    \frac{F_2}{2\pi} 
      \wedge 
    \frac{F_2}{2\pi}
  \Big)
  \mathrlap{\,.}
\end{equation}
Now the right hand side is well-defined for any choice of $\Sigma^4$, but to be well-defined as a functional depending only on $\Sigma^3$ one still needs to require that for \emph{closed} oriented $\widehat{\Sigma}^4$ (arising as the gluing of one choice of $\Sigma^4$ with the orientation reversal of any other, along their common boundary $\Sigma^3$) that
\begin{equation}
  \pi \mathrm{i} K
  \,
  \underset{
    c_1^2(\Sigma^4)
  }{
    \underbrace{
      \textstyle{\int_{\widehat{\Sigma}^4}}
      \frac{F_2}{2\pi} 
        \wedge 
      \frac{F_2}{2\pi}
    }
  }
  \;\in\;
  2\pi\mathrm{i}
  \mathbb{Z}
  \,.
\end{equation}
But here the \emph{Pontrjagin number} over the brace is integral (by integrality of $\tfrac{1}{2\pi}F_2$)
\begin{equation}
  c_1^2(\Sigma^4) \in \mathbb{Z}
  \,,
\end{equation} 
so that the condition for \eqref{ExponentiatedActionFixed} to be well-defined is finally that: 
\begin{equation}
  \label{TheLevel}
  k 
    \;\coloneqq\; 
  \tfrac{K}{2}
    \;\in\;
  \mathbb{Z}
\end{equation} 
must be an integer. 
This $k$ is the \emph{Chern-Simons level}. 
On the other hand, $K = 2k$ is the (even, for now) integer that gets promoted to the eponymous matrix $(K_{i j})$ in ``K-matrix Chern-Simons theory'' \cite{nLab:AbelianChernSimons}, where several gauge fields couple to each other. But beware that conventions differ and that our ``$K$'' is very often denoted ``$k$'' in the literature. 

\subsubsection{Abelian Chern-Simons theory on spin-manifolds}

It is also $K$ in \eqref{TheLevel} that is more fundamental as one restricts attention to spin manifolds $\Sigma^3$, hence to \emph{spin Chern-Simons theory} \cite[\S 5]{DijkgraafWitten1990}: From the observation that the cobordism ring for spin manifolds equipped with complex line bundles is trivial in degree 3, and that 
\begin{equation}
  \widehat{\Sigma}^4 
  \;
  \text{spin manifold}
  \;\;
  \Rightarrow
  \;\;
  \int_{\widehat{\Sigma}^4}
  \frac{F_2}{2\pi} 
    \wedge 
  \frac{F_2}{2\pi}
  \;\in\;
  2\mathbb{Z}
  \,,
\end{equation}
it follows that for spin Chern-Simons theory one may allow also half-integral level $k$ \eqref{TheLevel} and hence odd integral $K$:
\begin{equation}
  \label{HalfIntegralLevel}
  k 
    \;\in\;
  \tfrac{1}{2}\mathbb{Z} 
  \;\;\;\;\;\;
  \text{equivalently}
  \;\;\;\;\;\;
  K \coloneqq 2k 
  \;\in\;
  \mathbb{Z}
  \,.
\end{equation}

\subsubsection{Equation of motion}
In closing these basics, just to highlight that the Euler-Lagrange equation of motion associated with \eqref{TheLagrangianDensity} is simply
\begin{equation}
  \label{ChernSimonsEoM}
  F_2 = 0
  \mathrlap{\,,}
\end{equation}
and hence is, of course, not of general Maxwell-type \eqref{HigherMaxwellTypeEquationsOfMotion}. For this reason, we will realize (in \S\ref{5dMaxwellChernSimonsAndItsDimReduction} below) 3D CS as a constrained reduction of 5D Maxwell-CS theory and then (in \S\ref{CompletionViaProperFluxQuantization}) globally complete the latter by proper flux quantization.

\subsection{Traditional CS Wilson Loops in 3D}
\label{TraditionalCSWilsonLoops}

We review the traditional argument
(going back to \cite{Polyakov1988}, popularized by \cite[\S 2.1]{Witten1989}, cf. \cite[\S 1]{Giavarini1994}) that the expectation value of the Wilson loop quantum observable in $\mathrm{U}(1)$ Chern-Simons theory on $S^3$ is essentially the exponentiated \emph{writhe} of the given link (cf. Fig. \ref{FramedLinksAndTheirWrithe}), namely of the sum of its linking and framing number, where the framing arises as a re-normalization choice in the process of making sense of the path integral heuristics.

Historically important as this argument was, it should be clear that it is hardly satisfactory as a process of deriving quantum observables: Imagine, with the modest example of Chern-Simons field theory taking this much handwaving, how slim the chances are of unraveling the secrets of richer quantum field theories by similar reasoning.

\subsubsection{Wilson loops}

On $\Sigma^3 = \mathbb{R}^3_{\cpt} \simeq S^3$ the 3-sphere, the ordinary gauge field of abelian Chern-Simons theory does have a representation by a globally defined differential 1-form 
\begin{equation}
  \label{1Forms}
  A \in \Omega^1_{\mathrm{dR}}\big(S^3\big)
  \mathrlap{\,.}
\end{equation}

Then given an oriented link (cf. \cite[\S 1.1]{Ohtsuki2001})
\begin{equation}
  \label{ALink}
  \begin{tikzcd}[sep=small]
  \gamma 
    \colon 
  (S^1)^{n} 
    \ar[r, hook]
    &\mathbb{R}^3 
    \ar[r, hook]
    &
    S^3
  \end{tikzcd}
\end{equation} 
with $n$ connected components 
\begin{equation}
  \label{ConnectedComponentsOfALink}
  \big\{
    \begin{tikzcd}[sep=small]
    \gamma_i \colon S^1 
     \ar[r, hook]
     &
    \mathbb{R}^n
    \end{tikzcd}
  \big\}_{i \in [n]}
  \,,
\end{equation}
its  \emph{Wilson loop}/\emph{holonomy} with respect to such a gauge potential \eqref{1Forms} is the exponentiated integral of $A$ along the link:
\begin{equation}
  \label{Holonomy}
  \exp\bigg(
    \mathrm{i}
    \textstyle{\underset{{(S^1)^n}}{\int}} 
    \!\!
    \gamma^\ast A
  \bigg)
  \;\in\;
  \mathrm{U}(1)
  \,.
\end{equation}

The \emph{expectation value} $\langle-\rangle$ of the $\gamma$-Wilson loop \eqref{Holonomy}, regarded as a \emph{quantum observable} in abelian Chern-Simons theory, to be denoted
\begin{equation}
  \Bigg\langle\!
  \exp\bigg(
    \mathrm{i}
    \textstyle{\underset{{(S^1)^n}}{\int}} 
    \!\!
    \gamma^\ast A
  \bigg)
  \! \Bigg\rangle
  \;\in\;
  \mathbb{C}
  \mathrlap{\,,}
\end{equation}
is traditionally meant to be the normalized ``path integral $\int (\mbox{-}) D A$'' over the space of gauge orbits of $A$ \eqref{1Forms} of the expression \eqref{Holonomy} weighted by the exponentiated action functional \eqref{ExponentiatedActionFunctional} of abelian Chern-Simons theory, suggestively written as:
\begin{equation}
  \label{PathIntegralForAbelianChernSimonsWilsonLoops}
  \Bigg\langle \!
  \exp\bigg(
    \mathrm{i}
    \textstyle{\underset{{(S^1)^n}}{\int}} 
    \!\!
    \gamma^\ast A
  \bigg)
  \!\Bigg\rangle
  \;\;
  =
  \;\;
  \tensor[^{\mbox{``}}]{
  \frac{
    \displaystyle{\int D A}
    \;
    \exp\bigg(
      \mathrm{i}
      \tfrac{K}{4\pi}
      \textstyle{\underset{S^3}{\int}}
      A \wedge \mathrm{d}A
      \;+\;
      \mathrm{i}
      \textstyle{\underset{{(S^1)^n}}{\int}} 
      \!\!
      \gamma^\ast A
    \bigg)     
  }{
    \displaystyle{\int D A}
    \;
    \exp\bigg(
      \mathrm{i} 
      \tfrac{K}{4\pi}
      \textstyle{\underset{S^3}{\int}}
      A \wedge \mathrm{d}A
    \bigg)     
  }}{^{\mbox{''}}}
  \mathrlap{\,.}
\end{equation}

As usual for path integrals, the expression on the right of \eqref{PathIntegralForAbelianChernSimonsWilsonLoops} remains undefined, since a suitable measure ``$D A$'' is not known and may not exist. \footnote{
  While for the simple case of plain abelian Chern-Simons theory 
  candidate path integral measures in themselves do exist
  (since this is such a simple instance of a quantum field theory that it is ``free'' in that its action functional  is formally Gaussian), the Wilson loop functional \eqref{Holonomy} is not measurable for known choices, whence \eqref{PathIntegralForAbelianChernSimonsWilsonLoops} still remains ill-defined without further renormalization choices (cf. \cite{Leukert1996, Hahn2004a, Hahn2004b}).
}
As also usual for path integrals, one proceeds instead by postulating that the expression on the left of \eqref{PathIntegralForAbelianChernSimonsWilsonLoops} satisfies enough structural properties that are \emph{suggested} by the symbols on the right as to be uniquely fixed by these conditions --- and then take that to be the definition.

\subsubsection{The Gaussian path integral}
 \label{AbelianCSWilsonLoopsViaGaussianIntegrals}

Since the abelian Chern-Simons Lagrangian density $\tfrac{K}{4\pi} A \wedge \mathrm{d}A$ \eqref{TheLagrangianDensity} is a quadratic form in $A$, one envisions that this expression \eqref{PathIntegralForAbelianChernSimonsWilsonLoops} follows the transformation rules of Gaussian integrals over elements $\phi \in \mathbb{R}^n$ of finite-dimensional Cartesian spaces, for which 
$$
  Z(J)
  \;\coloneqq\;
  \int \mathrm{d}^n \phi\;
  \exp\Big(
    - \tfrac{1}{2}
    \phi \cdot M \cdot \phi 
    +
    J \cdot \phi
  \Big)
$$
evaluates to 
\begin{equation}
  \label{EvaluationOfOrdinaryGaussianIntegral}
  Z(J)
  \;=\;
  Z(0)
  \,
  \exp\Big(
    \tfrac{1}{2}
    \, 
    J \cdot M^{-1} \cdot J
  \Big)
  \,.
\end{equation}

Systematically following through this analogy (using Fourier transform to turn the derivative in $A \mathrm{d} A$ into a multiplication operation) with some care (such as with attention to gauge fixing) suggests the expression:
\begin{equation}
  \Bigg\langle \!
  \exp\bigg(
    \mathrm{i}
    \textstyle{\underset{{(S^1)^n}}{\int}} 
    \!\!
    \gamma^\ast A
  \bigg)
  \! \Bigg\rangle
  \;=\;
  \tensor[^{\mbox{``}}]{
  \exp\bigg(\!
    -\tfrac{1}{2}
    \textstyle{\underset{\gamma}{\iint}}
    \big\langle
      A_\mu(x)
      ,
      A_\nu(y)
    \big\rangle
    \, 
    \mathrm{d} x^\mu\, 
    \mathrm{d} y^\nu
  \bigg)
  }{^{\mbox{''}}}
  \,,
\end{equation}
whereby the \emph{Chern-Simons propagator} -- namely the analog of $M^{-1}$ in \eqref{EvaluationOfOrdinaryGaussianIntegral} -- is suggested to be
\begin{equation}
  \label{AbelianCSPropagatorForComputationOfWilsonLoops}
  \big\langle
    A_\mu(x)
    ,
    A_\nu(y)
  \big\rangle
  \;=\;
  \tensor[^{\mbox{``}}]{
  \frac{\mathrm{i}}{2K}
  \epsilon_{\mu \nu \rho}
  \frac{
    x^\rho - y^\rho
  }{
    \left\vert x - y \right\vert^3
  }
  }{^{\mbox{''}}}
  \,,
\end{equation}
which looks more concrete but is still ill-defined at $x = y$.

In summary, this suggests that
\begin{equation}
  \label{AbelianCSWilsonLoopObservableByNaivePathIntegral}
  \Bigg\langle \!
  \exp\bigg(
    \mathrm{i}
    \textstyle{\underset{{(S^1)^n}}{\int}} 
    \!\!
    \gamma^\ast A
  \bigg)
  \! \Bigg\rangle
  \;=\;
  \tensor[^{\mbox{``}}]{
  \exp\bigg(\!
    -\tfrac{\mathrm{i}}{4K}
    \textstyle{\underset{\gamma}{\iint}}
    \, 
    \mathrm{d} x^\mu\, 
    \mathrm{d} y^\nu
    \epsilon_{\mu \nu \rho}  
    \frac{
      \rule[-2.5pt]{2pt}{0pt}%
      x^\rho - y^\rho
    }{
      \rule{0pt}{7pt}%
      \left\vert x - y \right\vert^3
    } 
  \bigg)
  }{^{\mbox{''}}}
  \,.
\end{equation}

This is the proposed expression due to 
\cite[(3)]{Polyakov1988}, popularized by 
\cite[(2.29)]{Witten1989}.
It is \emph{almost} the time-honored integral expression for the Gauss linking number of the link $\gamma$ --- cf. \eqref{AbelianCSWilsonLoopsInTermsOfFramingAndLinkingNumbers} below --- except for those contributions where $x$ and $y$ run over the same connected component \eqref{ConnectedComponentsOfALink}, where it is not well-defined.

In a final step of the traditional argument, one applies a renormalization argument to ``regulate'' the "divergencies" of the exponent in \eqref{AbelianCSWilsonLoopObservableByNaivePathIntegral} when $x = y$.

\subsubsection{The renormalization choice}

Possible regularizations/renormalizations (hence: re-definitions!) of \eqref{AbelianCSPropagatorForComputationOfWilsonLoops} have been considered by \cite[(5)]{Polyakov1988} and \cite[\S 7]{Leukert1996}. 
\footnote{
  Even though \cite{Leukert1996} give a rigorous construction of the path integral measure, the Wilson loop observable still needs renormalization to make sense, cf. \S7 there.
}
But \cite[p. 363]{Witten1989} asserted that it is ``clear'' that the following \emph{point-splitting regularization} should be used instead, and this has become the commonly accepted choice since (cf. \cite[(9)]{Kaul1999}\cite[\S 7]{Hahn2004a}\cite[\S 2.5]{Hahn2004b}\cite[\S 3]{Guadagnini2008}\cite[(5.1)]{Mezei2018}):

First, re-define the oriented link $\gamma$ \eqref{ALink} to be equipped with a \emph{framing}, hence with a unit vector field \begin{equation}
  \label{FrameFieldOnLink}
  u
  \in 
  \Gamma\big(
    T_\gamma \mathbb{R}^3
  \big)
\end{equation}
on $\gamma$, which is normal to the tangent vector field $\dot \gamma$. 
With that, finally define \eqref{AbelianCSWilsonLoopObservableByNaivePathIntegral} by replacing all occurrences of $y$ with a shifted version $y + s u$, for arbitrarily small lengths $s \in \mathbb{R}_+$ of the normal/framing vector $s u$:
\begin{equation}
  \label{AbelianCSWilsonLoopObservableRenormalized}
  \Bigg\langle \!
  \exp\bigg(
    \mathrm{i}
    \textstyle{\underset{{(S^1)^n}}{\int}}
    \!\!
    \gamma^\ast A
  \bigg)
  \! \Bigg\rangle
  \coloneqq
  \underset{
    s \to 0
  }{\lim}
  \exp\bigg(\!
    -\tfrac{\mathrm{i}}{4K}
    \textstyle{\underset{\mathclap{(S^1)^n}}{\iint}}
    \, 
    \mathrm{d} x^\mu 
    \mathrm{d} y^\nu
    \,
    \epsilon_{\mu \nu \rho}  
    \frac{
      \rule[-2pt]{0pt}{0pt}%
      x^\rho 
        - 
      (y^\rho + s u^\rho)
    }{
      \rule{0pt}{7pt}%
      \left\vert 
        x
        - 
        (y + s u)
      \right\vert^3
    } 
  \bigg)
  \,,
\end{equation}
where the limit $s \to 0$ is over normal vector fields $s u$ that keep the given normal direction $u$ but shrink in length $s$ (cf. \cite[(1.5)]{Giavarini1994}).

This expression \eqref{AbelianCSWilsonLoopObservableRenormalized} is finally well-defined. 

\subsubsection{The knot-theoretic expression}
\label{TheKnotTheoreticExpression}

While the exponentiated integral form of \eqref{AbelianCSWilsonLoopObservableRenormalized} is that motivated from the path integral picture, it now turns out to have a purely combinatorial/knot-theoeretic interpretation:

The shift along the framing vector in \eqref{AbelianCSWilsonLoopObservableRenormalized}does not change that part of the integral, with respect to the naive \eqref{AbelianCSWilsonLoopObservableByNaivePathIntegral}, where $\sigma$ and $\sigma'$ run over distinct connected components \eqref{ConnectedComponentsOfALink} of the link, hence that part now does give the sum of the linking numbers among the connected components of the oriented link. At the same time, the shift makes the contributions where $\sigma$ and $\sigma'$ do run over the same connected component become the linking number of that component with its shift by the framing, which is its \emph{self-linking} or \emph{framing number}.

The end result of this argument is that the exponent becomes the sum of the framing numbers $\mathrm{frm}(\gamma_i) \in \mathbb{Z}$ of each link component $\gamma_i$ \eqref{ConnectedComponentsOfALink} with the linking numbers $\mathrm{lnk}(\gamma_i, \gamma_j) \in \mathbb{Z}$ all pairs of distinct link components, which is its \emph{writhe}, $wrth(\gamma) \in \mathbb{Z}$, of the link (\cite[(2.31) ff]{Witten1989}\cite[\S 9]{Hahn2004a}\cite[\S 5]{Hahn2004b}\cite[(5.2)]{Guadagnini2008}, cf. \cite[(9)]{Kaul1999}\cite[(5.1)]{Mezei2018}):
\begin{equation}
  \label{AbelianCSWilsonLoopsInTermsOfFramingAndLinkingNumbers}
  \Bigg\langle \!
  \exp\bigg(
    \mathrm{i}
    \textstyle{\underset{{(S^1)^n}}{\int}} 
    \!\!
    \gamma^\ast A
  \bigg)
 \! \Bigg\rangle
  =
  \exp\bigg(
    \tfrac{\pi \mathrm{i}}{K}
    \Big(
      \underset{
        \mathrm{wrth}(\gamma)
      }{
      \underbrace{
        \textstyle{\underset{i}{\sum}}
        \,
        \mathrm{frm}(\gamma_i)
        +
        \textstyle{\underset{i \neq j}{\sum}}
        \,
        \mathrm{lnk}(\gamma_i, \gamma_j)
      }
      }
    \Big)
  \!\bigg)
  \mathrlap{\,.}
\end{equation}
Note though that none of the intermediate steps towards \eqref{AbelianCSWilsonLoopObservableRenormalized} was well-defined, so that this traditional argument --- impactful as it historically was and ``correct'' as it may be in its conclusion \eqref{AbelianCSWilsonLoopsInTermsOfFramingAndLinkingNumbers} --- is somewhat unsatisfactory as a derivation of quantum observables from (Lagrangian) input data.

\begin{figure}[htb]
\caption{
  \label{FramedLinksAndTheirWrithe}
  Examples of (blackboard-)framed
  link(diagram)s and their writhe number.
}
\centering

\begin{tabular}{ccc}

\adjustbox{
  scale=.76
}{
\begin{tikzpicture}[
  baseline=(current bounding box.center)
]

\begin{scope}[
  shift={(1,0)}
]
\draw[line width=2, -Latex]
  (0:1) arc (0:180:1);
\end{scope}

\draw[line width=7,white]
  (0:1) arc (0:180:1);
\draw[line width=2, -Latex]
  (0:1) arc (0:180:1);

\draw[line width=2, -Latex]
  (180:1) arc (180:360:1);

\begin{scope}[shift={(1,0)}]
\draw[line width=7, white]
  (180:1) arc (180:360:1);
\draw[line width=2, -Latex]
  (180:1) arc (180:360:1);
\end{scope}

\node[gray]
  at (.5,.64) {\color{red} 
    \scalebox{.9}{$-$}
  };
\node[gray]
  at (.5,-.64) {\color{red}
    \scalebox{.9}{$-$}
  };  
\end{tikzpicture}}
&$\longmapsto$&
$-2$
\\[5pt]
\adjustbox{
  scale=.6
}{
\begin{tikzpicture}[
  baseline=(current bounding box.center)
]
\foreach \n in {0,1,2} {
\begin{scope}[
  rotate=\n*120-4
]
\draw[
  line width=2.3,
  -Latex
]
 (0-.1,-1+.14)
   .. controls
   (-1,.2) and (-2,2) ..
 (0,2)
   .. controls
   (1,2) and (1,1) ..
  (.9,.7);
\end{scope}

\node[darkgreen]
  at (\n*120+31:.6) {
    \scalebox{1}{$+$}
  };

};
\end{tikzpicture}
}
&$\longmapsto$&
+3
\\[-15pt]
\adjustbox{
  scale=1
}{
\begin{tikzpicture}[
  baseline=(current bounding box.center)
]

\node at (0,0) {
  \includegraphics[width=2.2cm]{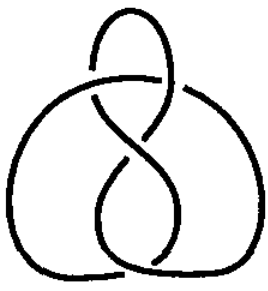}
};

\draw[line width=1.2pt,-Latex]
  (-.276,-.8) -- 
  (-.274,-.8-.02);
\draw[line width=1.2pt,-Latex]
  (.35, -.5) --
  (.35, -.47);
\draw[line width=1.2pt,-Latex]
  (-.09,1.09) --
  (-.09+.05, 1.13);
\draw[line width=1.2pt,-Latex]
  (-.18,.54) --
  (-.19-.01, .54);

\node[red]
  at (0,-1.2) {
    \scalebox{.7}{$-$}
  };
\node[red]
  at (.3,0) {
    \scalebox{.7}{$-$}
  };
\node[darkgreen]
  at (.5,.64) {
    \scalebox{.7}{$+$}
  };
\node[darkgreen]
  at (-.57,.64) {
    \scalebox{.7}{$+$}
  };

\end{tikzpicture}
}
&$\longmapsto$&
\phantom{+}0
\end{tabular}

\end{figure}

In any case, towards our alternative and systematic derivation of  \eqref{AbelianCSWilsonLoopsInTermsOfFramingAndLinkingNumbers} (culminating in \S\ref{DerivingChernSimonsQuantumObservables} below), we highlight that \eqref{AbelianCSWilsonLoopsInTermsOfFramingAndLinkingNumbers} is a function on the set of framed oriented links of the form
\begin{equation}
  \label{WilsonLoopObservableFactorization}
  \begin{tikzcd}[
    row sep=0pt
  ]
    \mathrm{FrmdOrntdLnks}
    \ar[
      rr,
      "{
        \mathrm{wrth}
      }"
    ]
    &&
    \mathbb{Z}
    \ar[
      rr,
      "{
        \exp\big(
          \tfrac{\pi \mathrm{i}}{K}
          (-)
        \big)
      }"
    ]
    &&
    \mathrm{U}(1)
    \\
    \gamma
    \ar[
      rrrr,
      |->, 
      shorten=15pt
    ]
    &&&&
   \Bigg\langle \!
      \exp\bigg(
        \mathrm{i}
        \textstyle{\underset{{(S^1)^n}}{\int}} 
        \!\!
        \gamma^\ast A
      \bigg)
   \! \Bigg\rangle
    \mathrlap{\,,}
  \end{tikzcd}
\end{equation}
depending on an integer $K \in \mathbb{Z}$,
where the \emph{writhe} of a link may be computed (cf. \cite[Def. 2.10]{SS25-AbelianAnyons}) by choosing a \emph{blackboard-framed oriented link diagram} for it (cf. Fig. \ref{FramedLinksAndTheirWrithe}) and then counting the net number $\#$ of crossings, where each crossing contributes according to:
\begin{equation}
\label{TotalLinkingNumber}
\#\left(\!
\adjustbox{raise=-.63cm, scale=.5}{
\begin{tikzpicture}

\draw[
  line width=1.2,
  -Latex
]
  (-.7,-.7) -- (.7,.7);
\draw[
  line width=7,
  white
]
  (+.7,-.7) -- (-.7,.7);
\draw[
  line width=1.2,
  -Latex
]
  (+.7,-.7) -- (-.7,.7);
 
\end{tikzpicture}
}
\!\right)
\;:=\;
+1
\,,
\hspace{1cm}
\#\left(\!
\adjustbox{raise=-.63cm, scale=0.5}{
\begin{tikzpicture}[xscale=-1]

\draw[
  line width=1.2,
  -Latex
]
  (-.7,-.7) -- (.7,.7);
\draw[
  line width=7,
  white
]
  (+.7,-.7) -- (-.7,.7);
\draw[
  line width=1.2,
  -Latex
]
  (+.7,-.7) -- (-.7,.7);
 
\end{tikzpicture}
}
\!\right)
\;\;
:=
\;\;
-1
\,.
\end{equation}

In the next sections we turn to an actual derivation of this situation, from first principles.

\section{Via Dimensional Reduction from 5D}
\label{5dMaxwellChernSimonsAndItsDimReduction}

For its completion in \S\ref{CompletionViaProperFluxQuantization}, we show here how classical 3D abelian Chern-Simons theory (\S\ref{TraditionalAbelianChernSimonsTheory}) is a constrained \emph{dimensional reduction} of a higher dimensional gauge theory of Maxwell-type (\cite[p. 4]{SS24-Phase}\cite[Def. 2.6]{SS25-Flux}), namely of 5D \emph{Maxwell-Chern-Simons theory} (\S\ref{5DMaxwellChernSimonsTheory}). This follows standard arguments but seems not to be citable from the literature, therefore we make it explicit below (\S\ref{DimensonalReductionTo3D}).

This section is purely about classical field theory. Its purpose is to give an idea as to \emph{why} it can be, as concerns the physics picture, that the completed 5D TQFT constructed in \S\ref{CompletionViaProperFluxQuantization} produces --- as we prove there that it does --- the renormalized Wilson loop quantum observables of \S\ref{TraditionalCSWilsonLoops}.

\subsection{5D Maxwell-Chern-Simons theory}
\label{5DMaxwellChernSimonsTheory}

The field content of 5D \emph{Maxwell-Chern-Simons theory} \cite{nLab:MaxwellChernSimons} is still a single abelian gauge field, hence locally a differential 1-form $A$ with flux density $F_2 := \mathrm{d}A$, with Lagrangian density
\begin{equation}
  L_{\mathrm{MCS}}^{5D}(A)
  \;\propto\;
  \grayunderbrace{
  \tfrac{1}{2}
  F_2 \wedge \star F_2
  }{
    \mathclap{
    \scalebox{.7}{Maxwell term}
    }
  }
  \;-\;
  \grayunderbrace{
  \tfrac{1}{3} A \wedge F_2 \wedge F_2
  \mathrlap{\,,}
  }{
    \mathclap{
    \scalebox{.7}{Chern-Simons term}
    }
  }
\end{equation}
hence with Maxwell-like equations of motion:
\begin{equation}
  \label{EoMof5dMCS}
  \begin{aligned}
    \mathrm{d} F_2 & = 0
    \\
    \mathrm{d} \star F_2 & = F_2 \wedge F_2
    \,.
  \end{aligned}
\end{equation}
This notably describes the gauge sector of minimal $D=5$ supergravity \cite{nLab:5DSuGra}. It is structurally identical, just up to dimension and degrees, to the gauge sector of $D=11$ supergravity (cf. \cite{GSS24-SuGra}).

\subsection{Dimensional reduction to 3D}
\label{DimensonalReductionTo3D}

We consider the dimensional reduction of the 5D Maxwell-Chern-Simons equations of motion \eqref{EoMof5dMCS} to 3D, specifically (cf. Fig. \ref{SpacetimeDomain}): 

Consider the case 
that 5D spacetime is product manifold (with product metric)
\begin{equation}
  \label{DimRedSpacetime}
  X^{1,4}
  =
  \mathbb{R}^{1,0} 
    \times 
  X^3
    \times 
  V^1
  \,,
\end{equation}
of:
\begin{itemize}
\item 
  a globally hyperbolic Lorentzian manifold $X^{1,3} = \mathbb{R}^{1,0} \times X^3$,

\item 
  a 1-dimensional Riemannian manifold $V^1$ (not necessarily connected or closed), which in suitable coordinates $v$ has metric
  $
      g_V 
        = 
      \ell_v^2 \, \mathrm{d}v \otimes \mathrm{d}v\,,
  $
  for $\ell_v \in \mathbb{R}_+$.

\item 
such that the flux density is time-independent and constant along the fiber, in that
\begin{equation}
  \label{DimRedDecompositionOfFluxDensity}
  \begin{aligned}
    F_2 
    = \,
    & 
    B_2 
    \\
    & 
    +
    E_0
    \wedge
    \mathrm{d}t \wedge \mathrm{d}v  
  \end{aligned}
  \;\;
  \text{for}
  \;\;
  \left\{
  \begin{aligned}
    B_2
    & 
    \in
    \Omega^2_{\mathrm{dR}}\big(X^{3}\big)
    \xhookrightarrow{ p_X^\ast }
    \Omega^2_{\mathrm{dR}}\big(X^{1,4}\big)
    \\
    E_0 
    & 
    \in
    \Omega^0_{\mathrm{dR}}\big(X^{3}\big)
    \xhookrightarrow{ p_X^\ast }
    \Omega^0_{\mathrm{dR}}\big(X^{1,4}\big)
    \mathrlap{\,,}
  \end{aligned}
  \right.
\end{equation}
\item and constituting an electric \emph{flux compactification}:
\begin{equation}
  \label{DimRedFluxCompactification}
  \underset{x \in X}{\forall} 
  \;
  E_0(x) \neq 0
  \mathrlap{\,.}
\end{equation}

\end{itemize}

We make the following simple but noteworthy observation:

\begin{proposition}
  \label{How5DMCSReducesTo3DCS}
  In this situation and in the limit $\ell_v \to 0$, the equations of motion \eqref{EoMof5dMCS} of 5D MCS are equivalent to those of 3D CS
  \eqref{ChernSimonsEoM}
  on the magnetic flux density over $X^3$:
\begin{equation}
  \label{5DMCSEoMdimReducedTo3D}
    B_2  = 0
    \mathrlap{\,.}
\end{equation}
\end{proposition}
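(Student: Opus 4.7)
The plan is a direct substitution and comparison: plug the ansatz \eqref{DimRedDecompositionOfFluxDensity} into the 5D equations \eqref{EoMof5dMCS}, expand using the product metric, isolate the terms by form-degree along the fiber, and track how the fiber scale $\ell_v$ enters through the Hodge star.

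First I would observe that the cubic Chern-Simons source simplifies dramatically: since $B_2$ is pulled back from the 3-manifold $X^3$, the form $B_2 \wedge B_2$ vanishes on degree grounds, so
$$
  F_2 \wedge F_2 \;=\; 2\,E_0\,B_2 \wedge \mathrm{d}t \wedge \mathrm{d}v.
$$
Next I would compute $\star F_2$ in a local orthonormal coframe adapted to the product metric (with $e^4 = \ell_v\,\mathrm{d}v$, giving $\star_V \mathrm{d}v = 1/\ell_v$ and $\star_V 1 = \ell_v\,\mathrm{d}v$), which splits $\star F_2$ into an \emph{electric} piece proportional to $(E_0/\ell_v)\,\mathrm{vol}_{X^3}$ and a \emph{magnetic} piece proportional to $\ell_v\,(\star_{X^3} B_2) \wedge \mathrm{d}t \wedge \mathrm{d}v$. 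Applying $\mathrm{d}$, the electric piece contributes $(1/\ell_v)\,\mathrm{d}_{X^3} E_0 \wedge \mathrm{vol}_{X^3} = 0$ for dimensional reasons (top-form saturation on $X^3$), while the magnetic piece yields $\pm\,\ell_v\,\mathrm{d}_{X^3}(\star_{X^3} B_2) \wedge \mathrm{d}t \wedge \mathrm{d}v$. Matching coefficients of $\mathrm{d}t \wedge \mathrm{d}v$ in $\mathrm{d}\star F_2 = F_2 \wedge F_2$ then collapses the entire dynamical content of the second 5D equation to the single 3D relation
$$
  \pm\,\ell_v\,\mathrm{d}_{X^3}(\star_{X^3} B_2) \;=\; 2\,E_0\,B_2.
$$

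In the limit $\ell_v \to 0$, holding $B_2$ and $E_0$ fixed as geometric data on $X^3$, the Maxwell-divergence term on the left vanishes uniformly, so $E_0\,B_2 = 0$; the flux-compactification hypothesis \eqref{DimRedFluxCompactification} that $E_0$ is everywhere nonzero then forces $B_2 = 0$, as claimed in \eqref{5DMCSEoMdimReducedTo3D}. The Bianchi identity $\mathrm{d}F_2 = 0$ contributes only $\mathrm{d}_{X^3} B_2 = 0$ and $\mathrm{d}_{X^3} E_0 = 0$, both trivially compatible; and the converse is immediate, since with $B_2 = 0$ and $E_0$ locally constant, $F_2 = E_0\,\mathrm{d}t \wedge \mathrm{d}v$ makes both sides of $\mathrm{d}\star F_2 = F_2 \wedge F_2$ vanish identically for any $\ell_v$. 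The one subtle point, and the main thing requiring care, is fixing what the limit $\ell_v \to 0$ means: it is the standard zero-mode / Kaluza-Klein reduction in which $B_2$ and $\star_{X^3} B_2$ are held fixed as forms on $X^3$ while the fiber shrinks, so that the prefactor $\ell_v$ uniformly kills the Maxwell term. The apparent $1/\ell_v$ divergence of the electric sector is harmless because it multiplies the already-closed top form $\mathrm{vol}_{X^3}$; overall signs and normalization constants play no role in the conclusion.
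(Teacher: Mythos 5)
Your proposal is correct and follows essentially the same route as the paper's proof: compute $\star F_2$ under the product metric so that the electric piece $\tfrac{1}{\ell_v}\star_3 E_0$ is killed by $\mathrm{d}$ as a top form on $X^3$ while the Maxwell term carries an overall $\ell_v$, so the limit $\ell_v \to 0$ reduces the second equation of motion to $F_2 \wedge F_2 = 2\,E_0\, B_2 \wedge \mathrm{d}t \wedge \mathrm{d}v = 0$, whence $B_2 = 0$ by the flux-compactification condition \eqref{DimRedFluxCompactification}. Your explicit treatment of the Bianchi identity and of the converse direction only adds detail that the paper leaves implicit.
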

\begin{proof}
First to note that
$$
  \begin{aligned}
  \star F_2
  =
  &
  \;
  \ell_v
  \,
  (\star_3 B_2) \wedge 
  \mathrm{d}t \wedge \mathrm{d}v
  \\
  &
  +
  \tfrac{1}{\ell_v}
  \star_3 E_0
  \,,
  \end{aligned}
$$
whence
$$
  \begin{aligned}
    \underset{\underset{\ell_v \to 0}{\longrightarrow}}{\lim}
    \;
    \mathrm{d}
    \star F_2
    & =
    \underset{\underset{\ell_v \to 0}{\longrightarrow}}{\lim}
    \Big(
    \ell_v \big(
      \mathrm{d}_3 \star_3 B_2
    \big)
    \wedge \mathrm{d}t \wedge \mathrm{d}v
    \;+\;
    \tfrac{1}{\ell_v}\
    \underbrace{
      \mathrm{d}_3 \star_3 E_0
    }_0
    \Big)
    \\
    & = 0
    \,.
  \end{aligned}
$$
Therefore the equation of motion 
in the limit is equivalent to 
$$
  \begin{aligned}
    \underset{\underset{\ell_v \to 0}{\longrightarrow}}{\lim}
    \big(
      F_2 \wedge F_2
      -
      \mathrm{d}\star F_2 
      =
      0
    \big)
    & \Leftrightarrow\;
    F_2 \wedge F_2  = 0
    \\
    &
    \Leftrightarrow\;
    \underbrace{B_2 \wedge B_2}_{= 0}
    \;+\;
    2 E_0 \, B_2 
      \wedge 
    \mathrm{d}t
      \wedge 
    \mathrm{d}v
    \;=\; 
    0
    \\
    &
    \underset{
      \scalebox{.7}{\eqref{DimRedFluxCompactification}}
    }{\Leftrightarrow}\;
    B_2 = 0
    \mathrlap{\,.}
  \end{aligned}
$$

\vspace{-6mm}
\end{proof}

\section{Completion by Proper Flux Quantization}
\label{CompletionViaProperFluxQuantization}

The previous \S\ref{5dMaxwellChernSimonsAndItsDimReduction} showed how 3D Chern-Simons theory (\S\ref{3DCSLagrangianAndLevel}) is, classically and locally, a dimensional reduction of 5D Maxwell-Chern-Simons theory. But such Maxwell-type gauge theories admit global completion of their field content by \emph{flux quantization laws}  \cite[p. 4]{SS24-Phase}\cite[Def. 2.6]{SS25-Flux}, which thereby should also provide a complete quantization of 3D Chern-Simons theory.

\subsection{Topological Quantum Observables on Quantized Flux}

We review the recent approach \cite{SS25-Flux, SS24-Phase, SS24-Obs} to produce complete (systematically well-defined) topological quantum observables from the input datum of a proper flux quantization law, specified to the example of $D=5$ Maxwell-Chern-Simons theory (\S\ref{5dMaxwellChernSimonsAndItsDimReduction}) --- which turns out to admit flux quantization in 2-Cohomotopy.

\subsubsection{Flux Quantization of 5D MCS in 2-Cohomotopy}

Consider on a globally hyperbolic $D=1+4$ spacetime
\begin{equation}
  X^{1,4} 
  \simeq
  \mathbb{R}^{1,0}
  \times
  X^4
\end{equation}
a gauge field (whose full nature is to be specified in a moment, but) whose flux density is a single differential 2-form 
\begin{equation}
  F_2 
  \in
  \Omega^2_{\mathrm{dR}}(X^{1,4})
  \mathrlap{\,,}
\end{equation}
and whose (higher Maxwell-type \cite[p. 4]{SS24-Phase}\cite[Def. 2.6]{SS25-Flux}) equations of motion are those of 5D Maxwell-Chern-Simons theory \eqref{EoMof5dMCS}:
\begin{equation}
  \label{The5DEquationsOfMotion}
  \begin{aligned}
    \mathrm{d}\, F_2
    & = 0
    \\
    \mathrm{d}\, \star_{{}_{1,4}}
    F_2 & =
    F_2 \wedge F_2
    \mathrlap{\,,}
  \end{aligned}
\end{equation}
where $\begin{tikzcd}\Omega^\bullet_{\mathrm{dR}}\big(X^{1,4}\big) \ar[r, "{ \star_{{}_{1,4}} }"] & \Omega^{5-\bullet}_{\mathrm{dR}}\big(X^{1,4}\big)\end{tikzcd}$ is the Hodge star operator on our spacetime.

Now fixing any Cauchy surface
\begin{equation}
  \begin{tikzcd}
    X^4 
    \ar[
      rr,
      hook,
      "{ 
        \iota
        :=
        (0, \mathrm{id}) 
      }"
    ]
    &&
    \mathbb{R}^{1,0} \times X^4
    \mathrlap{\,,}
  \end{tikzcd}
\end{equation}
the solutions to the equations of motion \eqref{The5DEquationsOfMotion} are 
\cite[Thm. 2.2, Rem. 2.4]{SS24-Phase} in natural bijection to closed $\mathfrak{l}S^2$-valued differential forms \eqref{SheafOfFlatLInfinityValuedForms}
on $X^4$:
\begin{equation}
  \label{SolutionSetOf5DMaxwellChernSimons}
  \mathrm{SolSet}_{X^4}
  \simeq
  \left\{
  \begin{aligned}
    B_2 & 
    \in \Omega^2_{\mathrm{dR}}\big(
      X^4
    \big)
    \\
    E_3 & \in \Omega^3_{\mathrm{dR}}\big(
      X^4
    \big)
  \end{aligned}
  \;\middle\vert\;
  \begin{aligned}
    \mathrm{d}\, B_2 & = 0
    \\
    \mathrm{d}\, E_3 & = B_2 \wedge B_2
  \end{aligned}
  \right\}
  \defneq
  \Omega^1_{\mathrm{dR}}\big(
    X^4
    ;
    \mathfrak{l}S^2
  \big)_{\mathrm{cl}}
  \mathrlap{\,,}
\end{equation}
where $\mathfrak{l}S^2$ denotes the real Whitehead $L_\infty$-algebra \cite[Prop. 5.11]{FSS23-Char}
of the 2-sphere, whose Chevalley-Eilenberg algebra is \cite[Ex. 5.3]{FSS23-Char}\cite[p. 21]{SS25-Flux}:
\begin{equation}
  \label{CEAlgebraOf2Sphere}
  \mathrm{CE}\big(
    \mathfrak{l}S^2
  \big)
  \simeq
  \mathbb{R}_{{}_{\mathrm{d}}}
  \left[
  \begin{aligned}
    b_2
    \\
    e_3
  \end{aligned}
  \right]
  \Big/
  \left(
  \begin{aligned}
    \mathrm{d}\, b_2 & = 0
    \\
    \mathrm{d}\, e_3 & = b_2 b_2
  \end{aligned}
  \right)
\end{equation}
the differential graded-commutative algebra over $\mathbb{R}$ which generated by $b_2$ in degree 2 and $e_3$ in degree 3, subject to the differential relations $\mathrm{d} b_2 = 0$ and $\mathrm{d} e_3 = b_2^2$.

To highlight here that the magnetic flux density $B_2$ is accompanied by an electric flux density $E_3$ whose Gauss law couples it back to the magnetic flux --- signalling that the tradition to apply ordinary Dirac flux quantization only to $B_2$ does not do justice to the full field content of the theory. Instead:

Instead, an admissible \emph{proper flux quantization law} for this situation is classified \cite[\S 3.2]{SS25-Flux} by a space $\hotype{A}$ (nilpotent and of rational finite type) whose real Whitehead $L_\infty$-algebra $\mathfrak{l}\hotype{A}$ \cite[Prop. 5.11]{FSS23-Char} is isomorphic to $\mathfrak{l}S^2$ \eqref{CEAlgebraOf2Sphere}, whence the flux quantization condition is that the class $[B_2, E_3]$ of the flux densities in non-abelian de Rham cohomology \cite[\S 6]{FSS23-Char} is in the image under the \emph{character map} $\mathrm{ch}_{\hotype{A}}$ \cite[\S IV]{FSS23-Char}
of a \emph{charge} $\chi$ in non-abelian $\Omega \hotype{A}$-cohomology \cite[\S 2]{FSS23-Char}, cf \eqref{TheNonabelianCharacterMap}:
\begin{equation}
  \begin{tikzcd}
    &[+7pt]&[-5pt]
    H^1\big(
      X^4;
      \Omega \hotype{A}
    \big)
    \ar[
      d,
      "{ \mathrm{ch}_{\hotype{A}} }"
    ]
    &[-30pt]
    \defneq
    &[-30pt]
      \pi_0
      ,
      \mathrm{Map}\big(
        X^4, \hotype{A}
      \big)
    \\
    \ast
    \ar[
      r,
      "{
        (B_2, E_3)
      }"{swap}
    ]
    \ar[
      urr,
      bend left=10,
      dashed,
      "{ \chi }"
    ]
    &
    \Omega^1_{\mathrm{dR}}\big(
      X^4;
      \mathfrak{l}\hotype{A}
    \big)_{\mathrm{cl}}
    \ar[
      r,
      "{ [-] }"{swap}
    ]
    &
    H^1_{\mathrm{dR}}\big(
      X^4;
      \mathfrak{l}\hotype{A}
    \big)
    &\defneq&
    \Omega^1_{\mathrm{dR}}\big(
      X^4;
      \mathfrak{l}\hotype{A}
    \big)_{\mathrm{cl}}\big/
      _{\mathrlap{\!\mathrm{cncrdnc}\,.}}
  \end{tikzcd}
\end{equation}

Here, we consider now the evident choice of the 2-sphere itself,
\begin{equation}
  \label{ChoiceOfFluxQuantization}
  \hotype{A} := S^2
  ,
\end{equation}
which is the classifying space of the extraordinary nonabelian cohomology theory called \emph{2-Cohomotopy} $\pi^2$ \cite[\S VII]{STHu59}\cite[Ex. 2.7]{FSS23-Char}\cite[Def. 3.1]{SS25-FQH}:
\begin{equation}
  \label{2Cohomotopy}
  \pi^2(X)
  :=
  \pi_0
  \,
  \mathrm{Map}\big(X, S^2\big)
  \,.
\end{equation}
Compare this to the traditional quantization of magnetic flux (only) in ordinary integral 2-cohomology
\begin{equation}
  \label{Ordinary2Cohomology}
  H^2(X;\mathbb{Z})
  \simeq
  \pi_0
  \,
  \mathrm{Map}\big(X, B^2 \mathbb{Z}\big)
\end{equation}
of which it is an unstable stage, 
\begin{equation}
  \begin{tikzcd}
    S^2 
      \simeq 
    \mathbb{C}P^1
    \ar[r, hook]
    &
    \mathbb{C}P^\infty
    \weakHomotopyEquivalence
    B^2 \mathbb{Z}
    \mathrlap{\,.}
  \end{tikzcd}
\end{equation}
It is this ``de-stabilization'' which makes the Hopf generator $e_3$ in \eqref{CEAlgebraOf2Sphere} appear, whose effect is to extend the flux quantization from the magnetic field $B_2$ to the electric field $E_3$ with its non-linear Bianchi identity \eqref{SolutionSetOf5DMaxwellChernSimons}.

The choice of flux quantization \eqref{ChoiceOfFluxQuantization} is the analog of the choice of a Lagrangian density, in that it is a choice of physical model. But in contrast to the Lagrangian density, the choice of flux quantization gives a completed \emph{global} definition the on-shell field content (cf. Fig. \ref{FromTheoryToObservables}) and hence of the phase space  (\S\ref{TheFluxQuantizedPhaseSpace} below). Moreover and of prime importance for the present purpose, it completely determines the topological quantum observables of the theory (\S\ref{TheTopologicalQuantumObservables} below).

\subsubsection{Flux Quantization of the monopole string in 5D MCS}

For completeness, we make the following side remark on boundary effects (for more on which see \cite{SS25-OrbiK}):

5D Supergravity accomodates a 1-brane species known as the ``monopole string'' or ``solitonic string'' \cite{Mizoguchi1998, Fujii2000} or ``L1-brane'' \cite[p. 3]{Howe1998}, whose worldvolume $\phi : \begin{tikzcd}[sep=small] \Sigma^{1,1} \ar[r] & X^{1,4} \end{tikzcd}$ carries a ``chiral'' 1-form flux density, $H_1$, satisfying the Biachi identity
\begin{equation}
  \label{BianchiIdentityOnBrane}
  \mathrm{d} H_1
  =
  \phi^\ast F_2
  \,.
\end{equation}
This may be thought of as the dimensional reduction of (the Bianchi identity for) the chiral 3-form on M5-branes in 11D SuGra (cf. \cite[(1)]{GSS25-M5}). In its further reduction to 3D Chern-Simons theory (\S\ref{5dMaxwellChernSimonsAndItsDimReduction}) it corresponds to the field strength of the abelian WZW model (the ``chiral boson'') constituting a boundary field theory for 3D Chern-Simons. 

The flux-quantiztion of such such brane-supported gauge fields inside a given background spacetime happens in \emph{twisted} nonabelian cohomology \cite[\S3]{FSS23-Char}, classified now by a \emph{fibration} $\begin{tikzcd}\hotype{B} \ar[r, ->>] & \hotype{A} \end{tikzcd}$ such that charges on (a compatible Cauchy surface $\Sigma^1$ of) the brane are homotopy classes of dashed maps making the following diagram commute:
\begin{equation}
  \left[
  \begin{tikzcd}[
    baseline={([yshift=-3pt]current bounding box.center)}
  ]
    \Sigma^1
    \ar[
      rr,
      dashed,
    ]
    \ar[
      d,
      "{ \phi }"
    ]
    &&
    \hotype{B}
    \ar[
      d,
      ->>
    ]
    \\
    X^{4}
    \ar[
      rr,
      "{ \chi }"
    ]
    &&
    \hotype{A}
  \end{tikzcd}
  \right]
  \;\in\;
  H^{\phi^\ast\chi}\big(
    \Sigma^{1}
    ;
    \hotype{B}
  \big)
  \mathrlap{\,,}
\end{equation}
and the compatibility condition now is that 
the brane Bianchi identity \eqref{BianchiIdentityOnBrane}
is the closure condition for differential forms with coefficients in the \emph{relative} Whitehead $L_\infty$-algebra $\mathfrak{l}_{\hotype{A}} \hotype{B}$ of the fibration. The minimal choice of classifying fibration now turns out to be the \emph{complex Hopf fibration} $h : \begin{tikzcd}[sep=small]S^3 \ar[r, ->>] & S^2\end{tikzcd}$ (this follows as in \cite[\S 3.7]{FSS20-H}\cite[\S 3]{FSS21-Hopf}, there shown for the M5-brane and the quaternionic Hopf fibration):
\begin{equation}
  \left\{
  \begin{aligned}
    \mathcolor{purple}{H_1} 
    & 
    \in \Omega^2_{\mathrm{dR}}\big(
      \Sigma^1
    \big)
    \\
    \mathcolor{darkblue}{B_2} 
    & 
    \in \Omega^2_{\mathrm{dR}}\big(
      X^4
    \big)
    \\
    \mathcolor{darkblue}{E_3} 
    & 
    \in \Omega^3_{\mathrm{dR}}\big(
      X^4
    \big)
  \end{aligned}
  \;\middle\vert\;
  \begin{aligned}
    \mathrm{d}\, \mathcolor{purple}{H_1} 
    & = \phi^\ast 
    \mathcolor{darkblue}{B_2}
    \\
    \mathrm{d}\, \mathcolor{darkblue}{B_2} & = 0
    \\
    \mathrm{d}\, \mathcolor{darkblue}{E_3} & = B_2 \wedge B_2
  \end{aligned}
  \right\}
  \defneq
  \left\{
  \begin{tikzcd}[
    baseline=(current bounding box.center),
    row sep=10pt
  ]
  \Sigma^1
  \ar[d, "{\phi}"]
  \ar[
    r, 
    dashed,
    "{ H_1 }"
  ]
  &
  \Omega^1_{\mathrm{dR}}\big(
    -
    ;
    \mathfrak{l}_{\mathcolor{darkblue}{S^2}}
    \mathcolor{purple}{S^3}
  \big)_{\mathrm{cl}}  
  \ar[
    d,
    "{ h_\ast }"
  ]
  \\
  X^4
  \ar[
    r, 
    dashed,
    "{
      (B_2, E_3)
    }"
  ]
  &
  \Omega^1_{\mathrm{dR}}\big(
    -
    ;
    \mathfrak{l} \mathcolor{darkblue}{S^2}
  \big)_{\mathrm{cl}}
  \end{tikzcd}
  \right\}
  \mathrlap{\,.}
\end{equation}

\subsubsection{The flux-quantized phase space}
\label{TheFluxQuantizedPhaseSpace}

Given the choice of flux-quantization \eqref{ChoiceOfFluxQuantization}, the completed \emph{flux quantized phase space}
\eqref{PhaseSpaceInAppendix}
of the gauge theory is \cite[\S 2.2]{SS24-Phase}\cite[\S 3.3]{SS25-Flux}
the smooth mapping $\infty$-stack (cf. \cite[p. 41]{FSS23-Char}\cite[\S 4]{SS25-Bun}\cite{FSS15-Stacky}) 
\begin{equation}
  \label{The2CohomotopicalPhaseSpace}
  \mathrm{PhsSp}
  :=
  \mathrm{Map}\big(
    X^4
    ,
    S^2_{\mathrm{diff}}
  \big)
\end{equation}
from the Cauchy surface
into the homotopy fiber product $S^2_{\mathrm{diff}}$
of the sheaf of solution sets \eqref{SolutionSetOf5DMaxwellChernSimons}with the \emph{shape} \eqref{ShapeModality}
of the classifying space \eqref{ChoiceOfFluxQuantization} over the stack of flux deformations \cite[Def. 9.3]{FSS23-Char}, cf. \eqref{DiffCohomologyPullback}:
\begin{equation}
  \begin{tikzcd}
    S^2_{\mathrm{diff}}
    \ar[
      rr,
      "{ \chi }"
    ]
    \ar[
      d,
      shorten <=-3pt,
      "{ 
        (B_2,E_3) 
      }"{swap, pos=.4}
    ]
    \ar[
      dr,
      phantom,
      "{ \lrcorner }"{pos=.1}
    ]
    &&
    \shape S^2
    \ar[
      d,
      "{
        \mathbf{ch}_{S^2}
      }"
    ]
    \\
    \Omega^1_{\mathrm{dR}}(
      -;\mathfrak{l}S^2
    )_{\mathrm{cl}}
    \ar[
      rr,
      "{
        \eta^{\scalebox{.7}{$\shape$}}
      }"{description}
    ]
    &{}&
    \shape
    \Omega^1_{\mathrm{dR}}(
      -;\mathfrak{l}S^2
    )_{\mathrm{cl}}
    \mathrlap{\,,}
  \end{tikzcd}
\end{equation}
this being the smooth moduli $\infty$-stack of \emph{differential 2-Cohomotopy} \cite[Ex. 9.3]{FSS23-Char}\cite{FSS15-M5WZW}\cite[\S 3.1]{Grady2021}. It is the homotopy filling this square which encodes the gauge potentials (cf. \cite[\S 2.1.4]{GSS24-SuGra}\cite[\S B]{SS25-Srni}) and thereby completes the higher gauge field content.

But here we are interested only in the topological sector of the theory, hence only in the shape of the phase space, for which the gauge potentials and flux densities drop out and only the charge sector remains:
\footnote{
  In \eqref{ShapeOfPhaseSpace}, the first step is by the \emph{smooth Oka principle} \eqref{SmoothOkaPrinciple}  and the second step uses that the shape modality $\shape$ preserves homotopy fiber products over shapes \eqref{ShapeOfFiberProductsOverShapes}  and is idempotent \eqref{IdempotencyOfShape}.
}
\begin{equation}
  \label{ShapeOfPhaseSpace}
  \begin{aligned}
    \shape \mathrm{PhsSp}
    & \defneq
    \shape 
    \mathrm{Map}\big(
      X^4
      ,
      S^2_{\mathrm{diff}}
    \big)
    \\
    & \simeq
    \mathrm{Map}\big(
      \shape X^4
      ,
      \shape S^2_{\mathrm{diff}}
    \big)    
    \\
    & 
    \simeq
    \mathrm{Map}\big(
      \shape X^4
      ,
      \shape S^2
    \big)        
    \\
    & \simeq
    \shape 
    \mathrm{Map}\big(
      X^4
      ,
      S^2
    \big).
  \end{aligned}
\end{equation}

In fact, since $X^4$ may be non-compact while we are interested in the \emph{solitonic}
phase space $\mathrm{PhsSp}^{\mathrm{solit}}$ \eqref{TheSolitonicPhaseSpace}
involving 
solitonic fluxes \eqref{FluxesVanishingAtInfinity}
which \emph{vanish at infinity} \cite[\S 2.2]{SS25-Flux}, the end result of the correspondingly adjusted version of this derivation yields the pointed mapping space out of the one-point compactification $(-)_{\cpt}$ of the space domain (by adjoining the \emph{point at infinity}):
\begin{equation}
  \label{ShapeOfSolitonicPhaseSpace}
  \shape \mathrm{PhsSp}
    ^{\mathrm{solit}}
  \weakHomotopyEquivalence
  \shape 
  \,
  \mathrm{Map}^\ast\big(
    X^4_{\cpt}
    ,
    S^2
  \big)
  \,.
\end{equation}

This is the first remarkable consequence of proper flux quantization of gauge fields: It completely determines the shape (homotopy type) of the global phase space of the field theory and thereby completely determines the ``topology'' of the solitonic field configurations. It's exactly this that the topological quantum observables reflect the quantum dynamics of (\S\ref{TheTopologicalQuantumObservables} below).

\subsubsection{The Topological Quantum Observables}
\label{TheTopologicalQuantumObservables}

The \emph{observables} of a physical system are (compactly supported, in experimental practice) smooth functions on its phase space (cf. \cite[\S 4]{Simms1976})
\begin{equation}
  \label{ObservableAsAMap}
  O 
   : 
  \begin{tikzcd}
    \mathrm{PhsSpc}
    \ar[r]
    &
    \mathbb{C}
    \,.
  \end{tikzcd}
\end{equation}
Since $\mathbb{C}$ is an ordinary manifold and hence \emph{concrete} as an $\infty$-stack (\cite[Def. 4.1.8, Ex. 4.1.19]{SS26-Orb}) these observables \eqref{ObservableAsAMap} are automatically gauge invariant (in that these maps factor \cite[Prop. 4.1.9]{SS26-Orb} through the concretification of the 0-truncation of phase space, hence in particular through the gauge equivalence classes of fields).

The \emph{topological observables} on the system, hence those insensitive to local variation of fields, are thus \emph{locally constant} such functions, in that they factor furthermore through the shape \eqref{ShapeOfPhaseSpace} of the phase space 
\begin{equation}
  O_{\mathrm{top}}
  :
  \begin{tikzcd}
    \mathrm{PhsSpc}
    \ar[
      r,
      "{ \eta^{\scalebox{.7}{$\shape$}} }"
    ]
    &
    \shape \, \mathrm{PhsSpc}
    \ar[r]
    &
    \mathbb{C}
    \mathrlap{\,,}
  \end{tikzcd}
\end{equation}
and hence (if taken to be compactly supported) form its 0th homology:
\begin{equation}
  \mathrm{Obs}_{\mathrm{top}}
  \:=
  C^\infty\big(
    \pi_0 \shape \mathrm{PhsSp}
    ,
    \mathbb{C}
  \big)_{\mathrm{cpt}}
  \;\simeq\;
  \mathbb{C}\big[
    \pi_0 \shape \mathrm{PhsSp}
  \big]
  \;\simeq\;
  H_0\big(
    \shape \mathrm{PhsSp}
    ;
    \mathbb{C}
  \big)
  \,.
\end{equation}
Now, if spacetime $X^{1,4}$ has an ``M-fiber'', in that it factors as (cf. Fig. \ref{SpacetimeDomain})
\begin{equation}
  \label{TheMTheoryDirection}
  X^4 
  \simeq
  X^3 \times \mathbb{R}^1
  \mathrlap{\,,}
\end{equation}
so that 
\begin{equation}
  \begin{aligned}
  \label{TheCompactifiedMTheoryDirection}
  X^4_{\cpt}
  & \simeq
  X^3_{\cpt} \wedge \mathbb{R}^1_{\cpt}
  \\
  & \simeq
  X^3_{\cpt} \wedge S^1 
  \,,
  \end{aligned}
\end{equation}
then these observables inherit a non-abelian \emph{Pontrjagin algebra} structure 
\cite{nLab:PontrjaginRing}
from convolution along the M-theory circle, being the group $C^\ast$-algebra algebra 
\begin{equation}
  \label{AlgebraOfQuantumObservables}
  \begin{aligned}
  \mathrm{Obs}[X^3]_{\mathrm{top}}
  & \simeq
  \mathbb{C}\Big[
    \pi_0  
    \mathrm{Map}^\ast\big(
      \mathbb{R}^1_{\cpt} \wedge  X^3_{\cpt} 
      ,
      \hotype{A}
    \big)
  \Big]
  \\
  & 
  \simeq
  \mathbb{C}\Big[
    \pi_1  
    \mathrm{Map}^\ast\big(
      X^3_{\cpt} 
      ,
      \hotype{A}
    \big)
  \Big]  
  \,,
  \end{aligned}
\end{equation}
 of the fundamental group of the space of pointed maps $X^3_{\cpt} \xrightarrow{} S^2$. As such, $\mathrm{Obs}[X^3]_{\mathrm{top}}$ is the \emph{algebra of topological quantum observables} on the system \cite{SS24-Obs}.

\subsubsection{The 2D TQFT on the M-sheet}
\label{The2DTQFTOnTheMFiber}

In terms of topological quantum field theory (TQFT, cf. \cite{Cohen2008, nLab:TQFT}) we may naturally understand the algebra \eqref{AlgebraOfQuantumObservables} of topological quantum observables as follows (cf. Figs. \ref{HCFT} and \ref{SpacetimeDomain}): 

\begin{figure}[htb]
\caption{
  \label{HCFT}
  The topological dynamics (of gauge fields flux-quantized in $\hotype{A}$-cohomology, \S\ref{The2DTQFTOnTheMFiber}) on a 5D spacetime with an ``M-theoretic'' 5th dimension, $X^{1,4} \simeq \mathbb{R}^{1,0} \times \mathbb{R}^1_M  \times X^3$, yields a $D$=2 genus=0 open TQFT on the ``M-sheet'' $\mathbb{R}^{1,0} \times \mathbb{R}^1_M$, which is the ``HCFT'' induced by \emph{open string topology operations} over the moduli space of topological charges on $X^3$. 
  Here the solitonic nature of the topological charges, namely their \emph{vanishing at infinity}, entails that the spatial ends $\infty$ of the ``M-sheet'' $\mathbb{R}^{1,0} \times \mathbb{R}^1_M$ are stuck on the ``0-brane'' locus $\infty \mapsto \{\ast\} \hookrightarrow \mathrm{Map}^\ast\big(X^3_{\cup\{\infty\}},\hotype{A}\big)$ in the moduli space, which reduces the string product to the Pontrjagin product (cf. \cite[Fig. 11]{SS25-Srni}).
}

\centering

\adjustbox{
  rndfbox=5pt
}{
$
  \begin{aligned}
  \grayunderbrace{
    \mathrm{TQFT}_{1+1}[X^3]
  }{
    \mathclap{
    \substack{
      \scalebox{.7}{5D topological dynamics}
      \\
      \scalebox{.7}{seen on the M-sheet}
    }
    }
  }
  :
  \scalebox{1}{M-fiber}
  &
  \;\;\;\longmapsto\;\;\;
  \grayunderbrace{
  H_0\Big(
  \grayoverbrace{
    \Omega_\ast
    \mathclap{\phantom{X^3_{\cpt}}}
  }{
    \mathclap{
    \substack{
      \scalebox{.7}{Open}
      \\
      \scalebox{.7}{string}
      \\
      \scalebox{.7}{topology}
    }
    }
  }
  \grayoverbrace{
  \mathrm{Map}^\ast\big(
    X^3_{\cpt}
    ,
    \hotype{A}
  \big)
  }{
    \substack{
      \scalebox{.7}{Moduli space of charges}
      \\
      \scalebox{.7}{on $X^3$ quantized}
      \\
      \scalebox{.7}{in $\hotype{A}$-cohomology}
    }
  }
  ;
  \mathbb{C}
  \Big)
  }{
    \mathrm{Obs}[X^3]_{\mathrm{top}}
  }
  \\
\begin{tikzpicture}[
  baseline=(current bounding box.center)
]

\def\width{2}

\draw[
  line width=.5,
  draw=black!70,
  fill=gray!30
] 
  (-\width, 0) 
    to[bend left=20]
  (+\width,0) 
    to[bend left=18]
  (0, -1)
    to[bend left=18]
  (-\width,0); 

\fill[white] (-\width,0) circle (.15);
\fill[white] (+\width,0) circle (.15);
\fill[white] (0,-1) circle (.15);
\node[scale=1] at (-\width,0) {$\infty$};
\node[scale=1] at (+\width,0) {$\infty$};
\node[scale=1] at (0,-1) {$\infty$};

\begin{scope}[
  shift={(-.3-.05,-.48-.05)}
]
\draw[
  -Stealth,
  gray
] 
  (-.2,0) 
    to[
      "$\mathbb{R}^1_M$"{
        near end,
        scale=.7
      },
      "{ M-fiber }"{
        swap,
        scale=.7
      }
    ]
  (1.3,0);
\draw[
  -Stealth,
  gray
] 
  (0,-.2) 
    to[
      "{ time }"{
        scale=.7
      },
      "{
        $\mathbb{R}^{1,0}$
      }"{
        swap,
        near end,
        scale=.7
      }
    ]
  (0,.8);
\end{scope}
\end{tikzpicture}
&
\;\;\;\longmapsto\;\;\;
\begin{tikzcd}
  \mathrm{Obs}[X^3]_{\mathrm{top}}
  \\
  \mathrm{Obs}[X^3]_{\mathrm{top}}
  \otimes
  \mathrm{Obs}[X^3]_{\mathrm{top}}
  \ar[
    u,
    "{\color{darkgreen} 
      \scalebox{.7}{
        Pontrjagin
      }
    }",
    "{\color{darkgreen} 
      \scalebox{.7}{
        product
      }
    }"{swap},
  ]
\end{tikzcd}
  \end{aligned}
$
}

\end{figure}
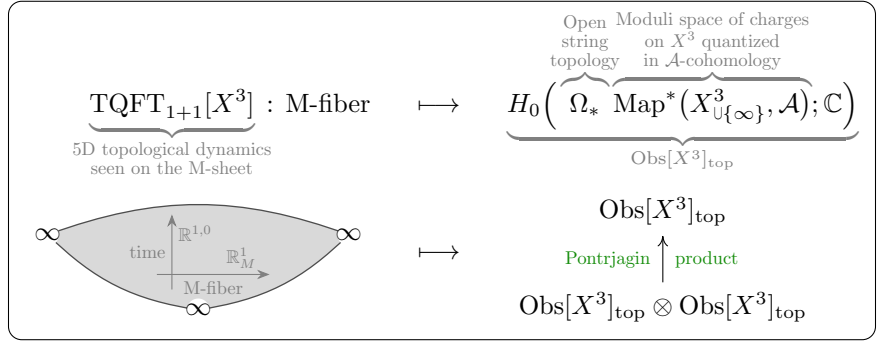

By regarding the above spacetime factorization \eqref{TheCompactifiedMTheoryDirection} as KK-compactification \emph{on} $X^3$, we are to be left with a $D=1+1$ TQFT of an ``open string''-like M-fiber propagating in the moduli space of topological fluxes on $X^3$, with endpoints attached there to the ``0-brane'' corresponding to the vanishing flux configuration.

Such topological open string propagating in topological target spaces are naturally quantized by \emph{open string topology HCFT}  \cite{Sullivan2005, Godin2007, Kupers2011},  and for the case of such open strings attached to a 0-brane, the string topology operation coincides \cite[Ex. 3]{Sullivan2005} 
\footnote{
  To see this in detail, use \cite[p 136-7]{Kupers2011}: 
  In the notation there, for $A = B = C := \{\ast\}$ a zero brane, the map ``$M^i$'' on p. 136 there is an isomorphism, and the map ``$M^j$'' (beware the typo in the arrow orientation there) is loop concatenation, whence the claim that the string product coincides with the Pontrjagin product follows from the formula for $\mu_{a b c}$ on p. 137 there. 
  
  To note here that the string product in general exists only for strings propagating in closed manifolds (or at least in Poincar{\'e} duality spaces, such as classifying spaces of compact Lie groups), but that the Pontrjagin product exists for string propagation in any topological space, such as in our case the moduli space $\mathrm{Map}^\ast\big(X^3_{\cpt}, \hotype{A}\big)$ of charges on $X^3$. This is why we manifestly only get an open genus=0 HCFT here, where the worldsheets are all glued from those shown in Fig. \ref{HCFT}.
}
with the Pontrjagin algebra structure \eqref{AlgebraOfQuantumObservables}, 
cf. again Fig. \ref{HCFT}.

\subsubsection{The homogeneous and microscopic observables}

With the group algebra of $\pi_1$ of  the moduli space being the topological observables $\mathrm{Obs}[X^3]_{\mathrm{top}}$ \eqref{SurfaceQuantumObservablesRecalled}, the fundamental group itself is the set of \emph{homogeneous} topological observables, $\mathrm{Obs}[X^3]_{\mathrm{top}}^{\mathrm{homog}}$ (those not arising as a nontrivial sum of basis elements), and the actual loops are the \emph{microscopic} (as opposed to topological) homogeneous observables:
\begin{equation}
  \label{HomogeneousAndMicroscopicObservables}
  \begin{tikzcd}[
    row sep=-2pt, 
    column sep=15pt
  ]
    \mathrm{Obs}[X^3]
      _{\mathrm{micro}}
      ^{\mathrm{homog}}
    \ar[
      r,
      ->>
    ]
    &
    \mathrm{Obs}[X^3]
      _{\mathrm{top}}
      ^{\mathrm{homog}}
    \ar[
      r,
      hook
    ]
    &
    \mathrm{Obs}[X^3]
      _{\mathrm{top}}
    \\
    \rotatebox[origin=c]{-90}{$:=$}
    &
    \rotatebox[origin=c]{-90}{$:=$}
    &
    \rotatebox[origin=c]{-90}{$:=$}
    \\
    \Omega
    \,
    \mathrm{Map}^\ast\big(
      X^3_{\cpt}
      ,
      \hotype{A}
    \big)  
    \ar[
      r,
      "{ [-] }"
    ]
    &
    \pi_1
    \,
    \mathrm{Map}^\ast\big(
      X^3_{\cpt}
      ,
      \hotype{A}
    \big)  
    \ar[
      r,
      hook
    ]
    &
    \mathbb{C}\Big[
    \pi_1
    \,
    \mathrm{Map}^\ast\big(
      X^3_{\cpt}
      ,
      \hotype{A}
    \big)  
    \Big]
    \mathrlap{\,.}
  \end{tikzcd}
\end{equation}

\subsubsection{The Topological Dimensional Reduction}
\label{TheTopologicalDimensionalReduction}

We are finally interested in effective reduction to 1+2 dimensions, whence we furthermore consider the situation (cf. Fig. \ref{SpacetimeDomain}) that 3D space is a cylinder over surface $\Sigma^2$,
\begin{equation}
  X^3 
  \simeq
  \Sigma^2
  \times
  [0,1]
  \,.
\end{equation}

\begin{figure}[htb]
\caption{
  \label{SpacetimeDomain}
  The topology of the 5D spacetime considered in \S\ref{TheTopologicalDimensionalReduction}. Reduction on the M-fiber brings it down to 1+3 dimensions and induces the Pontrjagin convolution algebra structure on the topological quantum observables \eqref{AlgebraOfQuantumObservables}, cf. Fig. \ref{HCFT}. Further reduction on the interval models the effective 2 dimensional spatial situation (e.g. of a 2D electron gas in a fractional quantum Hall system) in which one may expect Wilson line observables (of anyon braiding phases), cf. \S\ref{DimensonalReductionTo3D}.
}

\centering

\adjustbox{
  rndframe=4pt
}{
  \begin{tikzcd}[sep=0pt]
    &&&
    \mathclap{
      \grayoverbrace{\phantom{--}}{
        \adjustbox{scale=.8}{
          time
       }
      }
    }
    &&
    &&&
    \mathclap{
      \;\;
      \grayoverbrace{\phantom{---------}}{
        \adjustbox{scale=.8}{
          3d space
       }
      }
    }
    \\[-10pt]
    \scalebox{.7}{
      \def\arraystretch{.9}
      \begin{tabular}{c}
        spacetime
        \\
        domain
      \end{tabular}
    }
    &
    X^{1,4}
    & 
    \simeq
    &
    \mathbb{R}^{1,0}
    &\times&
    \mathbb{R}^1
    &\times&
    \Sigma^2
    &\times&
    {[0,1]}
    \\
    \scalebox{.7}{
      \def\arraystretch{.9}
      \begin{tabular}{c}
        space including 
        \\
        point-at-infinity
      \end{tabular}
    }
    &
    X^{4}_{\cpt}
    & 
    \simeq
    &
    &&
    S^1
    &\wedge&
    \Sigma^2_{\cpt}
    &\wedge&
    {[0,1]_{\plus}}
    \\[-15pt]
    &&&&&
    \mathclap{
      \grayunderbrace{\phantom{--}}{
        \adjustbox{scale=.8}{
          M-fiber
       }
      }
    }
  \end{tikzcd}
  \;\;
}
\end{figure}

Since the interval $[0,1]$ is contractible, it does not contribute in \eqref{AlgebraOfQuantumObservables}
and the final answer for the topological quantum observables in this situation is the $C^\ast$-algebra
\begin{equation}
  \label{QuantumObservablesOnSurface}
  \mathrm{Obs}[X^3]_{\mathrm{top}}
  \simeq
  \mathbb{C}\Big[
    \pi_1
    \mathrm{Map}^\ast\big(
      \Sigma^2_{\cpt}
      ,
      \hotype{A}
    \big)
  \Big]
  \,.
\end{equation}

This is the second remarkable consequence of proper flux quantization: That it completely determines the algebra of topological quantum observables.

Remarkably, we will see (in \S\ref{DerivingChernSimonsQuantumObservables} below) that for 2-cohomotopical flux quantization, $\hotype{A} := S^2$,
equation \eqref{QuantumObservablesOnSurface} knows everything about the traditionally expected 3D Chern-Simons quantum observables (\S\ref{TheKnotTheoreticExpression}). 
While this is just a fact, it may seem surprising given that we have found \eqref{QuantumObservablesOnSurface}, in this case, as the algebra of quantum observables of a completion of 5D Maxwell-Chern-Simons theory.

On the other hand, we have already seen in \S\ref{DimensonalReductionTo3D} that, classically, the latter does reduce to the former if the fiber components of the flux density is suitably constrained. Here we show that this constraint does not change the homotopy type (in physics jargon: ``the topology'') of the phase space \eqref{ShapeOfSolitonicPhaseSpace} and therefore does not change the topological quantum observables. 

This serves to explain ``why'' the quantum observables of an unconstrained completion of 5D Maxwell-Chern-Simons theory may subsume those of 3D Chern-Simons theory.

Concretely, with the \emph{smooth set} (0-truncated smooth stack, cf. \cite{Schreiber2025, GS25-FieldsI}) of solutions \eqref{SolutionSetOf5DMaxwellChernSimons},
denoted in boldface
\begin{equation}
  \mathbf{\Omega}^1_{\mathrm{dR}}\big(
    X^4;
    \mathfrak{l}S^2
  \big)_{\mathrm{cl}}
  :=
  \mathrm{Map}\Big(
    X^4
    ,\,
    \Omega^1_{\mathrm{dR}}\big(
      -;
      \mathfrak{l}S^2
    \big)_{\mathrm{cl}}
  \Big)
  \,,
\end{equation}
a constraint on the flux densities corresponds to a subobject 
\begin{equation}
  \label{ConstraintSubobject}
  \begin{tikzcd}
  \mathbf{\Omega}^1_{\mathrm{dR}}\big(
    X^4
    ;
    \mathfrak{l}S^2
  \big)_{\mathrm{cl}}^{\mathrm{constr}}
  \ar[r, hook]
  &
  \mathbf{\Omega}^1_{\mathrm{dR}}\big(
    X^4
    ;
    \mathfrak{l}S^2
  \big)_{\mathrm{cl}}
  \end{tikzcd}
\end{equation}
and the correspondingly constrained phase space $\mathrm{PhsSp}_{\mathrm{sol}}^{\mathrm{constr}}$ is the homotopy pullback of the plain phase space \eqref{ShapeOfSolitonicPhaseSpace}
along this inclusion:
\begin{equation}
  \label{ConstrainedPhaseSpace}
  \begin{tikzcd}
    \mathrm{PhsSp}
      ^{\mathrm{constr}}
    \ar[
      rr
    ]
    \ar[
      d
    ]
    \ar[
      dr,
      phantom,
      "{ \lrcorner }"{pos=.1}
    ]
    &&
    \mathrm{PhsSp}
    \ar[
      d,
      "{
        (B_2, E_3)_\ast
      }"
    ]
    \\
    \mathbf{\Omega}^1_{\mathrm{dR}}\big(
      X^4
      ;
      \mathfrak{l}S^2
    \big)
      _{\mathrm{cl}}
      ^{\mathrm{constr}}
    \ar[
      rr,
      hook
    ]
    &{}&
    \mathbf{\Omega}^1_{\mathrm{dR}}\big(
      X^4
      ;
      \mathfrak{l}S^2
    \big)
      _{\mathrm{cl}}
    \,.
  \end{tikzcd}
\end{equation}

In our case, 
\begin{equation}
  X^4 = 
  \Sigma^2 
    \times 
  \grayunderbrace{[0,1]}{V^1} 
    \times 
  \grayunderbrace{\mathclap{\phantom{[}}\mathbb{R}}{L^1}
\end{equation}
(Fig. \ref{SpacetimeDomain}) and the constraint is that enforcing equations
\eqref{DimRedDecompositionOfFluxDensity} and \eqref{DimRedFluxCompactification}.
While this constraint clearly changes the phase space \eqref{ConstrainedPhaseSpace}, we claim that it does not affect its shape, and hence not the topological quantum observables:
\begin{proposition}
  \label{ShapeOfConstrainedPhaseSpace}
  In the above situation, for $\Sigma^2 = \mathbb{R}^2$, the comparison map in \eqref{ConstrainedPhaseSpace} is a bijection on connected components:
  \begin{equation}
    \begin{tikzcd}
      \pi_0
      \shape
      \mathrm{PhsSp}^{\mathrm{constr}}
      \ar[
        r,
        "{ \weakHomotopyEquivalence }"
      ]
      &
      \pi_0
      \shape
      \mathrm{PhsSp}     \,. 
    \end{tikzcd}
  \end{equation}
\end{proposition}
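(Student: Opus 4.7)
The plan is to analyze the shape of the defining pullback square \eqref{ConstrainedPhaseSpace} by computing the shapes of each vertex separately, and then comparing $\pi_0$ directly along the top horizontal map.

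First, I would observe that the smooth set $\mathbf{\Omega}^1_{\mathrm{dR}}(X^4;\mathfrak{l}S^2)_{\mathrm{cl}}$ of closed $\mathfrak{l}S^2$-valued flux densities has contractible shape, via the smooth linear deformation $(B_2, E_3) \mapsto (t B_2, t E_3)$, $t \in [0,1]$, which retracts any closed flux to zero. Hence the right-hand vertical map in \eqref{ConstrainedPhaseSpace} is null on shapes, which re-derives the formula $\shape \mathrm{PhsSp} \simeq \shape \mathrm{Map}^\ast(X^4_{\cpt}, S^2)$ of \eqref{ShapeOfSolitonicPhaseSpace}. The same argument largely applies to $\shape \mathbf{\Omega}^{\mathrm{constr}}$: the linear part of the constraint \eqref{DimRedDecompositionOfFluxDensity}, requiring $F_2 = B_2 + E_0\, \mathrm{d}t \wedge \mathrm{d}v$ with $B_2, E_0$ pulled back from $X^3 = S^2 \times [0,1]$, again cuts out an affinely contractible sub-smooth-set. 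The only potential obstruction to connectedness is the open flux-compactification condition \eqref{DimRedFluxCompactification} that $E_0$ be nowhere vanishing, which a priori splits the constrained flux sheaf into the two sectors $E_0 > 0$ and $E_0 < 0$.

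Second, I would compare $\pi_0$'s of the top horizontal map in \eqref{ConstrainedPhaseSpace}. Anticipating the Pontrjagin-Thom identification $\pi_0 \shape \mathrm{PhsSp} \simeq \pi^2(X^4_{\cpt})$ invoked in the following subsection, every 2-cohomotopy class admits an explicit constrained representative, obtained by pulling back a classifying map $X^3 \to S^2$ along the projection $X^4 \to X^3$ and dressing it with a constant nowhere-vanishing background $E_0$; this yields surjectivity on $\pi_0$. Injectivity would follow by a smoothing/averaging argument: any smooth path of solutions in the full phase space connecting two constrained configurations can be averaged over the reduction fiber $V^1$ and the time direction to produce a path entirely within the constrained phase space.

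\textbf{Main obstacle.} The crux is reconciling the open flux-compactification condition with $\pi_0$: one must verify that the two signs of $E_0$ do not introduce spurious components in $\pi_0 \shape \mathrm{PhsSp}^{\mathrm{constr}}$ beyond those already present in $\pi_0 \shape \mathrm{PhsSp}$. The restriction $\Sigma^2 = S^2$ should enter precisely at this point, the two sign sectors being identified via the orientation-reversing symmetry of the reduction fiber $V^1$, which acts trivially on the 2-cohomotopy class labelling the topological sector.
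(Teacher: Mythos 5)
Your opening step contains the decisive gap: the claim that $\shape\,\mathbf{\Omega}^1_{\mathrm{dR}}\big(X^4;\mathfrak{l}S^2\big)_{\mathrm{cl}}$ is contractible via the scaling $(B_2,E_3)\mapsto(t B_2, t E_3)$. First, this scaling does not even preserve the nonlinear Bianchi identity $\mathrm{d}E_3 = B_2\wedge B_2$ (one would need $(t B_2, t^2 E_3)$). More importantly, even the corrected scaling is not a morphism of smooth sets once $t$ varies smoothly over a plot, since $\mathrm{d}(t B_2) = \mathrm{d}t\wedge B_2 + t\,\mathrm{d}B_2 \neq 0$; only constant $t$ preserves closure, and that does not yield a homotopy in $\mathrm{SmthGrpd}_\infty$. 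In fact the shape of this flux smooth set is \emph{not} contractible: by the smooth Oka principle \eqref{SmoothOkaPrinciple} and $\shape\,\Omega^1_{\mathrm{dR}}(-;\mathfrak{l}\hotype{A})_{\mathrm{cl}}\weakHomotopyEquivalence L^{\mathbb{R}}\hotype{A}$, the solitonic flux space on $\mathbb{R}^3_{\cpt}$ has shape $\mathrm{Map}^\ast\big(S^3, L^{\mathbb{R}}S^2\big)$ \eqref{IncarnationsOfRightHandSide}, whose $\pi_0$ is the nonabelian de Rham cohomology $\simeq \mathbb{R}$ (the rational Hopf invariant); the same fallacious retraction would ``prove'' that $\mathrm{Map}(X,\Omega^2_{\mathrm{cl}})$ is contractible, contradicting $\pi_0 \simeq H^2_{\mathrm{dR}}(X)$. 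This non-contractibility is not a technicality: the paper's proof hinges on it. The actual argument (Lem. \ref{IfConstrainedFluxIsEquivalenceThenSoIsConstrainedPhaseSpace} and Lem. \ref{ProvingShapeOfConstrainedFluxes}) uses the pasting law and preservation of fiber products over shapes \eqref{ShapeOfFiberProductsOverShapes} to exhibit $\shape\,\mathrm{PhsSp}^{\mathrm{constr}}$ as the homotopy pullback \eqref{ShapeOfConstrainedPhaseSpaceAsPullbackOfConstrainedFluxes} of $\shape\,\mathrm{PhsSp}$ along $\shape\iota$ of flux spaces, and then verifies that the ambient flux space is homotopy-\emph{discrete} (higher homotopy vanishes because $\pi_{3+n}(S^2)\otimes\mathbb{R}=0$ for $n\geq 1$) while the constrained inclusion \eqref{ConstraintSubobject} is a bijection onto its $\pi_0\simeq\mathbb{R}$. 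Note also that \eqref{ShapeOfSolitonicPhaseSpace} is derived from idempotency and preservation of fiber products over shapes, not from any nullity of the flux projection.

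Your second step does not close this gap. Surjectivity via constrained Pontrjagin--Thom representatives is plausible, but the injectivity argument by ``averaging'' a path of solutions over $V^1$ and time is not a proof: averaging is linear while the Gauss law $\mathrm{d}E_3 = B_2\wedge B_2$ is quadratic, so the averaged family need not solve it, nor need it retain the nowhere-vanishing of $E_0$; moreover a point of $\mathrm{PhsSp}$ carries not only flux forms but also the cohomotopy cocycle $\chi$ together with the homotopy filling \eqref{DiffCohomologyPullback}, and your argument never tracks that datum. Finally, the role of the hypothesis on the spatial topology is different from your guess about the two signs of $E_0$: what is needed is precisely the vanishing of the higher homotopy of the ambient flux space (so that the $\pi_0$-comparison of flux spaces controls the $\pi_0$-comparison of phase spaces), which is what the rational computation supplies in this case.
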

\begin{proof}
  It is sufficient 
  (cf. Lem. \ref{IfConstrainedFluxIsEquivalenceThenSoIsConstrainedPhaseSpace})
  to see that this is already the case for the constraint subobject \eqref{ConstraintSubobject} of flux densities
  \begin{equation}
    \label{ShapeComparisonOfConstrained5DMCSFluxes}
    \begin{tikzcd}[
      sep=25pt
    ]
      \pi_0
      \shape
      \mathbf{\Omega}^1_{\mathrm{dR}}\big(
        \mathbb{R}^3_{\cpt} 
          \wedge 
        [0,1]_{\plus};
        \mathfrak{l}S^2
      \big)
        _{\mathrm{cl}}
        ^{\mathrm{constr}}
      \ar[
        r,
        "{ \weakHomotopyEquivalence }"
      ]
      &
      \pi_0
      \shape
      \mathbf{\Omega}^1_{\mathrm{dR}}\big(
        \mathbb{R}^3_{\cpt} 
        ;
        \mathfrak{l}S^2
      \big)
        _{\mathrm{cl}}
      \mathrlap{\,,}
    \end{tikzcd}
  \end{equation}
  if there is no higher homotopy on the right.
  This turns out to be the case, by Lem. \ref{ProvingShapeOfConstrainedFluxes}.
  We have compiled the details of the argument in \S\ref{ProofOfShapeOfConstrainedPhaseSpace}
\end{proof}

In consequence, Prop. \ref{ShapeOfConstrainedPhaseSpace} entails that the topological quantum observables 
\eqref{AlgebraOfQuantumObservables}
of 5D Maxwell-Chern-Simons theory, completed by proper flux quantization in 2-Cohomotopy, do not distinguish between solitonic field configurations that do or do not satisfy the flux-compactification constraints 
\eqref{DimRedDecompositionOfFluxDensity} and \eqref{DimRedFluxCompactification}.
But since we have seen, in Prop. \ref{How5DMCSReducesTo3DCS}, that classically the flux configurations that do satisfy these constraints dimensionally reduce to 3D Chern-Simons theory, this suggests that the topological quantum observables of the completed 5D theory also reduce to those of 3D Chern-Simons theory. 

That this is indeed the case is the topic of the next \S\ref{DerivingChernSimonsQuantumObservables}.

\subsection{Deriving Chern-Simons Quantum Observables}
\label{DerivingChernSimonsQuantumObservables}

We have thus constructed a completion of the topological sector of 5D Maxwell-Chern-Simons theory, whose topological quantum observables on a globally hyperbolic spacetime with an ``M''-direction over 3D space $X^3$ form the group algebra
\begin{equation}
  \label{QuantumObservablesRecalled}
  \mathrm{Obs}[X^3]_{\mathrm{top}}
  \simeq
  \mathbb{C}\Big[
    \pi_1
    \mathrm{Map}^\ast\big(
      X^3_{\cpt}
      ,
      S^2
    \big)
  \Big]
  \,,
\end{equation}
and we have shown that for 3D space being the cylinder over a surface $\Sigma^2$, in which case the formula reduces to 
\begin{equation}
  \label{SurfaceQuantumObservablesRecalled}
  X^3 = \Sigma^2 \times [0,1]
  \;\;\;\;\;
  \Rightarrow
  \;\;\;\;\;
  \mathrm{Obs}[X^3]_{\mathrm{top}}
  \simeq
  \mathbb{C}\Big[
    \pi_1
    \mathrm{Map}^\ast\big(
      \Sigma^2_{\cpt}
      ,
      S^2
    \big)
  \Big]
  \,,
\end{equation}
this is also a completed quantization of 3D abelian Chern-Simons theory on $(\mathbb{R}^{1} \times \Sigma_2)_{\cpt}$.

In any case, it is now a matter of direct algebro-topological analysis (no further choices involved) to compute \eqref{SurfaceQuantumObservablesRecalled}. This has been done in \cite{SS25-AbelianAnyons, SS25-FQH} and we review these results now. They happen to reproduce expected Chern-Simons observables in fine detail.

\subsubsection{Flux Solitons from the Pontrjagin Theorem}
\label{FluxSolitonsFromThePontrjaginTheorem}

In order to understand the evaluation of \eqref{SurfaceQuantumObservablesRecalled}, hence the fundamental group $\pi_!$ of the moduli space, it is helpful to first consider the situation on $\pi_0$. Here
\begin{equation}
  \label{Reduced2Cohomotopy}
  \widetilde \pi^2\big(\Sigma^2_{\cpt}\big)
  :=
  \pi_0\,
  \mathrm{Map}^\ast\big(
    \Sigma^2_{\cpt},
    S^2
  \big)
\end{equation}
is the \emph{2-Cohomotopy} (reduced, i.e., vanishing-at-infinity) of the surface $\Sigma^2$, in our context now understood as the set of possible \emph{charge sectors} of the flux quantized gauge theory. The (unstable) \emph{Pontrjagin theorem} (cf. \cite[\S II.16]{Bredon1993}\cite[p. 13 \& \S 3.2]{SS23-MF}\cite{nLab:PontrjaginTheorem}) identifies such charge sectors with (normally framed) cobordism classes of (codimension=2) submanifolds  of configurations of cores of field solitons that source such charge (cf. Fig. \ref{SolitonsViaPontrjaginTheorem}):
\begin{equation}
  \label{PontrjaginTheorem}
  \begin{tikzcd}
  \pi_0\,
  \mathrm{Map}^\ast\big(
    \Sigma^2_{\cpt},
    S^2
  \big) 
  \ar[
    rr,
    "{ \sim }",
    "{
      \scalebox{.7}{Pontrjagin}
    }"{swap}
  ]
  &&
  \left\{
  \substack{
    \scalebox{.7}{Cobordism classes of}
    \\
    \scalebox{.7}{signed configurations of}
    \\
    \scalebox{.7}{cores of field solitons}
  }
  \right\}
  \end{tikzcd}
\end{equation}

\begin{figure}[htb]
\caption{
  \label{SolitonsViaPontrjaginTheorem}
  The Pontrjagin theorem \eqref{PontrjaginTheorem} identifies charge  measured in 2-Cohomotopy with framed tubuluar neighbourhoods of codimension=2 submanifolds. Under flux quantization \eqref{The2CohomotopicalPhaseSpace}, these tubular neighbourhoods are the support of solitonic flux quanta.
}
\centering

 \adjustbox{
   rndfbox=5pt,
    scale=1
  }{
\begin{tikzpicture}
\begin{scope}[
  scale=.8,
  shift={(.7,-4.9)}
]

\draw[
  line width=.8,
  ->,
  darkgreen
]
  (1.5,1) 
  .. controls (2,1.6) and (3,2.6) .. 
  (6,1);

\node[
  scale=.7,
  rotate=-18
] at (4.7,1.8) {
  \color{darkblue}
  \bf
  \def\arraystretch{.85}
  \begin{tabular}{c}
    charge
    \\
    classifying map
    \\
    {}
  \end{tabular}
};

\node[
  scale=.7,
  rotate=-18
] at (4.5,1.4) {
  \color{darkblue}
  \bf
  $n$
};

  \shade[
    right color=gray, left color=lightgray,
    fill opacity=.9
  ]
    (3,-3)
      --
    (-1,-1)
      --
        (-1.21,1)
      --
    (2.3,3);

  \draw[dashed]
    (3,-3)
      --
    (-1,-1)
      --
    (-1.21,1)
      --
    (2.3,3)
      --
    (3,-3);

  \node[
    scale=1
  ] at (3.2,-2.1)
  {$\infty$};

  \begin{scope}[rotate=(+8)]
  \shadedraw[
    dashed,
    inner color=olive,
    outer color=lightolive,
  ]
    (1.5,-1)
    ellipse
    (.2 and .37);
  \draw
   (1.5,-1)
   to 
    node[above, yshift=-1pt]{
     \;\;\;\;\;\;\;\;\;\;\;
     \rotatebox[origin=c]{7}{
     \scalebox{.7}{
     \color{darkorange}
     \bf
       anyon
     }
     }
   }
    node[below, yshift=+6.3pt]{
     \;\;\;\;\;\;\;\;\;\;\;\;
     \rotatebox[origin=c]{7}{
     \scalebox{.7}{
     \color{darkorange}
     \bf
       worldline
     }
     }
   }
   (-2.2,-1);
  \draw
   (1.5+1.2,-1)
   to
   (4,-1);
  \end{scope}

  \begin{scope}[shift={(-.2,1.4)}, scale=(.96)]
  \begin{scope}[rotate=(+8)]
  \shadedraw[
    dashed,
    inner color=olive,
    outer color=lightolive,
  ]
    (1.5,-1)
    ellipse
    (.2 and .37);
  \draw
   (1.5,-1)
   to
   (-2.3,-1);
  \draw
   (1.5+1.35,-1)
   to
   (4.1,-1);
  \end{scope}
  \end{scope}
  \begin{scope}[shift={(-1,.5)}, scale=(.7)]
  \begin{scope}[rotate=(+8)]
  \shadedraw[
    dashed,
    inner color=olive,
    outer color=lightolive,
  ]
    (1.5,-1)
    ellipse
    (.2 and .32);
  \draw
   (1.5,-1)
   to
   (-1.8,-1);
  \end{scope}
  \end{scope}
  
\end{scope}

\node[
  scale=.73,
  rotate=-27
] at (2.21,-5.21) {
  \color{darkblue}
  \bf
  \def\arraystretch{.9}
  \begin{tabular}{l}
    flux
    \\
    $F_2 =$ 
    \\
    $\;\;n^\ast(\mathrm{dvol}_{S^2})$
  \end{tabular}
};

\node[
  scale=.73,
] at (2.05,-5) {
  \color{darkblue}
  \bf
};

\node[
  rotate=-140
] at (6,-4) {
  \includegraphics[width=2cm]{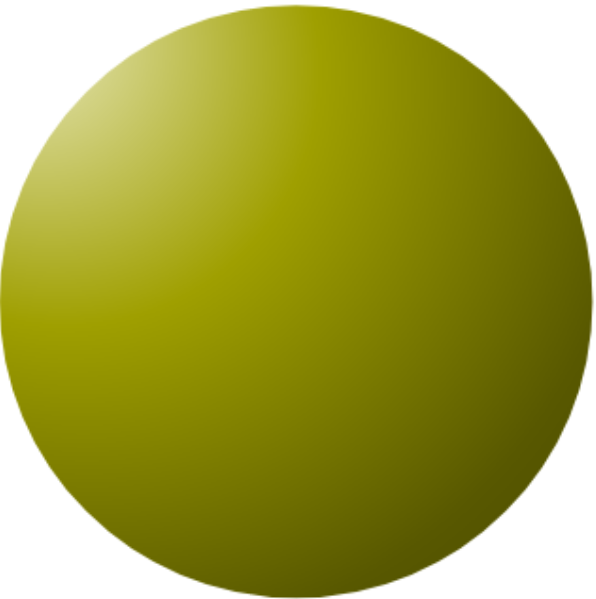}
};

\node[
  scale=.7
] at (6,-5.2) {
  \color{darkblue}
  \bf
  2-sphere $S^2$
};

\node[
  scale=.8,
  rotate=-22
] at (.05,-4.55)
{$\Sigma^2$};
 
\end{tikzpicture}
}

\end{figure}

\subsubsection{Soliton Processes from the Segal-Okuyama Theorem}

From the Pontrjagin theorem (\S\ref{FluxSolitonsFromThePontrjaginTheorem}) it is already plausible that the elements of the fundamental group in \eqref{SurfaceQuantumObservablesRecalled}, are like loops (vacuum-to-vacuum processes) in the configuration space of flux solitons --- and indeed they will be recognized as the Wilson loops!
Remarkably, a precise analysis reveals that these Wilson loops emerge complete with the \emph{framing} data that traditionally is introduced in an \emph{ad hoc} fashion \eqref{FrameFieldOnLink}:

\begin{theorem}[{\cite{SS25-AbelianAnyons}}]
For solitonic flux on $\mathbb{R}^2$ quantized in 2-Cohomotopy:
\begin{enumerate} 
\item[\bf (i)]
the microscopic homogeneous quantum observables \eqref{HomogeneousAndMicroscopicObservables} are framed oriented links,
\item[\bf (ii)] 
the  topological pure quantum states $\vert K \rangle$ are labeled by $K \in \mathbb{R} \setminus 0$,
\item[\bf (iii)]
the expectation value of the former with respect to the latter are exactly those of renormalized Wilson loop in level $K/2$ Chern-Simons theory \eqref{WilsonLoopObservableFactorization},
\end{enumerate}
in that the following diagram commutes:
\begin{equation}
  \label{MainTheoremDiagram}
  \begin{tikzcd}[
    column sep=15pt,
    row sep=0pt
  ]
  \grayoverbrace{
  \Omega
  \,
  \mathrm{Map}^\ast\big(
    \mathbb{R}^2_{\cpt},
    S^2
  \big)
  }{
    \mathrm{Obs}[\mathbb{R}^2]
    ^{\mathrm{homog}}
    _{\mathrm{micro}}
  }
  \ar[
    r,
    ->>,
    "{
      [-]
    }"
  ]
  \ar[
    d,
    "{ \sim }"{sloped}
  ]
  &[5pt]
  \grayoverbrace{
  \pi_1\,
  \mathrm{Map}^\ast\big(
    \mathbb{R}^2_{\cpt},
    S^2
  \big)
  }{
    \mathrm{Obs}[\mathbb{R}^2]
      ^{\mathrm{homog}}
      _{\mathrm{top}}
  }
  \ar[
    r,
    "{
      \langle K \vert 
        -
      \vert K \rangle
    }"
  ]
  \ar[
    d,
    "{ \sim }"{sloped}
  ]
  &
  \mathbb{C}
  \ar[
    d, 
    equals
  ]
  \\[15pt]
  \mathrm{FrmdOrntdLnks}
  \ar[
    r,
    "{
      \mathrm{wrth}
    }"
  ]
  &
  \mathbb{Z}
  \ar[
    r,
    "{
      \exp\big(
        \tfrac
          {\pi \mathrm{i}}
          { K }
        (-)
      \big)
    }"
  ]
  &
  \mathbb{C}
  \\
    \gamma
    \ar[
      rr,
      |->, 
      shorten=15pt
    ]
    &&
   \Bigg\langle
      \!\!
      \exp\bigg(
        \mathrm{i}
        \textstyle{\underset{{(S^1)^n}}{\int}} 
        \!\!
        \gamma^\ast A
      \bigg)
    \! \Bigg\rangle .
  \end{tikzcd}
\end{equation}
\end{theorem}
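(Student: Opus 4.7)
The plan is to decompose the diagram \eqref{MainTheoremDiagram} into three vertical strips and verify commutativity of each, then match with the characterization of the pure states.

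\textbf{Step 1 (Microscopic observables $\longleftrightarrow$ framed links).} By the smash-hom adjunction, using $\mathbb{R}^2_{\cpt} \weakHomotopyEquivalence S^2$ and $S^1 \wedge S^2 \weakHomotopyEquivalence S^3$, I would rewrite
\begin{equation*}
  \Omega\,\mathrm{Map}^\ast\bigl(\mathbb{R}^2_{\cpt},\, S^2\bigr)
  \;\weakHomotopyEquivalence\;
  \mathrm{Map}^\ast\bigl(S^1 \wedge S^2,\, S^2\bigr)
  \;\weakHomotopyEquivalence\;
  \mathrm{Map}^\ast\bigl(S^3,\, S^2\bigr).
\end{equation*}
Now apply the unstable Pontrjagin--Thom construction (equivalently, the Segal--Okuyama scanning theorem referenced in Fig.~\ref{OutlineDiagram}), which provides a weak equivalence between this mapping space and the (group-completed) configuration space of normally-framed, oriented $1$-submanifolds of $\mathbb{R}^3$. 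Such a submanifold is precisely a framed oriented link. This gives the first vertical equivalence.

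\textbf{Step 2 (Topological observables $\longleftrightarrow$ writhe).} Taking $\pi_0$ on both sides and using the classical identification $\pi_1\,\mathrm{Map}^\ast(S^2,S^2) \cong \pi_3(S^2) \cong \mathbb{Z}$, the generator is the Hopf map. Under the Pontrjagin correspondence of Step 1, the Hopf invariant of a map $S^3 \to S^2$ equals the self-linking (i.e.\ framing plus linking) of the associated framed link --- this is Pontrjagin's original theorem and coincides with the combinatorial writhe \eqref{TotalLinkingNumber} via a blackboard-framed diagram, as recorded in \cite[Def.~2.10]{SS25-AbelianAnyons}. This verifies commutativity of the left square in \eqref{MainTheoremDiagram}.

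\textbf{Step 3 (Pure states and expectation values).} Since $\pi_1\,\mathrm{Map}^\ast(\mathbb{R}^2_{\cpt},S^2) \cong \mathbb{Z}$, the algebra of topological quantum observables \eqref{QuantumObservablesOnSurface} is the group $C^\ast$-algebra $\mathbb{C}[\mathbb{Z}] \cong C(\mathrm{U}(1))$ by Gelfand duality / Pontrjagin duality of abelian groups. Its pure states (characters) are labeled by points of $\mathrm{U}(1)$, which I would parametrize as $\exp(\pi\mathrm{i}/K)$ for $K \in \mathbb{R}\setminus 0$, matching the Chern--Simons level convention \eqref{TheLevel}--\eqref{HalfIntegralLevel}. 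The expectation value of a group element $n \in \mathbb{Z}$ in the state $|K\rangle$ is then $\exp(\pi\mathrm{i}\, n / K)$ by definition of a character. Composing with Step 2 identifies $\langle K | \gamma | K \rangle$ with $\exp(\pi\mathrm{i}\,\mathrm{wrth}(\gamma)/K)$, reproducing \eqref{AbelianCSWilsonLoopsInTermsOfFramingAndLinkingNumbers}.

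\textbf{Main obstacle.} The routine pieces are the adjunction (Step 1), the $\pi_3(S^2)$ computation (Step 2), and the state characterization (Step 3). The substantive content --- and the expected main obstacle --- is the Segal--Okuyama-type statement needed in Step 1: one must establish not merely a bijection on $\pi_0$ (given by Pontrjagin--Thom), but a \emph{weak equivalence} of spaces between $\mathrm{Map}^\ast(S^3, S^2)$ and a suitably group-completed configuration space of framed embedded $1$-submanifolds of $\mathbb{R}^3$, so that loops in the mapping space literally realize framed links (and not only their cobordism classes). For this I would invoke the scanning map of Segal extended by Okuyama to the framed/labeled setting, noting that the equivalence classes under concatenation of loops exactly correspond to isotopy classes of framed oriented links --- and that the classical Pontrjagin identification of the Hopf invariant with the self-linking number then completes the geometric interpretation. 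The detailed execution is carried out in \cite{SS25-AbelianAnyons}, from which the theorem is quoted.
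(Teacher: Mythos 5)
Your proposal is correct in outline and ends in the same place, but the middle of your argument runs along a genuinely different route than the paper's. You pass to $\Omega\,\mathrm{Map}^\ast\big(\mathbb{R}^2_{\cpt},S^2\big)\weakHomotopyEquivalence\mathrm{Map}^\ast\big(S^3,S^2\big)$ and then want a space-level Pontrjagin--Thom statement identifying this with a (group-completed) space of normally framed links in $\mathbb{R}^3$, after which the writhe appears as the classical identification of the Hopf invariant with the total self-linking number. The paper instead stays in two dimensions: it first applies Segal's scanning theorem, $\mathrm{Map}^\ast\big(\mathbb{R}^2_{\cpt},S^2\big)\weakHomotopyEquivalence\mathbb{G}\mathrm{Conf}(\mathbb{R}^2)$, then Okuyama's theorem $\mathbb{G}\mathrm{Conf}(\mathbb{R}^2)\weakHomotopyEquivalence\mathrm{Conf}^I(\mathbb{R}^2)$ replacing the group completion by a configuration space of intervals with signed endpoints; loops of such configurations are then \emph{literally} blackboard-framed oriented link diagrams, and $\pi_1\cong\mathbb{Z}$ together with the identification of the class with the writhe is obtained combinatorially, by saddle/birth--death moves reducing any loop to a disjoint union of $\pm1$-framed unknots (the generator). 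What the paper's route buys is exactly the point you flag as your main obstacle: the framing is not postulated via an unproven 3d space-level PT refinement but emerges from Okuyama's interval model, and the generator is exhibited geometrically (the unit-framed unknot, i.e.\ the basic anyon vacuum process). What your route buys is brevity and contact with classical facts ($\pi_3(S^2)$, Hopf invariant $=$ self-linking), and your use of the H-space structure of $\Omega^2 S^2$ makes the basepoint-component issue harmless. One caveat on attribution: the equivalence you need in Step 1 is not literally ``Segal extended by Okuyama'' --- Okuyama's theorem concerns interval configurations in the plane, not spaces of framed submanifolds of $\mathbb{R}^3$ --- so as written your Step 1 leans on a statement that is really only established (in the interval-configuration guise) in \cite{SS25-AbelianAnyons}, to which the paper's own proof outline also defers. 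Your Step 3 (pure states as unimodular characters of the commutative $C^\ast$-algebra $\mathbb{C}[\mathbb{Z}]$, parametrized as $\exp(\pi\mathrm{i}/K)$, $K\in\mathbb{R}\setminus\{0\}$) coincides with the paper's argument.
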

\begin{proof}[Proof outline.]
  The first step is the homotopical (or ``dynamical'') refinement of Pontrjagin's theorem \eqref{PontrjaginTheorem} due to Segal \cite[Thm. 1]{Segal1973}, identifying the homotopy type of our moduli space of cohomotopical charges with the \emph{group completion}, $\mathbb{G}(-) := \Omega B_{\sqcup}(-)$, of the \emph{configuration space of points} $\mathrm{Conf}(-)$ (cf. \cite{williams2020, Kallel2025, SS22-Conf}), in the plane:
  \begin{equation}
    \mathrm{Map}^\ast\big(
      \mathbb{R}^2_{\cpt}
      ,
      S^2
    \big)
    \weakHomotopyEquivalence
    \mathbb{G}\mathrm{Conf}\big(\mathbb{R}^2\big)
    \,.
  \end{equation}
  The effect of this group completion on $\pi_0$ is just to equip the points in the configurations with signs, signifying the difference between solitons and anti-solitons as already captured by the framing in Pontrjagin's theorem \eqref{PontrjaginTheorem}. On the other hand, the higher homotopy effect on configurations of the group completion was understood only more recently by Okuyama \cite{Okuyama2005}, and \emph{here} is where the framing of links emerges: The group completed configuration space is equivalently a configuration space of  \emph{intervals}, $\mathrm{Conf}^I(-)$, all parallel to a fixed coordinate axis and equipped with signed endpoints:
  \begin{equation}
    \label{GroupCompletionViaIntervalConfiguration}
    \mathbb{G}\mathrm{Conf}\big(\mathbb{R}^2\big)
    \weakHomotopyEquivalence
    \mathrm{Conf}^I\big(\mathbb{R}^2\big)
  \end{equation}
  Here the topology of $\mathrm{Conf}^I(-)$ is such that the signed endpoints of the intervals may undergo pair creation/annihilation processes, as shown in Fig. \ref{PathsInIntervalConfigurationSpace}.

\begin{figure}[htb]
\caption{
  \label{PathsInIntervalConfigurationSpace}
  Continuous paths in the space $\mathrm{Conf}^I(-)$ of interval configurations \eqref{GroupCompletionViaIntervalConfiguration} include, besides the evident movement of endpoints of intervals, processes where oppositely charged endpoints coincide and annihilate (or conversely, running the path backwards: emerge and separate). 
}

\centering

\adjustbox{scale=.8}{
\adjustbox{
  scale=.8
}{

\begin{tikzpicture}[
  decoration=snake
]

\begin{scope}[shift={(2,-2)}]

\begin{scope}[shift={(-.2,0)}]
\draw[line width=2pt, gray]
  (0,0) -- (1,0);
\draw[line width=1pt,draw=gray,fill=white]
  (0,0) circle (.2);
\draw[line width=1pt,draw=gray,fill=white]
  (1,0) circle (.2);
\end{scope}

\begin{scope}[shift={(1.8,0)}]
\draw[line width=2pt, gray]
  (0,0) -- (1,0);
\draw[line width=1pt,draw=gray,fill=black]
  (0,0) circle (.2);
\draw[line width=1pt,draw=gray,fill=black]
  (1,0) circle (.2);
\end{scope}

\begin{scope}[shift={(1.6,-1.3)}]
\draw[line width=2pt, gray]
  (-.05,0) -- (1,0);
\draw[line width=1pt,draw=gray,fill=black]
  (-.2,0) circle (.2);
\draw[line width=1pt,draw=gray,fill=black]
  (1,0) circle (.2);
\end{scope}

\begin{scope}[shift={(0,-1.3)}]
\draw[line width=2pt, gray]
  (0,0) -- (1,0);
\draw[line width=1pt,draw=gray,fill=white]
  (0,0) circle (.2);
\draw[line width=1pt,draw=gray,fill=white, fill opacity=.5]
  (1.2,0) circle (.2);
\end{scope}

\draw[
  decorate,
  ->
] (1.3,-.3) -- (1.3,-1);

\draw[
  decorate,
  ->
] (1.3,-1.7) -- (1.3,-2.4);

\begin{scope}[shift={(0,-1.2)}]
\draw[
  decorate,
  ->
] (1.3,-1.7) -- (1.3,-2.4);
\end{scope}

\begin{scope}[shift={(0,-2.4)}]
\draw[
  decorate,
  ->
] (1.3,-1.7) -- (1.3,-2.4);
\end{scope}

\begin{scope}[shift={(0,-3.9)}]
\draw[
  decorate,
  ->
] (1.3,-1.7) -- (1.3,-2.4);
\end{scope}

\begin{scope}[shift={(0,-2.6)}]
\draw[line width=2pt, gray]
  (0,0) -- (2.6,0);
\draw[line width=1pt,draw=gray,fill=white]
  (0,0) circle (.2);
\draw[line width=1pt,draw=gray,fill=black]
  (2.6,0) circle (.2);
\end{scope}

\begin{scope}[shift={(0,-3.9)}]
\draw[line width=2pt, gray]
  (.6,0) -- (2,0);
\draw[line width=1pt,draw=gray,fill=white]
  (.6,0) circle (.2);
\draw[line width=1pt,draw=gray,fill=black]
  (2,0) circle (.2);
\end{scope}

\begin{scope}[shift={(.3,-5.2)}]
\draw[line width=1pt,draw=gray,fill=black]
  (1.07,0) circle (.2);
\draw[line width=1pt,draw=gray,fill=white, fill opacity=.5]
  (.93,0) circle (.2);
\end{scope}

\draw (1.3, -6.6) node {$\varnothing$};

\end{scope}
  
\end{tikzpicture}

}
\hspace{20pt}
\def\yrescale{1.55}
\adjustbox{
  scale=.85
}{
\begin{tikzpicture}[
  yscale=\yrescale
]

\clip 
  (1.7,-1.2) rectangle
  (5.25,-6.1);

\begin{scope}[
  shift={(2,-2)}
]

\begin{scope}[shift={(0,0)}]
\draw[line width=2pt, gray]
  (0,0) -- (1,0);
\draw[
  line width=1pt,
  draw=gray,
  fill=white,
  yscale=1/\yrescale
]
  (0,0) circle (.2);
\draw[
  line width=1pt,
  draw=gray,
  fill=white,
  yscale=1/\yrescale
]
  (1,0) circle (.2);
\end{scope}

\begin{scope}[shift={(2,0)}]
\draw[line width=2pt, gray]
  (0,0) -- (1,0);
\draw[
  line width=1pt,
  draw=gray,
  fill=black,
  yscale=1/\yrescale
]
  (0,0) circle (.2);
\draw[
  line width=1pt,
  draw=gray,
  fill=black,
  yscale=1/\yrescale
]
  (1,0) circle (.2);
\end{scope}

\begin{scope}[shift={(2,-.8)}]
\draw[line width=2pt, gray]
  (-.2,0) -- (1,0);
\draw[
  line width=1pt,
  draw=gray,
  fill=black,
  yscale=1/\yrescale
]
  (-.4,0) circle (.2);
\draw[
  line width=1pt,
  draw=gray,
  fill=black,
  yscale=1/\yrescale
]
  (1,0) circle (.2);
\end{scope}

\begin{scope}[shift={(0,-.8)}]
\draw[line width=2pt, gray]
  (0,0) -- (1.2,0);
\draw[
  line width=1pt,
  draw=gray,
  fill=white,
  yscale=1/\yrescale
]
  (0,0) circle (.2);
\draw[
  line width=1pt,
  draw=gray,fill=white, 
  fill opacity=.5,
  yscale=1/\yrescale
]
  (1.4,0) circle (.2);
\end{scope}

\begin{scope}[shift={(0,-1.6)}]
\draw[line width=2pt, gray]
  (0,0) -- (2.8,0);
\draw[
  line width=1pt,
  draw=gray,
  fill=white,
  yscale=1/\yrescale
]
  (0.2,0) circle (.2);
\draw[
  line width=1pt,
  draw=gray,
  fill=black,
  yscale=1/\yrescale
]
  (2.8,0) circle (.2);
\end{scope}

\begin{scope}[shift={(.15,-2.4)}]
\draw[line width=2pt, gray]
  (.5,0) -- (2,0);
\draw[
  line width=1pt,
  draw=gray,
  fill=white,
  yscale=1/\yrescale
]
  (.5,0) circle (.2);
\draw[
  line width=1pt,
  draw=gray,
  fill=black,
  yscale=1/\yrescale
]
  (2.2,0) circle (.2);
\end{scope}

\begin{scope}[shift={(.5,-3.2)}]
\draw[
  line width=1pt,
  draw=gray,
  fill=black,
  yscale=1/\yrescale
]
  (1.07,0) circle (.2);
\draw[
  line width=1pt,
  draw=gray,
  fill=white, 
  fill opacity=.5,
  yscale=1/\yrescale
]
  (.93,0) circle (.2);
\end{scope}

\draw (1.5, -4) node {$\varnothing$};

\end{scope}

\draw[
  gray,
  smooth,
  fill=gray,
  fill opacity=.3,
  draw opacity=.3
]
  plot 
  coordinates{
    (2,-1.06) 
    (2,-2.8)
    (2.2,-3.6)
    (2.65,-4.4)
    (3.5,-5.3)
    (4.35,-4.4)
    (4.8,-3.6)
    (5,-2.8)
    (5,-1.06)
  }
  -- (4.05,-1.06)
  plot 
  coordinates {
    (4.05,-1.06)
    (4,-2)
    (3.6, -2.8)
    (3.4, -2.8)
    (3,-2.1)
    (2.95,-1.06)
  }
  --(2,-1.06);

\begin{scope}[
  shift={(0,-5pt)}
]

\draw[
  line width=2pt,
  white
]
  (1.9,-1.5) -- (5.1,-1.5);
\draw[
  line width=2pt,
  white
]
  (1.9,-1.34) -- (5.1,-1.34);
\draw[
  line width=2pt,
  white
]
  (1.9,-1.18) -- (5.1,-1.18);

\end{scope}
 
\end{tikzpicture}
}
}

\end{figure}
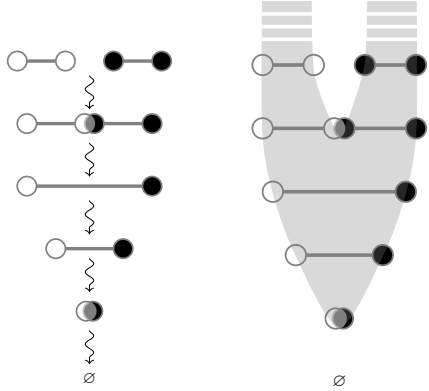
  
Thereby the loops in $\mathrm{Conf}^I\big(\mathbb{R}^2\big) \weakHomotopyEquivalence \mathrm{Map}^\ast\big(\mathbb{R}^2_{\cpt}, S^2 \big)$ are identified with framed links \cite[Prop. 2.15]{SS25-AbelianAnyons} (as illustrated in Fig. \ref{FramedLinksAsLoopsOfIntervals})! 

\begin{figure}[htb]
\caption{
  \label{FramedLinksAsLoopsOfIntervals}
  Loops 
  $\gamma \in \Omega \, \mathrm{Conf}^I\big(\mathbb{R}^2\big)$
  in the space \eqref{GroupCompletionViaIntervalConfiguration} of interval configurations (cf. Fig. \ref{PathsInIntervalConfigurationSpace}) are naturally identified with framed oriented link(diagram)s (cf. Fig. \ref{FramedLinksAndTheirWrithe})
}

\centering

\adjustbox{
  raise=-.5cm,
  margin=-5pt,
  scale=.6,
  rndfbox=5pt
}{
\def\tabcolsep{2pt}
\begin{tabular}{ccc}
\adjustbox{
  scale=.5
}{
\begin{tikzpicture}[
  baseline=(current bounding box.center)
]
 \begin{scope}
   \halfcircle{0,0};
 \end{scope}
 \begin{scope}[yscale=-1]
   \halfcircle{0,0};
 \end{scope} 
\end{tikzpicture}
}
&
\scalebox{1.6}{
$\leftrightarrow$
}
&
\adjustbox{}{
\begin{tikzpicture}[
  baseline=(current bounding box.center)
]
  \draw[
    line width=1.4,
    -Latex
  ]
    (0:1.6) arc (0:180:1.6);
  \draw[
    line width=1.4,
    -Latex
  ]
    (180:1.6) arc (180:360:1.6);
\end{tikzpicture}
}
\\
\adjustbox{
  scale=.5,
}{
\begin{tikzpicture}[
  baseline=(current bounding box.center)
]
\begin{scope}[yscale=-1]
  \halfcircle{0,0};
\end{scope}
\begin{scope}[
  shift={(3.4,.2)},
  xscale=-1
]
\halfcircle{0}{0};
\end{scope}
\begin{scope}[shift={(0,0)}]
\halfcircleover{0}{0};
\end{scope}
\begin{scope}[
  yscale=-1,
  shift={(3.4,-.2)},
  xscale=-1
]
\halfcircleover{0}{0};
\end{scope}
  
\end{tikzpicture}
}
&
\scalebox{1.6}{
$\leftrightarrow$
}
&
\adjustbox{}{
\begin{tikzpicture}[
  baseline=(current bounding box.center)
]
  \draw[
    line width=1.4,
    -Latex
  ]
    (0:1.6) arc (0:180:1.6);

\begin{scope}[shift={(2,0)}]
  \draw[
    line width=10,
    white
  ]
    (180:1.6) arc (180:0:1.6);
  \draw[
    line width=1.4,
    -Latex,
  ]
    (180:1.6) arc (180:0:1.6);
  \draw[
    line width=1.4,
    -Latex,
  ]
    (0:1.6) arc (0:-180:1.6);
\end{scope}
  
  \draw[
    line width=10,
    white
  ]
    (180:1.6) arc (180:360:1.6);
  \draw[
    line width=1.4,
    -Latex
  ]
    (180:1.6) arc (180:360:1.6);

\end{tikzpicture}
}
\\
\adjustbox{
  scale=.5
}{
\begin{tikzpicture}[
  baseline=(current bounding box.center)
]

\begin{scope}[yscale=-1]
  \halfcircle{0}{0}
\end{scope}
\begin{scope}[shift={(6,0)},scale=-1]
  \halfcircleover{0}{0}
\end{scope}
\begin{scope}[shift={(6,0)}, xscale=-1]
\halfcircle{0,0}
\end{scope}
\begin{scope}
\halfcircleover{0}{0}
\end{scope}

\end{tikzpicture}
}
&
\scalebox{1.6}{
$\leftrightarrow$
}
&
\adjustbox{}{
\begin{tikzpicture}[
  baseline=(current bounding box.center)
]
  \draw[
    line width=1.4,
    -Latex
  ]
    (0:1.6) arc (0:180:1.6);

\begin{scope}[shift={(3.2,0)}]
  \draw[
    line width=12,
    white
  ]
    (180:1.6) arc (180:0:1.6);
  \draw[
    line width=1.4,
    -Latex,
  ]
    (180:1.6) arc (180:0:1.6);
  \draw[
    line width=1.4,
    -Latex,
  ]
    (0:1.6) arc (0:-180:1.6);
\end{scope}
  
  \draw[
    line width=12,
    white
  ]
    (180:1.6) arc (180:360:1.6);
  \draw[
    line width=1.4,
    -Latex
  ]
    (180:1.6) arc (180:360:1.6);

\end{tikzpicture}
}
\\
\adjustbox{
  scale=.5
}{
\begin{tikzpicture}[
  baseline=(current bounding box.center)
]

\begin{scope}[
  shift={(.1,6)},
  yscale=-1
]
\halfcircle
\end{scope}

\begin{scope}[
  shift={(0,6)},
]
\clip (+4,-4) rectangle 
      (+0,+0);
\halfcircleAdjusted
\end{scope}

\begin{scope}[
  shift={(0,0)},
  yscale=-1,
  xscale=-1
]
\clip (+4,-4) rectangle 
      (+0,+0);
\halfcircleAdjusted
\end{scope}

\begin{scope}[xscale=-1]
\halfcircle
\end{scope}
\begin{scope}[
  xscale=-1,
  yscale=-1
]
\clip (-4,-4) rectangle 
      (+0,+0);
\halfcircleoverAdjusted
\end{scope}

\begin{scope}[
  shift={(.2,6)},
]
\clip (-4,-4) rectangle 
      (-.1,+0);
\halfcircleoverAdjusted
\end{scope}

\end{tikzpicture}
}
&
$\longmapsto$
&
\adjustbox{}{
\begin{tikzpicture}[
  rotate=-90,
  baseline=(current bounding box.center)
]
  \draw[
    line width=1.4,
    -Latex
  ]
    (0:1.6) arc (0:180:1.6);

\begin{scope}[shift={(3.2,0)}]
  \draw[
    line width=12,
    white
  ]
    (180:1.6) arc (180:0:1.6);
  \draw[
    line width=1.4,
    -Latex,
  ]
    (180:1.6) arc (180:0:1.6);
  \draw[
    line width=1.4,
    -Latex,
  ]
    (0:1.6) arc (0:-180:1.6);
\end{scope}
  
  \draw[
    line width=12,
    white
  ]
    (180:1.6) arc (180:360:1.6);
  \draw[
    line width=1.4,
    -Latex
  ]
    (180:1.6) arc (180:360:1.6);

\end{tikzpicture}
}
\\[-12pt]
&&
\end{tabular}
}

\end{figure}
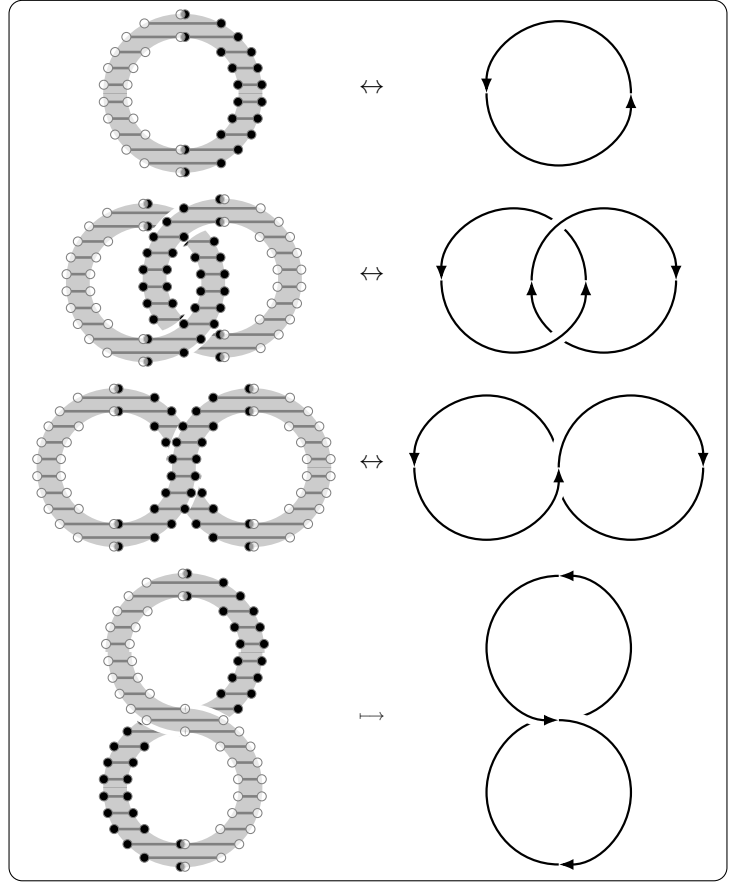

This shows the left part of the claim \eqref{MainTheoremDiagram}. The next step is to see that the connected components of the space of loops are labeled, under this identification, by the writhe of the corresponding framed links.

To see this, first observe that, under the identification just established, continuous deformations of loops of interval configurations (hence second order paths) induce \emph{cobordism} relations on framed links, as indicated in Fig. \ref{LinkCobordism}. 

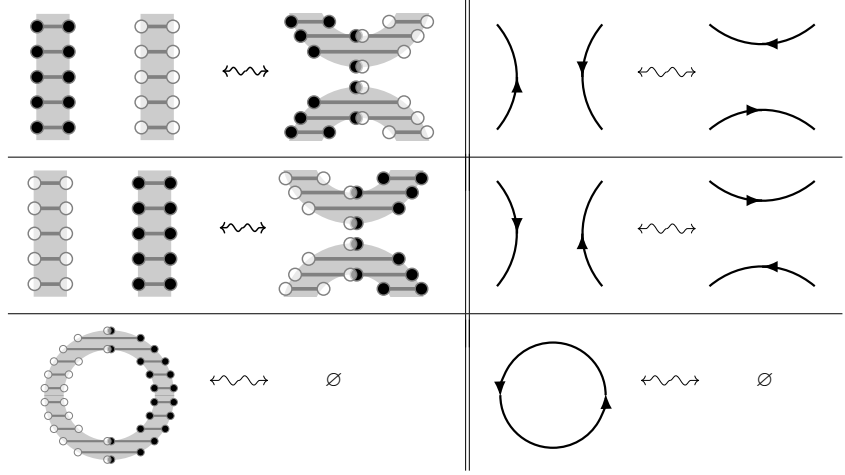
\begin{figure}[htb]
  \caption{
    \label{LinkCobordism}
    The continuous deformations (according to Fig. \ref{PathsInIntervalConfigurationSpace}) of framed links, regarded as loops of interval configurations (cf. Fig. \ref{FramedLinksAsLoopsOfIntervals}), include the \emph{saddle moves} and \emph{birth/death moves} of \emph{link cobordism}.
  }

\adjustbox{scale=.7}{
\def\tabcolsep{10pt}
\def\arraystretch{1.5}
\begin{tabular}{c||c}
\adjustbox{
  scale=.6,
}{
\begin{tikzpicture}[
  baseline=(current bounding box.center)
]
\draw[
  gray,
  line width=30pt,
  draw opacity=.4,
]
 (0,2) 
 --
 (0,-2);

\closedinterval{-.5}{1.6}{1}
\closedinterval{-.5}{.8}{1}
\closedinterval{-.5}{0}{1}
\closedinterval{-.5}{-.8}{1}
\closedinterval{-.5}{-1.6}{1}

\begin{scope}[
  xshift=3.3cm,
  xscale=-1
]
\draw[
  gray,
  line width=30pt,
  draw opacity=.4
]
 (0,2) 
 --
 (0,-2);

\openinterval{-.5}{1.6}{1}
\openinterval{-.5}{.8}{1}
\openinterval{-.5}{0}{1}
\openinterval{-.5}{-.8}{1}
\openinterval{-.5}{-1.6}{1}

\end{scope}
\end{tikzpicture}
}
\;\;\;\;
\adjustbox{scale=.8}{
\begin{tikzpicture}[decoration=snake]
\draw[decorate,->, line width=1]
  (0,0) -- (+0.55,0);
\draw[decorate,->, line width=1]
  (0,0) -- (-0.55,0);
\end{tikzpicture}
}
\adjustbox{
  scale=.6
}{
\begin{tikzpicture}[
  baseline=(current bounding box.center)
]

\begin{scope}
\clip
  (-.6,0) rectangle (5,2);
\draw[
  gray,
  line width=28,
  draw opacity=.4
]
  (1.75,2.6) circle (1.8);
\end{scope}
\closedinterval{-.4}{1.75}{1.2};
\closedinterval{-.1}{1.3}{1.75};
\begin{scope}[
  xshift=3.5cm,
  xscale=-1
]
\openinterval{-.4}{1.75}{1.2};
\openinterval{-.1}{1.3}{1.75};
\end{scope}
\oppositeinterval{.32}{.8}{2.85};
\oppositeinterval{1.65}{.33}{.2};

\begin{scope}[yscale=-1]
\begin{scope}
\clip
  (-.6,0) rectangle (5,2);
\draw[
  gray,
  line width=28,
  draw opacity=.4
]
  (1.75,2.6) circle (1.8);
\end{scope}
\closedinterval{-.4}{1.75}{1.2};
\closedinterval{-.1}{1.3}{1.75};
\begin{scope}[
  xshift=3.5cm,
  xscale=-1
]
\openinterval{-.4}{1.75}{1.2};
\openinterval{-.1}{1.3}{1.75};
\end{scope}
\oppositeinterval{.32}{.8}{2.85};
\oppositeinterval{1.65}{.33}{.2};
\end{scope}
  
\end{tikzpicture}
\hspace{-.7cm}
}
&
\adjustbox{}{
\begin{tikzpicture}[
  yscale=-1,
  baseline=(current bounding box.center)
]
\draw[
  line width=1.2,
  ]
  (-1,1)
  edge[bend left=40]
  (-1,-1);
\draw[-Latex]
  [line width=1.2]
  (-1+.38,+0)
  --
  (-1+.38,-.0001);

\begin{scope}[
  xscale=-1
]
\draw[
  line width=1.2,
  ]
  (-1,1)
  edge[bend left=40]
  (-1,-1);
\draw[-Latex]
  [line width=1.2]
  (-1+.38,0)
  --
  (-1+.38,+.0001);
\end{scope}
\end{tikzpicture}
}
\;\;\;
\begin{tikzpicture}[decoration=snake]
\draw[decorate,->]
  (0,0) -- (+.55,0);
\draw[decorate,->]
  (0,0) -- (-.55,0);
\end{tikzpicture}
\adjustbox{}{
\begin{tikzpicture}[
  xscale=-1,
  baseline=(current bounding box.center)
]
\begin{scope}[
  shift={(4.5,0)}
]
\draw[
  line width=1.2,
  ]
  (-1,1)
  edge[bend right=40]
  (+1,+1);
\draw[
  line width=1.2,
  -Latex
]
  (0,+.63)
  --
  (0.0001,+.63);

\begin{scope}[
  yscale=-1
]
\draw[
  line width=1.2,
  ]
  (-1,1)
  edge[bend right=40]
  (+1,+1);
\draw[
  line width=1.2,
  -Latex
]
  (+.0001,+.63)
  --
  (0,+.63);
\end{scope}

\end{scope}

\end{tikzpicture}
}
\\[30pt]
\hline
\\[-20pt]
\hspace{-.8cm}
\adjustbox{
  scale=.6
}{
\begin{tikzpicture}[
  baseline=(current bounding box.center)
]
\draw[
  gray,
  line width=30pt,
  draw opacity=.4,
]
 (0,2) 
 --
 (0,-2);

\openinterval{-.5}{1.6}{1}
\openinterval{-.5}{.8}{1}
\openinterval{-.5}{0}{1}
\openinterval{-.5}{-.8}{1}
\openinterval{-.5}{-1.6}{1}

\begin{scope}[
  xshift=3.3cm,
  xscale=-1
]
\draw[
  gray,
  line width=30pt,
  draw opacity=.4
]
 (0,2) 
 --
 (0,-2);

\closedinterval{-.5}{1.6}{1}
\closedinterval{-.5}{.8}{1}
\closedinterval{-.5}{0}{1}
\closedinterval{-.5}{-.8}{1}
\closedinterval{-.5}{-1.6}{1}

\end{scope}
\end{tikzpicture}
}
\;\;\;\;
\adjustbox{scale=.8}{
\begin{tikzpicture}[decoration=snake]
\draw[decorate,->, line width=1]
  (0,0) -- (+0.55,0);
\draw[decorate,->, line width=1]
  (0,0) -- (-0.55,0);
\end{tikzpicture}
}
\hspace{-.7cm}
\adjustbox{
  scale=.6,
}{
\begin{tikzpicture}[
  xscale=-1,
  baseline=(current bounding box.center)
]

\begin{scope}
\clip
  (-.6,0) rectangle (5,2);
\draw[
  gray,
  line width=28,
  draw opacity=.4
]
  (1.75,2.6) circle (1.8);
\end{scope}
\closedinterval{-.4}{1.75}{1.2};
\closedinterval{-.1}{1.3}{1.75};
\begin{scope}[
  xshift=3.5cm,
  xscale=-1
]
\openinterval{-.4}{1.75}{1.2};
\openinterval{-.1}{1.3}{1.75};
\end{scope}
\oppositeinterval{.32}{.8}{2.85};
\oppositeinterval{1.65}{.33}{.2};

\begin{scope}[yscale=-1]
\begin{scope}
\clip
  (-.6,0) rectangle (5,2);
\draw[
  gray,
  line width=28,
  draw opacity=.4
]
  (1.75,2.6) circle (1.8);
\end{scope}
\closedinterval{-.4}{1.75}{1.2};
\closedinterval{-.1}{1.3}{1.75};
\begin{scope}[
  xshift=3.5cm,
  xscale=-1
]
\openinterval{-.4}{1.75}{1.2};
\openinterval{-.1}{1.3}{1.75};
\end{scope}
\oppositeinterval{.32}{.8}{2.85};
\oppositeinterval{1.65}{.33}{.2};
\end{scope}
  
\end{tikzpicture}
\hspace{-.7cm}
}
&
\adjustbox{}{
\begin{tikzpicture}[
  baseline=(current bounding box.center)
]
\draw[
  line width=1.2,
  ]
  (-1,1)
  edge[bend left=40]
  (-1,-1);
\draw[-Latex]
  [line width=1.2]
  (-1+.38,+0)
  --
  (-1+.38,-.0001);

\begin{scope}[
  xscale=-1
]
\draw[
  line width=1.2,
  ]
  (-1,1)
  edge[bend left=40]
  (-1,-1);
\draw[-Latex]
  [line width=1.2]
  (-1+.38,0)
  --
  (-1+.38,+.0001);
\end{scope}
\end{tikzpicture}
}
\;\;\;
\begin{tikzpicture}[decoration=snake]
\draw[decorate,->]
  (0,0) -- (+.55,0);
\draw[decorate,->]
  (0,0) -- (-.55,0);
\end{tikzpicture}
\adjustbox{}{
\begin{tikzpicture}[
  baseline=(current bounding box.center)
]
\begin{scope}[
  shift={(4.5,0)}
]
\draw[
  line width=1.2,
  ]
  (-1,1)
  edge[bend right=40]
  (+1,+1);
\draw[
  line width=1.2,
  -Latex
]
  (0,+.63)
  --
  (0.0001,+.63);

\begin{scope}[
  yscale=-1
]
\draw[
  line width=1.2,
  ]
  (-1,1)
  edge[bend right=40]
  (+1,+1);
\draw[
  line width=1.2,
  -Latex
]
  (+.0001,+.63)
  --
  (0,+.63);
\end{scope}

\end{scope}

\end{tikzpicture}
}
\\
\hline
&
\\[-15pt]
\adjustbox{
  scale=.35
}{
\begin{tikzpicture}[
  baseline=(current bounding box.center)
]
  \halfcircle{0}{0}
  \begin{scope}[yscale=-1]
  \halfcircle{0}{0}
  \end{scope}
\end{tikzpicture}
}
\;\;
\begin{tikzpicture}[decoration=snake]
\draw[decorate,->]
  (0,0) -- (+.55,0);
\draw[decorate,->]
  (0,0) -- (-.55,0);
\node[scale=1.2] at (1.8,0) {$\varnothing$};
\end{tikzpicture}
\hspace{1.6cm}
&
\adjustbox{}{
\begin{tikzpicture}[
  baseline=(current bounding box.center)
]
  \draw[
    line width=1.2,
    -Latex
  ]
    (0:1) arc (0:182:1);
  \draw[
    line width=1.2,
    -Latex
  ]
    (180:1) arc (180:362:1);
\end{tikzpicture}
}
\;\;
\begin{tikzpicture}[decoration=snake]
\draw[decorate,->]
  (0,0) -- (+.55,0);
\draw[decorate,->]
  (0,0) -- (-.55,0);
\node[scale=1.2] at (1.8,0) {$\varnothing$};
\end{tikzpicture}
\hspace{.7cm}
\end{tabular}
}

\end{figure}

But under such \emph{saddle moves} each crossing of disconnected strands of a  framed link diagram may be transformed into an avoided crossing with one of the strands picking up a twist:
\begin{equation}
  \label{AvoidCrossing}
  \scalebox{0.65}{$
\adjustbox
{
  scale=.8,
}{
\begin{tikzpicture}[
  baseline=(current bounding box.center)
]

\draw[
  line width=1.2,
  -Latex
]
  (-1.6,-1.6) -- (+1.6,+1.6);

\draw[
  line width=15,
  white
]
  (+1.6,-1.6) -- (-1.6,+1.6);
\draw[
  line width=1.2,
  -Latex
]
  (+1.6,-1.6) -- (-1.6,+1.6);
  
\end{tikzpicture}}
\;\;\;\;\;\;
\begin{tikzpicture}[decoration=snake]
  \draw[decorate,->]
    (0,0) -- (+.55,0);
  \draw[decorate,->]
    (0,0) -- (-.55,0);
\end{tikzpicture}
\;\;\;\;\;\;
\adjustbox{
  scale=.8,
}
{
\begin{tikzpicture}[
  baseline=(current bounding box.center)
]

\draw[
  line width=1.2,
  -Latex
]
  (-1.7,-1.7) -- (0,0);

\draw[
  -Latex,
  line width=1.2,
]
  (+1.7,-1.7) -- (-1.7,1.7);

\draw[
  white,fill=white
]
  (-1.5,-1.5) rectangle (1.5,1.5);

\draw[line width=1.2]
  (-1.5,-1.5)
   .. controls
     (-.8,-.8) and (-.8,+.8) ..
  (-1.5,+1.5);

\begin{scope}[shift={(-.42,0)}]
\draw[
  line width=1.25
]
  (45:.3) arc (45:360-45:.3);
\end{scope}

\draw[line width=1.2]
  (+1.6,-1.6) -- (-.2,.2);
\draw[line width=1.2, -Latex]
  (+.2,+.2) -- (1.6,1.6);

\draw[white,fill=white]
  (-1.1,-.2) rectangle (-.5,+.2);
\clip
  (-1.1,-.2) rectangle (-.5,+.2);

\draw[line width=1.2]
  (-.82,.24) circle (.17);

\draw[line width=1.2]
  (-.81,-.24) circle (.18);
\end{tikzpicture}
}
\;\;\;\;\;\;
\begin{tikzpicture}[decoration=snake]
  \draw[decorate,->]
    (0,0) -- (+.55,0);
  \draw[decorate,->]
    (0,0) -- (-.55,0);
\end{tikzpicture}
\;\;\;\;\;\;
\adjustbox{
  scale=.8,
}
{
\begin{tikzpicture}[
  baseline=(current bounding box.center)
]

\draw[
  line width=1.2,
  -Latex
]
  (-1.7,-1.7) -- (0,0);

\draw[
  line width=1.2,
  -Latex, 
]
  (+1.7,-1.7) -- (-1.7,+1.7);

\draw[
  white,fill=white
]
  (-1.5,-1.5) rectangle (1.5,1.5);

\draw[line width=1.2]
  (-1.5,-1.5)
   .. controls
     (-.8,-.8) and (-.8,+.8) ..
  (-1.5,+1.5);

\begin{scope}[shift={(-.42,0)}]
\draw[
  line width=1.25
]
  (45:.3) arc (45:360-45:.3);
\end{scope}

\draw[line width=1.2]
  (+1.7,-1.7) -- (-.2,.2);
\draw[line width=1.2, -Latex]
  (+.2,+.2) -- (1.7,1.7);
 
\end{tikzpicture}
}
$}
\end{equation}

This way, each loop of interval configurations is continuously connected to a loop corresponding to a disjoint union of $\pm 1$-framed unknots, which are thereby seen to represent the generators of $\pi_1 \, \mathrm{Conf}^I\big(\mathbb{R}^2\big)$
\begin{equation}
\label{UnitFramedUnknotIsTheGenerator}
  \scalebox{0.7}{$
\left[
\adjustbox{
  scale=.3,
}
{
\begin{tikzpicture}[
  baseline=(current bounding box.center),
]

\begin{scope}[scale=-1]
\halfcircleover{0}{0}
\end{scope}

\draw[
  line width=43,
  white
]
  (-3,-.01) -- (-3,2.8);
\draw[
  gray,
  draw opacity=.4,
  line width=30
]
  (-3,-2.8) -- (-3,2.8);

\draw[
  gray,
  draw opacity=.4,
  line width=30
]
  (-9,-2.8) -- 
  (-9,2.8);

\begin{scope}[xscale=-1]
\halfcircleover{0}{0}
\end{scope}

\begin{scope}[
  shift={(-6,2.8)},
  yscale=-1
]
\halfcircleover{0}{0}
\end{scope}

\begin{scope}[
  shift={(-6,-2.8)}
]
\halfcircleover{0}{0}
\end{scope}

\foreach \n in {0,...,6} {
  \openinterval{-9.5}{-2.4+\n*.8}{1}
}

\foreach \n in {0,...,2} {
  \closedinterval{-3.5}{.5+\n*.9}{1}
}
  
\end{tikzpicture}
}
\right]
$}
\;\;
=
\;\;
1
\quad \in\;
\mathbb{Z}
\;\simeq\;
\pi_1\big(
  \mathrm{Conf}^I(\mathbb{R}^2)
\big)
\,.
\end{equation}

But since the saddle mode \eqref{AvoidCrossing} manifestly preserves the total crossing number \eqref{TotalLinkingNumber}, the net number of these framed unknots is equal to the crossing number of the initial framed link diagram and hence to the writhe of the framed link.

This proves the middle part of the claim \eqref{MainTheoremDiagram}.

To conclude, it remains to understand the pure quantum states on these observables. But since our algebra of topological quantum observables is commutative, $\mathrm{Obs}[X^3]_{\mathrm{top}} \simeq  \mathbb{C}[\mathbb{Z}]$, the expectation values of pure quantum states $\vert \psi \rangle$ are the star-algebra homomorphisms from the observables to the complex numbers
\cite[Lem. 3.1]{SS24-Obs}:
\begin{equation}
  \label{PureStatesAsStarHomomorphisms}
  \begin{tikzcd}[
    sep=0pt
  ]
    \mathbb{Z}
    \ar[
      rr, 
      hook
    ]
    &&
    \mathbb{C}\big[
      \mathbb{Z}
    \big]
    \ar[
      rr,
      "{
        \langle \psi \vert 
          - 
        \vert \psi \rangle      
      }"
    ]
    &&
    \mathbb{C}
    \\
    1 
    \ar[
      rrrr,
      |->,
      shorten=10pt
    ]
    &\phantom{-}&
    &\phantom{---}&
    e^{ \pi \mathrm{i}/K }
    \mathrlap{\,.}
  \end{tikzcd}
\end{equation}
As shown, as algebra homomorphisms there are clearly fixed by their value on $1 \in \mathbb{Z} \hookrightarrow \mathbb{C}[\mathbb{Z}]$, and star-homomorphy forces this complex number to be unimodular, hence of the form $\exp(\pi \mathrm{i}/K)$ for some $K \in \mathbb{R} \setminus \{0\}$.

This establishes the right part of \eqref{MainTheoremDiagram} and thereby completes the proof.
\end{proof}

We have thus established our claim that the traditionally renormalized Chern-Simons Wilson loop observables arise directly as the topological quantum observables of the completion of 3D CS theory by dimensional reduction of cohomotopically flux quantized 5d Maxwell-Chern-Simons theory. 

The only property missing at this point is the condition that the parameter $K/2$ in \eqref{PureStatesAsStarHomomorphisms} be (half) integral, as expected for (spin) Chern-Simons levels \eqref{HalfIntegralLevel}. This does follow, too, when extending the discussion here from $X^2 = \mathbb{R}^2$ to the torus $X^2 = T^2$, discussed in \cite[\S 3.4]{SS25-FQH}.

\section{Conclusion and Outlook}

While the idea of replacing Lagrangian quantum field theories (and their iterative piece-meal completion via processes of renormalization as well as of anomaly cancellation and of resummation) by (``UV-'')\emph{complete} quantum theories is widely advertised in discussion of fundamental physics, little technical attention has been devoted (as one can see from discussions like \cite{Crowther2019}) to the question of which guise such completed non-perturbative quantum theories might actually take.

In previous work 
\cite{SS24-Phase, SS25-Flux}
we have highlighted that at least one neglected form of completion is \emph{proper flux quantization} of non-linear higher gauge field sectors --- crucially taking into account not just the magnetic but also the electric fluxes and their generally non-linear Gauss laws -- and that from such indeed emerges \cite{SS24-Obs} the complete flux quantum observables at least in the topological sector.

Here we have laid out the example application of this program to the modest but instructive example of Wilson loop observables in abelian Chern-Simons (CS) theory. Beginning with a review of the fact that even for such an extremely well-studied QFT the traditional definition of its quantum observables is an \emph{ad hoc} choice, we highlighted that classical 3D CS is a constrained dimensional reduction of 5D Maxwell-Chern-Simons theory (MCS), and then considered the global completion of the latter by proper flux quantization in 2-Cohomotopy. This turns out to fully determine the topological quantum observables at once, without further choices involved, which turn out to reproduce the traditional Chern-Simons Wilson loop observables, including their otherwise \emph{ad hoc} choice of point-splitting renormalization by link framing data.  

In fact, beyond deriving the traditional renormalization choice for Wilson loops, our approach derives the Wilson loops themselves, in that it shows that these and nothing else are the topological quantum observables on $\mathbb{R}^3$. This may not sound surprising on the backdrop of tradition which has always assumed this to be the case, but it shows that completion by proper flux quantization gets to the heart of the nature of the gauge theory. Concretely, for our flux quantization in Cohomotopy it is the differential topology expressed by the seminal theorems of Pontrjagin and Segal which identifies flux quanta with framed points (abelian anyons, Fig. \ref{SolitonsViaPontrjaginTheorem}) and their vacuum-scattering processes with the framed links that ``are'' the Wilson loops (Fig. \ref{FramedLinksAsLoopsOfIntervals}).

This has non-trivial implications, such as in application to topological quantum materials known as \emph{fractional quantum Hall systems} (FQH \cite{Stormer99, nLab:FQH}): Elsewhere we have shown \cite{SS25-FQH} that over more general surfaces $\Sigma^2$ (hence with $X^3 = \Sigma^2 \times \mathbb{R}$) the cohomotopically completed topological quantum observables first of all recover fine detail of spin Chern-Simons invariants --- such as the modular data over spin tori characterizing the \emph{topological order} of the FQH system --- and moreover predicts that much anticipated \emph{non-abelian defect} anyons are going to be localized where flux is expelled (on superconducting islands modeled by punctured surfaces).

\appendix
\section{\texorpdfstring{Some Cohesive Homotopy Theory}{Shape of Smooth Infinity-Groupoids}}
\label{Proofs}

For reference in the main text, here we briefly recall relevant background on the cohesive $\infty$-topos of smooth $\infty$-groupoids (\cite[\S 4.3]{SS25-Bun}\cite[\S 4.1]{SS26-Orb}, going back to \cite[\S 3.1]{SSS12}\cite{Sc13-dcct})
and of nonabelian differential cohomology (\cite[\S 9]{FSS23-Char}, exposition in \cite[\S 3]{SS25-Flux})
and spell out details of the proof of Prop. \ref{ShapeOfConstrainedPhaseSpace}.

This appendix assumes general familiarity with geometric homotopy theory ($\infty$-topos theory \cite{ToenVezzosi2005, Lurie2009}\cite[\S 1]{FSS23-Char}, see exposition for physicists in \cite{Schreiber2025}).

\subsection{Smooth $\infty$-Groupoids}

In discussing the phase space of flux quantized (higher) gauge fields \eqref{TheFluxQuantizedPhaseSpace},
we work in the $\infty$-topos of \emph{smooth $\infty$-groupoids}, 
$$
  \mathrm{SmthGrpd}_\infty
  :=
  \mathrm{Sh}\big(
    \mathrm{CartSp}
    ;
    \mathrm{Grpd}_\infty
  \big)
  \,,
$$
being that of $\infty$-stacks over the site $\mathrm{CartSp}$ of Cartesian spaces $\mathbb{R}^d$, $d \in \mathbb{N}$, with smooth maps between them and equipped with the coverage (Grothendieck pre-topology) of differentially good open covers \cite[\S A]{FSSt12-DiffClasses}.

The terminal geometric morphism of this $\infty$-topos admits a further left adjoint (and a further right adjoint, it is \emph{cohesive}), called the \emph{shape} operation \cite[\S 4.3]{SS25-Bun}\cite[\S 4.1]{SS26-Orb}:
\begin{equation}
  \begin{tikzcd}
    \mathrm{SmthGrpd}_\infty
    \ar[
      rr,
      shift left=14pt,
      "{ \mathrm{Shp} }"{description}
    ]
    \ar[
      rr,
      phantom,
      shift left=7pt,
      "{ \bot }"{scale=.6}
    ]
    \ar[
      rr,
      <-,
      "{ \mathrm{Dscr} }"{description}
    ]
    \ar[
      rr,
      phantom,
      shift right=7pt,
      "{ \bot }"{scale=.6}
    ]
    \ar[
      rr,
      shift right=14pt,
      "{ \mathrm{Pnts} }"{description}
    ]
    &&
    \mathrm{Grpd}_\infty
    \mathrlap{\,,}
  \end{tikzcd}
\end{equation}
with induced \emph{shape modality}
\begin{equation}
  \label{ShapeModality}
  \shape 
    := 
  \mathrm{Dscr} \circ \mathrm{Shp}
    :
  \begin{tikzcd}[sep=small]
    \mathrm{SmthGrpd}_\infty
    \ar[rr]
    &&
    \mathrm{SmthGrpd}_\infty
    \,,
  \end{tikzcd}
\end{equation}
and unit natural transformation
\begin{equation}
  \eta
    ^{\scalebox{.7}{$\shape$}}
    _{\mathbf{X}}
  :
  \begin{tikzcd}
    \mathbf{X}
    \ar[r]
    &
    \shape \mathbf{X}
  \end{tikzcd}
\end{equation}
satisfying the following properties, for $\mathbf{X}, \mathbf{Y}, \mathbf{B} \in \mathrm{SmthGrpd}_\infty$:
\begin{itemize}

\item idempotency:
\begin{equation}
  \label{IdempotencyOfShape}
  \shape\Big( \!\!
    \begin{tikzcd}
    \mathbf{X} 
    \ar[
      r,
      "{
        \eta
          ^{\scalebox{.7}{$\shape$}}
          _{\mathbf{X}}
      }"
    ]
    &
    \shape \mathbf{X}
    \end{tikzcd}
  \!\! \Big)
  \;\weakHomotopyEquivalence\;
  \Big(\!\!
  \begin{tikzcd}
    (\shape \mathbf{X})
    \ar[
      r,
      "{ \mathrm{id} }"
    ]
    &
    (\shape \mathbf{X})
  \end{tikzcd}
  \!\!\! \Big)
  \mathrlap{\,,}
\end{equation}

\item localization:
\begin{equation}
  \label{LocalizationPropertyOfShape}
  \begin{tikzcd}
    \mathrm{Map}\big(
      \mathbf{X}
      ,
      \shape 
      \mathbf{Y}
    \big)
    \simeq
    \mathrm{Map}\big(
      \shape \mathbf{X}
      ,
      \shape 
      \mathbf{Y}
    \big)
    \mathrlap{\,,}
  \end{tikzcd}
\end{equation}

\item preservation of homotopy fiber products over shapes \cite[Prop. 4.3.8]{SS25-Bun}:
\begin{equation}
  \label{ShapeOfFiberProductsOverShapes}
  \shape\Big(
    \mathbf{X} 
    \underset{
      \scalebox{.7}{$\shape \mathbf{B}$}
    }{\times}
    \mathbf{Y}
  \Big)
  \weakHomotopyEquivalence
   \big( \shape \mathbf{X} \big)
    \underset{
      \scalebox{.7}{$\shape \mathbf{B}$}
    }{\times}
    \big( \shape \mathbf{Y} \big)
    \mathrlap{\,,}
\end{equation}
\item smooth Oka principle \cite{BerwickEvans2024}\cite[\S 1.1.2]{SS25-Bun}:
\begin{equation}
  \label{SmoothOkaPrinciple}
  X \in \begin{tikzcd}[sep=small]\mathrm{SmthMfd} \ar[r, hook] & \mathrm{SmthGrpd}_\infty
  \;\;
  \Rightarrow
  \;\;
  \shape
  \mathrm{Map}\big(
    X,
    \mathbf{Y}
  \big)
  \simeq
  \mathrm{Map}\big(
    \shape X
    ,
    \shape \mathbf{Y}
  \big).
  \end{tikzcd}
\end{equation}

\end{itemize}

For example, the category of (``convenient'' Delta-generated \cite[\S 4.3.4]{SS25-Bun}) topological spaces embeds fully faithfully into the 0-truncated smooth $\infty$-groupoids (the \emph{smooth sets} \cite{GS25-FieldsI}\cite{nLab:SmoothSet})
\begin{equation}
  \label{SmoothSet}
  \begin{tikzcd}
    \mathrm{DTopSp}
    \ar[r, hook]
    &
    \mathrm{SmthGrpd}_0
    \ar[r, hook]
    &
    \mathrm{SmthGrpd}_\infty
  \end{tikzcd}
\end{equation}
and for $X \in \mathrm{DTopSp}$ its shape $\shape X$ is its underlying homotopy type, equivalent to the traditional singular simplicial complex (hence is its \emph{shape} in the sense of \emph{shape theory}, whence the name).
Generally, the shape operation may be understood as forming underlying homotopy types of \emph{topological realizations} of smooth $\infty$-groupoids.

\subsection{Nonabelian Differential Cohomology}

Another important example for the shape \eqref{ShapeModality} of smooth $\infty$-groupoids arises for a real $L_\infty$-algebra $\mathfrak{a}$ with Chevalley-Eilenberg algebra $\mathrm{CE}(\mathfrak{a})$ \cite[\S 4]{FSS23-Char}, where the system of closed 
\footnote{
  The $L_\infty$-algebra valued differential forms \eqref{SheafOfFlatLInfinityValuedForms} being \emph{closed} means that they are \emph{flat} in that they are \emph{Maurer-Cartan elements} in $\Omega^\bullet_{\mathrm{dR}}\big(\mathbb{R}^d\big) \otimes \mathfrak{a}$. However, the latter terminology tends to suggest a context where these differential forms play the role of local higher \emph{gauge potentials}/\emph{connections} --- whereas in our context these forms are instead \emph{field strengths}/\emph{flux densities} and their flatness/MC-property is really the latter's \emph{Bianchi identity} (cf. \cite[(261-2)]{SatiSchreiberStasheff2009}), which generalizes the plain closure of abelian flux densities. This shift in perspective on the \emph{role} of $L_\infty$-algebras in higher gauge theory (from coefficients of gauge potentials to coefficients of their field strengths/fluxes) is crucial but may still appear unusual. In order for the terminology to be suggestive of this situation, we speak in \eqref{SheafOfFlatLInfinityValuedForms} of \emph{closed} instead of \emph{flat} $L_\infty$-valued forms, even though mathematically it means the same.
}
$\mathfrak{a}$-valued differential forms
\cite[\S 6]{FSS23-Char}
over Cartesian spaces is a sheaf and hence a smooth set \eqref{SmoothSet}:
\begin{equation}
  \label{SheafOfFlatLInfinityValuedForms}
  \begin{aligned}
  &
  \Omega^1_{\mathrm{dR}}(
    -;
    \mathfrak{a}
  )_{\mathrm{cl}}
  \in
  \mathrm{SmthGrpd}_0
  \hookrightarrow
  \mathrm{SmthGrpd}_\infty
  \,,
  \\
  &
  \mathbb{R}^d
  \longmapsto
  \Omega^1_{\mathrm{dR}}(
    \mathbb{R}^d
    ;
    \mathfrak{a}
  )_{\mathrm{cl}}
  :=
  \mathrm{Hom}_{\mathrm{dgcAlg}_{\mathbb{R}}}\big(
    \mathrm{CE}(\mathfrak{a})
    ,
    \Omega^\bullet_{\mathrm{dR}}(
      \mathbb{R}^d
    )
  \big)
  \mathrlap{\,.}
  \end{aligned}
\end{equation}
When $\mathfrak{a} \simeq \mathfrak{l}\hotype{A}$ is the real \emph{Whitehead $L_\infty$-algebra} of the shape $\hotype{A}$ of a topological space (nilpotent and of rational finite type), hence when $\mathrm{CE}(\mathfrak{a})$ is the minimal \emph{Sullivan model} for $\hotype{A}$ \cite[\S 5]{FSS23-Char}, then the shape of \eqref{SheafOfFlatLInfinityValuedForms} is the homotopy type of the $\mathbb{R}$-\emph{rationalization} $L^{\mathbb{R}}\hotype{A}$ of that space \cite[Lem 9.1]{FSS23-Char}:
\begin{equation}
  \shape
  \,
  \Omega^1_{\mathrm{dR}}\big(
    -;
    \mathfrak{l}\hotype{A}
  \big)_{\mathrm{cl}}
  \weakHomotopyEquivalence
  L^{\mathbb{R}}\hotype{A}
  \,,
\end{equation}
and the \emph{rationalization unit} in this guise is the avatar $\mathbf{ch}_{\hotype{A}}$ of the \emph{character map} $\mathrm{ch}_{\hotype{A}}$ \cite[\S 9]{FSS23-Char} from generalized non-abelian cohomology $H^1\big(-;\Omega \hotype{A}\big)$ \cite[\S 2]{FSS23-Char}\cite[\S 1]{SS25-TEC}
to $\mathfrak{l}\hotype{A}$-valued nonabelian de Rham cohomology \cite[\S 6]{FSS23-Char}:  
\begin{equation}
  \label{TheNonabelianCharacterMap}
  \begin{tikzcd}[
    row sep=7pt
  ]
    H^1\big(
      -;
      \Omega\hotype{A}
    \big)
    \ar[
      d,
      phantom,
      "{ := }"{rotate=-90}
    ]
    \ar[
      r,
      "{ \mathrm{ch}_{\hotype{A}} }"
    ]
    &
    H^1_{\mathrm{dR}}\big(
      -;
      \mathfrak{l}\hotype{A}
    \big)
    \ar[
      d,
      phantom,
      "{ := }"{rotate=-90}
    ]
    \\
    \pi_0
    \,
    \mathrm{Map}\big(
      -,
      \hotype{A}
    \big)
    \ar[
      r,
      "{
        (\mathbf{ch}_{\hotype{A}})_\ast
      }"
    ]
    &
    \pi_0
    \,
    \mathrm{Map}\Big(
      -,
      \shape
      \, 
      \Omega^1_{\mathrm{dR}}\big(
        -;
        \mathfrak{l}\hotype{A}
      \big)_{\mathrm{cl}}
    \Big)
    \mathrlap{\,.}
  \end{tikzcd}
\end{equation}
This allows to form the smooth $\infty$-groupoid $\hotype{A}_{\mathrm{diff}}$ which classifies closed $\mathfrak{l}\hotype{A}$-valued  differential form data, whose cocycle image in de Rham/rational $\Omega\hotype{A}$-cohomology is equipped with a coherent lift to a cocycle in actual $\Omega \hotype{A}$-cohomology, hence the following homotopy pullback \cite[Def. 9.3]{FSS23-Char}:
\begin{equation}
  \label{DiffCohomologyPullback}
  \begin{tikzcd}
    \hotype{A}_{\mathrm{diff}}
    \ar[
      rr,
      "{ \vec \chi }"
    ]
    \ar[
      d,
      "{ \vec F }"{swap}
    ]
    \ar[
      dr,
      phantom,
      "{ \lrcorner }"{pos=.1}
    ]
    &&
    \hotype{A}
    \ar[
      d,
      "{
        \mathbf{ch}_{\hotype{A}}
      }"
    ]
    \\
    \Omega^1_{\mathrm{dR}}\big(
      -;
      \mathfrak{l}\hotype{A}
    \big)_{\mathrm{cl}}
    \ar[
      rr,
      "{
        \eta^{\scalebox{.6}{$\shape$}}
      }"
    ]
    &{}&
    \shape
    \Omega^1_{\mathrm{dR}}\big(
      -;
      \mathfrak{l}\hotype{A}
    \big)_{\mathrm{cl}}
    \mathrlap{\,.}
  \end{tikzcd}
\end{equation}

If here we think of $\mathfrak{l}\hotype{A}$ as encoding the Gauss laws/Bianchi identities of flux densities on phase spaces of some higher gauge field theory, then $\hotype{A}$ encodes a compatible flux quantization law and 
\begin{equation}
  \label{PhaseSpaceInAppendix}
  \mathrm{PhsSp}
  :=
  \mathrm{Map}\big(X^d,\hotype{A}_{\mathrm{diff}}\big)
  \in
  \mathrm{SmthGrpd}_\infty
\end{equation}
is the \emph{flux quantized phase space} of the higher gauge theory over any given Cauchy surface $X^d$ \cite{SS24-Phase}\cite[\S 3.3]{SS25-Flux}.

Generally, the geometrically connected components of this smooth $\infty$-groupoid is the \emph{differential $\Omega\hotype{A}$-cohomology} of $X^d$ \cite[\S 9]{FSS23-Char}:
\begin{equation}
  H^1_{\mathrm{diff}}\big(
    X^d;
    \Omega\hotype{A}
  \big)
  :=
  \pi_0
  \shape
  \mathrm{Map}\big(X^d,\hotype{A}_{\mathrm{diff}}\big)
  \,.
\end{equation}

More generally, when the domain manifold arises as a one-point compactification $X^d_{\cpt}$
equipped with a contractible open neighbourhood of the point at infinity
\begin{equation}
  \label{NeighbourhoodOfInfinity}
  O_\infty
  :
  \begin{tikzcd}
    D^d
    \ar[
      r,
      hook,
    ]
    &
    X^d_{\cpt}
    \mathrlap{\,,}
  \end{tikzcd}
\end{equation}
then passage to homotopy fibers of restriction to the neighbourhood of infinity $O_\infty$ gives the spaces of fluxes/charges which are \emph{solitonic}, denoted $(-)^{\mathrm{solit}}$, in that they \emph{vanish at infinity}:
\begin{equation}
  \label{FluxesVanishingAtInfinity}
  \begin{tikzcd}
    \mathbf{\Omega}^1_{\mathrm{dR}}\big(
      X^d_{\cpt}
      ;
      \mathfrak{l}\hotype{A}
    \big)
      _{\mathrm{cl}}
      ^{ \mathrm{solit} }
    \ar[
      d, 
      "{
        \mathrm{fib}
      }"
    ]
    \ar[
      r,
      "{
        (
        \eta
          ^{\scalebox{.6}{$\shape$}}
        )_\ast
      }"
    ]
    &
    \shape
    \,
    \mathbf{\Omega}^1_{\mathrm{dR}}\big(
      X^d_{\cpt}
      ;
      \mathfrak{l}\hotype{A}
    \big)
      _{\mathrm{cl}}
      ^{ \mathrm{solit} }
    \ar[
      d, 
      "{
        \shape \mathrm{fib}
      }"
    ]
    \ar[
      r,
      <-,
      "{
        (\mathbf{ch}_{\hotype{A}})_\ast
      }"
    ]
    &
    \mathrm{Map}^\ast\big(
      X^d_{\cpt}
      ,
      \hotype{A}
    \big)
    \ar[
      d, 
      "{
        \mathrm{hofib}
      }"
    ]
    \\
    \mathbf{\Omega}^1_{\mathrm{dR}}\big(
      X^d_{\cpt}
      ;
      \mathfrak{l}\hotype{A}
    \big)_{\mathrm{cl}}
    \ar[
      d, 
      "{
        (O_\infty)^\ast
      }"
    ]
    \ar[
      r,
      "{
        (
        \eta
          ^{\scalebox{.6}{$\shape$}}
        )_\ast
      }"
    ]
    &
    \shape
    \,
    \mathbf{\Omega}^1_{\mathrm{dR}}\big(
      X^d_{\cpt}
      ;
      \mathfrak{l}\hotype{A}
    \big)_{\mathrm{cl}}
    \ar[
      d, 
      "{
        \shape (O_\infty)^\ast
      }"
    ]
    \ar[
      r,
      <-,
      "{
        (\mathbf{ch}_{\hotype{A}})_\ast
      }"
    ]
    &
    \mathrm{Map}\big(
      X^d_{\cpt}
      ,
      \hotype{A}
    \big)
    \ar[
      d, 
      "{
        \mathrm{ev}
      }"
    ]
    \\
    \mathbf{\Omega}^1_{\mathrm{dR}}\big(
      D^d
      ;
      \mathfrak{l}\hotype{A}
    \big)_{\mathrm{cl}}
    \ar[
      r,
      "{
        (
        \eta
          ^{\scalebox{.6}{$\shape$}}
        )_\ast
      }"
    ]
    &
    \shape
    \,
    \mathbf{\Omega}^1_{\mathrm{dR}}\big(
      D^d
      ;
      \mathfrak{l}\hotype{A}
    \big)_{\mathrm{cl}}
    \ar[
      r,
      <-,
      "{
        (\mathbf{ch}_{\hotype{A}})_\ast
      }"
    ]
    &
    \mathrm{Map}\big(
      \ast
      ,
      \hotype{A}
    \big)
    \mathrlap{\,,}
  \end{tikzcd}
\end{equation}
and the \emph{flux quantized solitonic phase space} is the corresponding homotopy fiber product
\begin{equation}
  \label{TheSolitonicPhaseSpace}
  \begin{tikzcd}
    \mathrm{PhsSp}^{\mathrm{solit}}
    \ar[
      rr,
      "{
        \chi
      }"
    ]
    \ar[
      d,
      "{
        \vec F
      }"{swap}
    ]
    \ar[
      dr,
      phantom,
      "{ \lrcorner }"{pos=.1,}
    ]
    &&
    \mathrm{Map}^\ast\big(
      X^d_{\cpt}
      ,
      \hotype{A}
    \big)
    \ar[
      d,
      "{
        (\mathbf{ch}_{\hotype{A}})_\ast
      }"
    ]
    \\
    \mathbf{\Omega}^1_{\mathrm{dR}}\big(
      X^d_{\cpt}
      ;
      \mathfrak{l}\hotype{A}
    \big)
      _{\mathrm{cl}}
      ^{\mathrm{solit}}
    \ar[
      rr,
      "{
        (\eta^{\scalebox{.6}{$\shape$}})_\ast
      }"
    ]
    &{}&
    \shape
    \,
    \mathbf{\Omega}^1_{\mathrm{dR}}\big(
      X^d_{\cpt}
      ;
      \mathfrak{l}\hotype{A}
    \big)
      _{\mathrm{cl}}
      ^{\mathrm{solit}}
    \mathrlap{\,.}
  \end{tikzcd}
\end{equation}

\subsection{Proof of Prop. \ref{ShapeOfConstrainedPhaseSpace}}
\label{ProofOfShapeOfConstrainedPhaseSpace}

We decompose the proof of Prop. \ref{ShapeOfConstrainedPhaseSpace} into two Lemmas, \ref{IfConstrainedFluxIsEquivalenceThenSoIsConstrainedPhaseSpace} and \ref{ProvingShapeOfConstrainedFluxes} below, the first being a general statement about constrained flux quantized phase spaces, and the second being specific to the situation of 2-cohomotopical flux quantization on the plane.

Consider therefore first the homotopy pullback defining a general constrained phase space \eqref{ConstrainedPhaseSpace} according to the left part of the following diagram,
where we show it ``pasted'' on its right to the image under $\mathrm{Map}(X,-)$ of the homotopy pullback \eqref{DiffCohomologyPullback} defining $\hotype{A}_{\mathrm{diff}}$:
\vspace{-3mm} 
\begin{equation}
  \begin{tikzcd}[column sep=16pt]
    \mathrm{PhsSp}
      ^{\mathrm{constr}}
    \ar[d]
    \ar[r]
    \ar[
      dr,
      phantom,
      "{ \lrcorner }"{pos=.1}
    ]
    &
    \grayoverbrace{
    \mathrm{Map}\big(
      X,
      \hotype{A}_{\mathrm{diff}}
    \big)
    }{
      \mathrm{PhsSp}
    }
    \ar[
      d,
      "{
        \vec F_\ast
      }"{swap}
    ]
    \ar[
      r,
      "{
        \chi_\ast
      }"
    ]
    \ar[
      dr,
      phantom,
      "{ \lrcorner }"{pos=.1}
    ]
    &[+5pt]
    \mathrm{Map}\big(
      X,
      \hotype{A}
    \big)
    \ar[
      d,
      "{
        (\mathbf{ch}_{\hotype{A}})_\ast
      }"
    ]
    \\
    \mathbf{\Omega}^1_{\mathrm{dR}}\big(
      X;
      \mathfrak{a}
    \big)
      _{\mathrm{cl}}
      ^{\mathrm{constr}}
    \ar[
      r,
      hook,
      "{ \iota }"
    ]
    &
    \grayunderbrace{
    \mathrm{Map}\Big(
      X,
      \Omega^1_{\mathrm{dR}}\big(
        -;
        \mathfrak{a}
      \big)
        _{\mathrm{cl}}
    \Big)
    }{
      \mathbf{\Omega}^1_{\mathrm{dR}}
      (
        X,
        \mathfrak{a}
      )_{\mathrm{cl}}
    }
    \ar[
      r,
      "{
        (\eta^{\scalebox{.7}{$\shape$}})_\ast
      }"
    ]
    &
    \mathrm{Map}\Big(
      X,
      \shape\,
      \Omega^1_{\mathrm{dR}}\big(
        -;
        \mathfrak{a}
      \big)
        _{\mathrm{cl}}
    \Big)
    \mathrlap{\,.}
  \end{tikzcd}
\end{equation}
Now, since $\mathrm{Map}(X,-)$ preserves homotopy limits, the right square is also a homotopy pullback, and hence so is the total rectangle, by the \emph{pasting law} \cite[(1.14)]{SS25-Bun}. Observe then that the bottom right object is pure shape, by the localization property \eqref{LocalizationPropertyOfShape} and the smooth Oka principle \eqref{SmoothOkaPrinciple}. Therefore passage to shapes \eqref{ShapeModality} preserves the pullback property both of the right square as well as of the total rectangle, by \eqref{ShapeOfFiberProductsOverShapes}, whence the pasting law applies again to show that also the following left square is a homotopy pullback:
\begin{equation}
  \label{ShapeOfConstrainedPhaseSpaceAsPullbackOfConstrainedFluxes}
  \begin{tikzcd}
    \shape 
    \,
    \mathrm{PhsSp}^{\mathrm{constr}}
    \ar[r]
    \ar[d]
    \ar[
      dr,
      phantom,
      "{ \lrcorner }"{pos=.1}
    ]
    &
    \shape\, \mathrm{PhsSp}
    \ar[d]
    \\
    \shape\,
    \mathbf{\Omega}^1_{\mathrm{dR}}\big(
      X;
      \mathfrak{a}
    \big)
      _{\mathrm{cl}}
      ^{\mathrm{constr}}
    \ar[
      r,
      "{ \shape \iota }"
    ]
    &
    \shape\,
    \mathbf{\Omega}^1_{\mathrm{dR}}\big(
      X;
      \mathfrak{a}
    \big)
      _{\mathrm{cl}}
  \end{tikzcd}
\end{equation}
From this, we obtain:
\begin{lemma}
  \label{IfConstrainedFluxIsEquivalenceThenSoIsConstrainedPhaseSpace}
  A sufficient condition
  for the constrained phase space to have the same connected components as the unconstrained phase space is:
  \begin{equation}
    \left.
    \begin{aligned}
      & 
      \underset{n \geq 1}
       {\forall}
      \;\;
      \pi_n
      \shape
      \mathbf{\Omega}^1_{\mathrm{dR}}\big(
        X ;
        \mathfrak{a}
      \big)_{\mathrm{cl}}
      = 1
      \\
      & 
      \mbox{\rm and}
      \;\;
      \pi_0 \shape \iota
      =
      \mathrm{id}
    \end{aligned}
    \right\}
    \Rightarrow
    \Big(
        \pi_0
        \shape 
        \mathrm{PhsSp}^{\mathrm{constr}}
      \simeq
        \pi_0
        \shape \mathrm{PhsSp}
    \Big).
  \end{equation}
\end{lemma}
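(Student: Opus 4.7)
The plan is to exploit the homotopy pullback square \eqref{ShapeOfConstrainedPhaseSpaceAsPullbackOfConstrainedFluxes} already derived in the excerpt, now observing that under the two hypotheses it simplifies dramatically. The overall strategy has three steps: first show that the base of that pullback is discrete, then invoke the standard decomposition of pullbacks over discrete bases, and finally use the $\pi_0$-identity hypothesis to collapse the resulting disjoint union.

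First I would note that the hypothesis $\pi_n \shape \mathbf{\Omega}^1_{\mathrm{dR}}(X;\mathfrak{a})_{\mathrm{cl}} = 1$ for all $n \geq 1$ says exactly that the bottom-right corner of \eqref{ShapeOfConstrainedPhaseSpaceAsPullbackOfConstrainedFluxes} is $0$-truncated, hence equivalent (as an $\infty$-groupoid) to the discrete set $B := \pi_0 \shape \mathbf{\Omega}^1_{\mathrm{dR}}(X;\mathfrak{a})_{\mathrm{cl}}$. Over any such discrete base, a homotopy pullback in $\infty$-groupoids decomposes as a coproduct indexed by $B$ of products of the respective homotopy fibers. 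Writing $A_b$ and $E_b$ for the homotopy fibers of the two vertical arrows of \eqref{ShapeOfConstrainedPhaseSpaceAsPullbackOfConstrainedFluxes} over $b \in B$, this yields
\[
\shape\,\mathrm{PhsSp}^{\mathrm{constr}} \;\weakHomotopyEquivalence\; \coprod_{b \in B} A_b \times E_b,
\qquad
\shape\,\mathrm{PhsSp} \;\weakHomotopyEquivalence\; \coprod_{b \in B} E_b.
\]

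Next I would invoke the second hypothesis $\pi_0 \shape \iota = \mathrm{id}$. Since $\pi_0 \shape \iota$ is then a bijection onto $B$, every connected component of $\shape \mathbf{\Omega}^1_{\mathrm{dR}}(X;\mathfrak{a})^{\mathrm{constr}}_{\mathrm{cl}}$ maps to a distinct element $b \in B$, so each $A_b$ is itself non-empty and connected and $\pi_0 A_b$ is a singleton. Substituting into the decomposition above gives
\[
\pi_0 \shape\,\mathrm{PhsSp}^{\mathrm{constr}}
\;\simeq\;
\coprod_{b \in B} \pi_0 A_b \times \pi_0 E_b
\;=\;
\coprod_{b \in B} \pi_0 E_b
\;\simeq\;
\pi_0 \shape\,\mathrm{PhsSp},
\]
and chasing the top horizontal arrow of \eqref{ShapeOfConstrainedPhaseSpaceAsPullbackOfConstrainedFluxes} through the decompositions shows that it is precisely this arrow which induces the displayed bijection, as required.

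There is no serious obstacle here: once the pullback \eqref{ShapeOfConstrainedPhaseSpaceAsPullbackOfConstrainedFluxes} is in place (itself obtained by a routine pasting-law argument given in the excerpt), the proof reduces to the elementary fact that a homotopy pullback of $\infty$-groupoids over a discrete base splits as a disjoint union of products of its fibers, together with $\pi_0$-bookkeeping. The only point to be slightly careful about is ensuring that the constructed bijection is actually the one induced by the comparison map of \eqref{ShapeOfConstrainedPhaseSpaceAsPullbackOfConstrainedFluxes}, which is immediate from the naturality of the fiber decomposition.
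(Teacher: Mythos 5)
Your argument is correct and matches the paper's (largely implicit) proof: the paper likewise derives the homotopy pullback square \eqref{ShapeOfConstrainedPhaseSpaceAsPullbackOfConstrainedFluxes} and then reads off the lemma, the only unstated step being exactly the discrete-base splitting of the pullback that you make explicit (note that only $\pi_1$-triviality of the base is actually needed for the $\pi_0$-statement). One small wording slip: your $A_b$ must be the homotopy fiber of the \emph{bottom horizontal} map $\shape\iota$ over $b$, not of a vertical arrow of the square --- which is evidently what you use in the sequel, so nothing breaks.
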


Thereby, the proof of Prop. \ref{ShapeOfConstrainedPhaseSpace} is reduced to:

\begin{lemma}
  \label{ProvingShapeOfConstrainedFluxes}
  The comparison map 
  \eqref{ShapeComparisonOfConstrained5DMCSFluxes}  
  \begin{equation}
    \label{ComparisonMapInAppendix}
    \begin{tikzcd}
      \shape
      \mathbf{\Omega}^1_{\mathrm{dR}}\big(
        \mathbb{R}^3_{\cpt} 
          \wedge 
        [0,1]_{\plus}
        ;
        \mathfrak{l}S^2
      \big)
        _{\mathrm{cl}}
        ^{\mathrm{constr}}
      \ar[
        r,
        "{ \shape \iota }"
      ]
      &
      \shape
      \mathbf{\Omega}^1_{\mathrm{dR}}\big(
        \mathbb{R}^3_{\cpt} 
        ;
        \mathfrak{l}S^2
      \big)
        _{\mathrm{cl}}
        ^{\mathrm{solit}}
    \end{tikzcd}
  \end{equation}  
  satisfies the condition of Lem. \ref{IfConstrainedFluxIsEquivalenceThenSoIsConstrainedPhaseSpace}.
\end{lemma}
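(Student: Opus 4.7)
The strategy is to verify the two hypotheses of Lemma \ref{IfConstrainedFluxIsEquivalenceThenSoIsConstrainedPhaseSpace} for the map \eqref{ComparisonMapInAppendix} in turn. First, using the character-map homotopy pullback \eqref{DiffCohomologyPullback} together with the smooth Oka principle \eqref{SmoothOkaPrinciple}, I would identify the codomain of $\shape \iota$ with a pointed mapping space into the real rationalization $L^{\mathbb{R}}S^2$ of $S^2$,
\[
  \shape\,\mathbf{\Omega}^1_{\mathrm{dR}}\big(\mathbb{R}^3_{\cpt}; \mathfrak{l}S^2\big)^{\mathrm{solit}}_{\mathrm{cl}}
  \;\weakHomotopyEquivalence\;
  \mathrm{Map}^\ast\big(\mathbb{R}^3_{\cpt}, L^{\mathbb{R}}S^2\big)
  \;\weakHomotopyEquivalence\;
  \Omega^3 L^{\mathbb{R}}S^2 ,
\]
and read off its homotopy groups from the Sullivan model \eqref{CEAlgebraOf2Sphere}: since $L^{\mathbb{R}}S^2$ has rational homotopy $\mathbb{R}$ in degrees 2 and 3 and vanishing otherwise, the triple loop space $\Omega^3 L^{\mathbb{R}}S^2$ has $\pi_0 \simeq \mathbb{R}$ (the real Hopf invariant, i.e.\ $\pi_3$) and $\pi_n = 0$ for $n \geq 1$. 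This establishes the first hypothesis.

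For the second hypothesis, that $\pi_0 \shape \iota$ is a bijection, I would parameterize the constrained domain along the lines of Prop.~\ref{How5DMCSReducesTo3DCS}. A constrained closed $\mathfrak{l}S^2$-valued form on $X^4 = \mathbb{R}^3 \times [0,1]$ is the pullback along the projection $p: X^4 \to \mathbb{R}^3$ of a pair $(\tilde B_2, E_0)$, with $\tilde B_2$ a closed solitonic 2-form and $E_0$ a nowhere-zero function on the open manifold $\mathbb{R}^3$ whose dual $\star_3 E_0$ decays at the point at infinity; the electric flux is then $\tilde E_3 = \ell_v^{-1}\star_3 E_0$, and the Gauss law $\mathrm{d}\tilde E_3 = \tilde B_2 \wedge \tilde B_2$ is automatic since both sides are 4-forms on a 3-manifold.

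I would then compute the Hopf invariant of the image $\iota(\tilde B_2, E_0)$ as follows. Choosing a primitive $\tilde A_1$ with $\mathrm{d}\tilde A_1 = \tilde B_2$, which exists since $H^2(\mathbb{R}^3_{\cpt};\mathbb{R}) = 0$, the 3-form $p^\ast\big(\ell_v^{-1}\star_3 E_0 - \tfrac{1}{2}\tilde A_1 \wedge \tilde B_2\big)$ is closed on $X^4$, and its cohomology class in $H^3_{\mathrm{solit}}(X^4_{\cpt};\mathbb{R}) \simeq H^3(S^3;\mathbb{R}) \simeq \mathbb{R}$ is the rational Hopf invariant. Integrating over a $\mathbb{R}^3_{\cpt}$-slice yields
\[
  H(\tilde B_2, E_0) \;=\; \ell_v^{-1}\!\int_{\mathbb{R}^3} \star_3 E_0 \;-\; \tfrac{1}{2}\!\int_{\mathbb{R}^3} \tilde A_1 \wedge \tilde B_2 ,
\]
which takes every value in $\mathbb{R}$ as $E_0$ is varied, so $\pi_0\shape\iota$ is surjective. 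Injectivity follows because the $H$-level sets in the constrained configuration space are path-connected: the space of closed solitonic 2-forms is an affine space, the space of nowhere-zero solitonic $E_0$ of fixed asymptotic sign is convex, and a rescaling of $E_0$ absorbs any change in the Chern-Simons-type term $\tilde A_1 \wedge \tilde B_2$ while preserving $H$.

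The principal obstacle I anticipate is reconciling the flux-compactification condition $E_0 \neq 0$ with the solitonic boundary behavior; I would resolve this by restricting nowhere-vanishing to the open manifold $\mathbb{R}^3$ and allowing $\star_3 E_0$ to decay at the added basepoint, so that $\int \star_3 E_0$ remains a finite real parameter realizing the full range of the Hopf invariant.
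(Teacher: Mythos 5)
Your verification of the first hypothesis of Lem.~\ref{IfConstrainedFluxIsEquivalenceThenSoIsConstrainedPhaseSpace} coincides with the paper's: both identify the codomain of $\shape\iota$ with (the shape of) $\mathrm{Map}^\ast\big(S^3, L^{\mathbb{R}}S^2\big)$ and read off $\pi_0\simeq\pi_3(S^2)\otimes_{\mathbb{Z}}\mathbb{R}\simeq\mathbb{R}$ with vanishing higher homotopy; and both parameterize the constrained data by pairs $(B_2,E_0)$ pulled back from $\mathbb{R}^3$, with the comparison map given by restriction to a slice. Where the paper then simply records that bijectivity of $\pi_0\shape\iota$ is immediate from this identification, you attempt to spell out an explicit surjectivity/injectivity argument, and that is where your proposal has genuine problems.

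First, your invariant carries the wrong coefficient: for the paper's model \eqref{CEAlgebraOf2Sphere}, with $\mathrm{d}e_3 = b_2 b_2$, the quantity constant on connected components of the shape is $\int_{\mathbb{R}^3}\big(H_3 - A_1\wedge F_2\big)$ with coefficient $1$, not $\tfrac12$: along a concordance (a closed $\mathfrak{l}S^2$-valued form on the cylinder, which is what a path in this smooth set is) the shift of $\int H_3$ equals $\int \hat F_2\wedge \hat F_2$, which is the \emph{full} shift of $\int A_1\wedge F_2$, so your expression with the $\tfrac12$ is not homotopy invariant and cannot serve as ``the rational Hopf invariant''. Second, and more seriously, the injectivity step does not go through as argued. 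Under your own resolution of the boundary-behaviour tension ($E_0$ nowhere zero on the open $\mathbb{R}^3$ but decaying at infinity), the constraint \eqref{DimRedFluxCompactification} splits the constrained space into two sign sectors, $E_0>0$ and $E_0<0$; a continuous family of nowhere-vanishing functions on connected $\mathbb{R}^3$ cannot change sign, so no constrained concordance joins the sectors; yet, because the Chern--Simons-type term $\int A_1\wedge B_2$ is completely unconstrained, \emph{each} sign sector realizes \emph{every} value of the invariant. Hence ``path-connectedness of the level sets of fixed asymptotic sign'' is not enough: two configurations of opposite sign and equal invariant map to the same component of the target but, on your reading, lie in different components of the domain, so $\pi_0\shape\iota$ would fail to be injective. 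You would need either to argue that the constraint is to be read so that this dichotomy is absent (e.g.\ a fixed sign of $E_0$, or a weakened condition compatible with vanishing at infinity), or to produce constrained concordances across it; note also that convexity of the spaces of component forms does not by itself yield such concordances, since the straight-line interpolation of $(B_2,E_0)$ only extends to a closed $\mathfrak{l}S^2$-valued form on the cylinder after solving for the legs along the path parameter, which is exactly where the invariant obstructs. This is precisely the point that the paper's proof compresses into ``immediate'', and your proposal as written does not close it.
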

\begin{proof}
  First, it is immediate that the right hand side is equivalent to the homotopy type of the pointed mapping space of $S^3$ into the $\mathbb{R}$-rationalized 2-sphere,
  \begin{equation}
    \label{IncarnationsOfRightHandSide}
      \shape
      \mathbf{\Omega}^1_{\mathrm{dR}}\big(
        \mathbb{R}^3_{\cpt}
        ;
        \mathfrak{l}S^2
      \big)
        _{\mathrm{cl}}
        ^{\mathrm{solit}}
    \weakHomotopyEquivalence
    \shape \, \mathrm{Map}^\ast\big(S^3, L^{\mathbb{R}} S^2\big)
    \,,
  \end{equation} 
  whence its homotopy is indeed concentrated in degree 0, where it is the set of real numbers, representing real multiples of the Hopf generator: 
  \begin{equation}
    \begin{aligned}
    \pi_n \, \shape \, 
    \mathrm{Map}^\ast\big(
      \mathbb{R}^3_{\cpt}
      ,
      L^{\mathbb{R}} S^2
    \big)
    &
    \simeq
    \pi_{3+n}\big(
      L^{\mathbb{R}}S^2
    \big)
    \\
    &
    \simeq
      \pi_{3+n}(S^2) 
        \otimes_{\mathbb{Z}}
      \mathbb{R}
    \simeq
    \left\{
    \begin{aligned}
      \mathbb{R}
      &
      \mbox{ if }
      n = 0
      \\
      0
      &
      \mbox{ if }
      n \geq 1 
      \mathrlap{\,,}
    \end{aligned}
    \right.
    \end{aligned}
  \end{equation}
  hence
  \begin{equation}
      \shape
      \mathbf{\Omega}^1_{\mathrm{dR}}\big(
        \mathbb{R}^3_{\cpt}
        ;
        \mathfrak{l}S^2
      \big)
        _{\mathrm{cl}}
        ^{\mathrm{solit}}
        \;
     \weakHomotopyEquivalence
     \;
     \mathrm{Dsc}
     (\mathbb{R})
     \mathrlap{\,.}
  \end{equation}
  Here $r \in \mathbb{R} \simeq \pi_3(S^2) \otimes_{\mathbb{Z}} \mathbb{R}$ is represented by any 
  $$
    (F_2, H_3)
    \in
    \Omega^1_{\mathrm{dR}}\big(
      \mathbb{R}^3_{\cpt}
      ;
      \mathfrak{l}S^2
    \big)
     _{\mathrm{cl}}
    \;\;
    \mbox{with}
    \;\;
    \int_{\mathbb{R}^3_{\cpt}}
  \!\!  H_3 
    \;=\;
    0
    \mathrlap{\,.}
  $$

  Next, to recall the constraint object on the left of \eqref{ComparisonMapInAppendix}, in the notation of \eqref{DimRedSpacetime}, we are looking at the decomposition
  \begin{equation}
    X^4_{\cpt}
    \simeq
    \grayunderbrace{
    \mathbb{R}^3_{\cpt}
    }{ X^3 }
    \wedge
    \grayunderbrace{
    {[0,1]}_{\plus}
    }{ V }
    \mathrlap{\,.}
  \end{equation}
  The comparison map \eqref{ComparisonMapInAppendix} is thereby equivalently given by pullback of differential forms along the inclusion
  \begin{equation}
    \begin{tikzcd}
      \mathbb{R}^3_{\cpt}
      \ar[
        rrr,
        hook,
        "{
          \big(
            \mathrm{id}_{\mathbb{R}^3_{\cpt}}
            ,\,
            0
          \big)
        }"
      ]
      &&&
      \mathbb{R}^3_{\cpt} 
      \times [0,1]\,,
    \end{tikzcd}
  \end{equation} 
  and the constrained flux data are
  $$
    (F_2, H_3)
    =
    \big(
      B_2 
      ,
      \star_3 E_0
    \big)
  $$  
  This makes it immediate that  $\pi_0 \shape \iota$ is bijective and hence completes the proof.
\end{proof}


\printbibliography

\end{document}